\newcites{supp}{References}
\newcommand{\R}{\mathbb{R}}
\newcommand{\Rmnum}[1]{\expandafter\@slowromancap\romannumeral #1@}
\newtheorem{theorem}{Theorem}
\newtheorem{proposition}{Proposition}
\newtheorem{lemma}{Lemma}
\newtheorem{definition}{Definition}
\newtheorem{corollary}{Corollary}
\newtheorem{example}{Example}
\newtheorem{remark}{Remark}
\newtheorem*{lemma*}{Lemma}
\title{Persuasion with Ambiguous Communication\thanks{We thank Dilip Abreu, Ian Ball, Piotr Dworczak, Huiyi Guo, Alex Jakobsen, Jian Li, Alessandro Pavan, Eran Shmaya and seminar audiences at Texas A\&M, Washington University at St.Louis, University of California, Berkeley, RUD 2024, D-TEA 2024, EC'24, Northwestern University, University of Southern California, Queen Mary University of London, University of Notre Dame, HKUST, HKU Business School, 4th Durham Economic Theory Conference, Arizona State University, NYU, University of Manchester, Princeton University, and CETC 2025 for helpful feedback. A two-page abstract of an earlier version of this paper appeared in EC'24. Part of this work was carried out during Klibanoff’s Hallsworth Visiting Professor appointment at the University of Manchester. Mukerji acknowledges financial support from ESRC Grant ES/S015299/1.}}
\date{This Version: February 17, 2026}
\author{Xiaoyu Cheng\thanks{Department of Economics, Florida State University, Tallahassee, FL USA. E-mail: xcheng@fsu.edu.} \and Peter Klibanoff\thanks{Managerial Economics and Decision Sciences, Kellogg School of Management at Northwestern University, Evanston, IL USA. E-mail: peterk@kellogg.northwestern.edu.} \and Sujoy Mukerji\thanks{School of Economics and Finance, Queen Mary University of London, London, UK. E-mail: s.mukerji@qmul.ac.uk. } \and Ludovic Renou\thanks{School of Economics and Finance, Queen Mary University of London, London, UK, and CEPR. E-mail: $\text{lrenou.econ@gmail.com}$.}}
\begin{document}
	\maketitle
	\begin{abstract}
        We explore whether ambiguous communication can be beneficial to the sender in a persuasion problem, when the receiver (and possibly the sender) is ambiguity averse. Our analysis highlights the necessity of using a collection of experiments that form a splitting of an obedient experiment. Some experiments in the collection must be Pareto-ranked in that both players agree on their payoff ranking. If an optimal Bayesian persuasion experiment can be split in this way, then any not-too-ambiguity-averse sender as well as the receiver benefit. There are no benefits when the receiver has only two actions. 
	\end{abstract}

    \newpage 
    \section{Introduction}	
    \begin{quote}
    \textit{``If I seem unduly clear to you, you must have misunderstood what I said.''}
    
    Alan Greenspan, Speaking to a Senate Committee in 1987, as quoted in the Guardian Weekly, November 4, 2005.\end{quote}
        
    This paper considers the problem of a sender who wishes to favorably influence, through strategic communication of information, the action taken by a receiver. As in the large literature on Bayesian persuasion following \citet{kamenica2011bayesian} (see also \citet{rayo2010optimal} and surveys by \citet{bergemann2019information} and \citet{kamenica2019bayesian}), we model the sender as committing to a communication strategy and the receiver as best responding to that strategy. A communication strategy for the sender is usually described as a \emph{statistical experiment}, a function mapping from payoff-relevant states to probability distributions over messages (or signals).\footnote{In Bayesian persuasion, it is without loss of generality to assume that the messages are action recommendations. We show (Proposition \ref{rev-principle}) that this remains true in our setting.} Our first key departure from most of the literature is that we enlarge the set of the sender's communication strategies to include ambiguous strategies. These are strategies generating, from the perspective of both players, subjective uncertainty about which statistical experiment will be used to generate the message. Our second key departure is that the receiver (and possibly the sender) treats this uncertainty as ambiguity and is ambiguity averse. Would the sender ever benefit from intentionally using an ambiguous communication strategy? We show that the answer can be yes, and provide understanding of the circumstances under which this can occur and the nature of these beneficial strategies. Thus, our theory suggests that one might want to communicate in a deliberately ambiguous manner even when it is possible to costlessly eliminate any ambiguity. 

    In which kind of persuasion games can ambiguity benefit the sender? We show that if an optimal Bayesian persuasion experiment can be written as a convex combination of two Pareto-ranked experiments (i.e., two experiments such that both players agree that the first is better than the second assuming action recommendations are followed), this is sufficient for an ambiguity-neutral sender to benefit from ambiguous communication (Theorem \ref{general_improvement} and Corollary \ref{sufficient_BP}). An important class of games where this is generically possible are communication games with threshold preferences, with three or more states and actions, described in \citet{arieli2024} (see our Proposition \ref{prop:inefficiency_improvement} and the discussion immediately following it). As \cite{aybas2024} observe, these games are natural extensions to higher dimensions of the seminal motivating examples for communication games, such as those in \cite{rayo2010optimal}, \cite{kamenica2011bayesian}, \cite{chakraborty2010}, \cite{lipnowski2020} and \cite{sobel2020}. In these games, the receiver has a unique most preferred (risky) action for each state but one and a default action most preferred in the remaining state. A risky action is expected payoff maximizing for the receiver only when their belief on the corresponding state exceeds a threshold; otherwise, the default action is better. The sender strictly prefers any risky action to the default and seeks to shift beliefs accordingly. This structure captures, for example, buyer-seller and advocacy settings with a status quo default.
    
    In this class of games, if the sender is ambiguity neutral and the receiver is ambiguity averse, a particularly simple ambiguous communication strategy can improve over the optimal Bayesian persuasion strategy: take two Pareto-ranked experiments whose convex combination gives the optimal Bayesian persuasion experiment and make them the support of the ambiguous strategy. The source of the gain is in allowing the Pareto superior experiment to be used more often than in Bayesian persuasion. The role of the ambiguity is to ensure that the receiver still obeys the recommendations. An ambiguity-averse receiver desires to hedge more than an ambiguity-neutral receiver against the possibility that recommendations are coming from the inferior experiment. This allows the superior experiment to be used more often without inducing the receiver to depart from the recommended actions. An additional feature of this ambiguous strategy is that even the receiver's payoff improves on their payoff under the optimal Bayesian persuasion strategy (Theorem \ref{general_improvement} (ii)).
    
    Sender benefit from ambiguity beyond such threshold games is also linked to Pareto-ranked convex combinations: the existence of a convex splitting of some (unambiguous) obedient experiment into two Pareto-ranked experiments is necessary for ambiguous communication to benefit the sender (Theorem \ref{thm_necessary}). This condition fails in any persuasion game in which the receiver has only two available actions. Thus it is \emph{never} possible for the sender to gain from ambiguous communication in such games (Corollary \ref{cor_binary_action}). Pareto-ranked experiments also play an essential role in the form of beneficial ambiguous communication: Any ambiguous communication strategy delivering the sender more than they receive under Bayesian persuasion must include in its support some pair of experiments that are Pareto-ranked (Theorem \ref{thm_necessary_condition_existence_of_Pareto_ranked_splitting}). Pareto-ranking is also a necessary feature of optimal persuasion with ambiguous communication (Theorem \ref{thm_optimal_persuasion_pareto_ranked_experiments}).

    We close this section with a brief discussion of a few closely related papers. A more extensive discussion can be found in Section \ref{sec_related_literature}.  \citet{beauchene2019ambiguous} (BLL henceforth) were first to study strategic use of ambiguous communication in persuasion (see also \citet{cheng2022relative}). The key difference in assumptions between BLL and our paper is how the receiver best responds given the sender's ambiguous experiment. We assume the receiver chooses an ex-ante optimal message-contingent strategy. Equivalently, we assume the receiver behaves as in a sequentially optimal equilibrium. This corresponds \citep[Theorem 1]{hanany2020incomplete} to the receiver using a simple dynamically consistent belief updating rule and maximizing their interim preferences given each message. (See Remark \ref{rem_dynamic_consistency} in Section \ref{sec_incentive_compatibility} for updating details.) BLL assume the receiver chooses, given each message, actions maximizing interim preferences formed using a belief updating rule that leads to dynamic inconsistency with their ex-ante preference. Thus, one contribution of our paper is establishing and analyzing benefits of ambiguous persuasion that do not stem from receiver's behavior that is suboptimal with respect to their given ex-ante preferences (see our further discussion in Section \ref{sec_related_literature}, including the approach to consistency of \cite{pahlke2022dynamic}). The bulk of BLL's analysis imposes the infinitely ambiguity-averse extreme for both the sender and receiver -- a polar case of our model, though they show that their approach extends more broadly. \cite{cheng2020ambiguous} shows that all benefits from ambiguous communication identified by BLL in the case of such a sender disappear if the receiver is assumed, as in our paper, to maximize their given ex-ante preference. In light of \cite{cheng2020ambiguous}'s result, it is essential that we allow at least the sender to be less than infinitely ambiguity averse for benefits from ambiguous communications to possibly exist. Our analysis allows for varying degrees of ambiguity aversion for both the sender and the receiver.

    To illustrate  many of our findings and provide intuition, we turn to a simple introductory example.
    
\section{An Introductory Example}\label{running_example}

    There is a sender and a receiver, three actions $a_1,a_2$ and $a_3$, and two payoff-relevant states $\omega_1$ and $\omega_2$, with equal prior probabilities $p=(1/2,1/2)$.\footnote{The example needs at least three actions since we show (Corollary \ref{cor_binary_action}) there is no benefit from using ambiguous communication strategies when the receiver has only two actions.} The sender influences the action the receiver takes with the release of information. The payoffs are: 
    
    \begin{table}[H]
    \begin{center}
    \begin{tabular}{|c|c|c|c|c|}
    \hline
                    $(u_{s}, u_{r})$ & $a_{1}$ & $a_{2}$ & $a_3$\\ 
                    \hline 
                    $\omega_{1}$ & $1, 1$ &  $-1, -1$ & $-4, 2$ \\ \hline
                    $\omega_2$ & $0,0$ & $2,2 $ & $-4,-4$\\ \hline
    \end{tabular}	
    \caption{Payoff table (first coordinate is the sender's payoff)} 
    \end{center}
    \end{table}
    
    The receiver prefers $a_3$ in state $\omega_1$, while the sender prefers $a_1$ in that state. This is the conflict of interest in this example. The receiver prefers $a_1$ when their beliefs about $\omega_2$ are intermediate (i.e., in $[1/5,1/2]$), $a_2$ when their beliefs are higher than $1/2$, and $a_3$ when they are lower than $1/5$. 
    
    An interpretation of this example in the context of stress testing and banking regulation is as follows: Think of the sender as a banking regulatory authority (``the regulator") who must design, conduct and communicate the results of stress testing of the banking sector (``the bank"). Imagine the receiver as a representative investor (``the investor") choosing among alternative investments, $a_{i}$, whose payoff depends on the realization of the state $\omega$. Think of the $\omega$ as investment-relevant information about the health of the banking sector, with $\omega_{1}$ and $\omega_{2}$ associated with ``bad" and ``good" health, respectively. Actions $a_{1}$ and $a_{2}$ are socially-productive investments (i.e., productive from the viewpoint of the economy as a whole, a viewpoint that we assume the regulator adopts). Action $a_{3}$ is a socially-detrimental, purely speculative investment. The regulator's choice of communication strategy can be seen as their choice of rules/specifications for the stress tests.\footnote{See \cite{bergemann2016informationbp} for another example of Bayesian persuasion in the context of stress testing.} The regulator's challenge is to design and communicate stress tests so as to better coordinate investment behavior with the health of the banking sector, without diverting investments to the speculative activity, which is always socially detrimental but beneficial for the investor when the health of the banking sector is bad.

    We first apply the seminal work of \cite{kamenica2011bayesian} on Bayesian persuasion to this example.  \cite{kamenica2011bayesian} study a dynamic game between a sender and a receiver, where the sender  first designs a statistical experiment $\sigma : \{\omega_1,\omega_2\} \rightarrow \Delta(M)$, the receiver observes the chosen experiment $\sigma$ and the realized message $m$, and then takes an action. In our language, this information design is \emph{unambiguous}, that is, the receiver knows the experiment that generates the message and, therefore, knows the likelihood of each message given each state $\omega$. \cite{kamenica2011bayesian} show that the highest payoff the sender can achieve is the value of the concavification of their indirect utility at the prior $p$. In our example, this value is $5/4$ as illustrated in Figure \ref{fig:intro_example_1}. In the figure, we plot the receiver's expected payoff associated with  each of the three actions as dotted lines -- each line is labelled with its action. We plot the sender's indirect utility, i.e., the expected payoff the sender obtains when the receiver best responds, as a thick solid curve, and its concavification as a thick dashed curve.  
      
    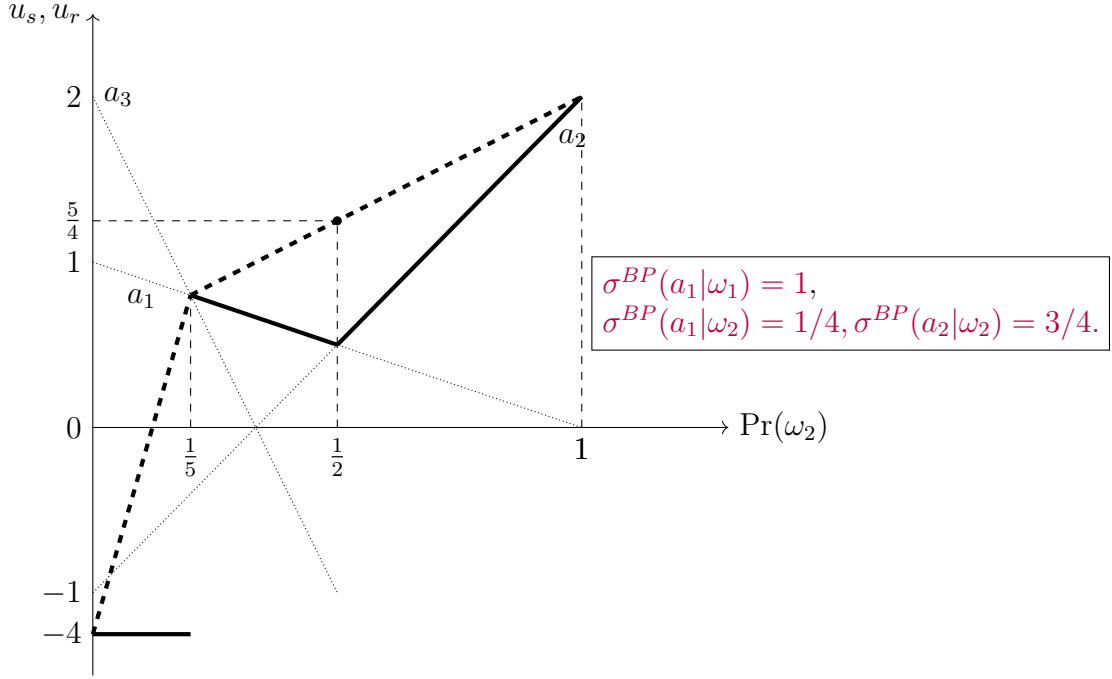
\begin{figure}[h]
        \centering
        \begin{tikzpicture}[x= 6.5 cm, y = 5.5cm] %Change these numbers to adjust the size
            \draw[ ->] (0,0) -- (1.3,0);
            \draw[ ->] (0, -0.6) -- (0,1);
            \node [left] at (0, 0) {$0$};
            %x-y axis
            \node [left] at (0, 1) {$u_s, u_r$};
            \node [right] at (1.3,0) {$\mathrm{Pr}(\omega_{2})$};
            %receiver's expected utility
            %a_{1}
            \draw[densely dotted] (0, 0.4) -- (1, 0);
            \node [below] at (0.1, 0.36) {$a_{1}$};
            \node [left] at (0, 0.4) {$1$};
            %a_{2}
            \draw[densely dotted] (0, -0.4) -- (1, 0.8);
            \node [below] at (1, 0) {$1$};
            \node [right] at (0.93, 0.7) {$a_{2}$};
            \node [left] at (0, -0.4) {$-1$};
            %a_{3}
            \draw[densely dotted] (0, 0.8) -- (0.5, -0.4);
            \node [right] at (0, 0.8) {$a_{3}$};
            \node [left] at (0, .8) {$2$};
            %prior 
            \draw [dashed] (0.5, 0) -- (0.5, 0.2); 
            \node [below] at (0.5, 0) {$\frac{1}{2}$};
            \draw[dashed] (1,0) -- (1, 0.8);
             \node [below] at (1, 0) {$1$};
            %sender's expected utility given receiver's best response 
            \draw [ultra thick] (0, -0.5) -- (0.2, -0.5);
            \node [left] at (0, -0.5) {$-4$};
            \draw [dashed] (0.2, 0.0) -- (0.2, 0.32);
            \draw [ultra thick] (0.2, 0.32) -- (0.5, 0.2);
            \draw [ultra thick] (0.5, 0.2) -- (1, 0.8);
            \node [below] at (0.2, 0) {$\frac{1}{5}$};
            %sender's concavification
            \draw [ultra thick, dashed] (0, -0.5) -- (0.2, 0.32);
            \draw [ultra thick, dashed] (0.2, 0.32) -- (1, 0.8);
            %sender's BP given half-half prior
            \draw [dashed] (0.5, 0.2) -- (0.5, 0.5);
            \filldraw [] (0.5, 0.5) circle (1.5pt);
            \draw [dashed] (0, 0.5) -- (0.5, 0.5); 
            \node [left] at (0, 0.5) {\textcolor{black}{$\frac{5}{4}$}};
           % \node [below] at (0.35, 0.41)  {\textcolor{purple}{$\frac{3}{8}$}};
            %\node [below] at (0.75, 0.65)  {\textcolor{purple}{$\frac{5}{8}$}};
            \node[draw,text width= 6.6 cm] at (1.55, 0.3) {$\textcolor{purple}{\sigma^{BP}(a_{1}|\omega_{1}) = 1},$\\ $\textcolor{purple}{\sigma^{BP}(a_{1}|\omega_{2}) = 1/4, \sigma^{BP}(a_{2}|\omega_{2}) = 3/4.}$};
        \end{tikzpicture}
        \caption{Sender's indirect utility (thick curve) and its concavification (thick dashed curve)}
        \label{fig:intro_example_1}
    \end{figure}
     
    It is immediate to verify that the experiment $\sigma^{BP}$ defined in Figure \ref{fig:intro_example_1} attains this optimal payoff using the messages ``$a_1$'' and ``$a_2$''. The message ``$a_2$'' reveals that the state is $\omega_2$. Intuitively, since the preferences are perfectly aligned when the state is $\omega_2$, the sender wants the receiver to learn it. At the same time, the sender does not want the receiver to be too pessimistic about $\omega_2$ when the state is $\omega_1$, as the receiver chooses $a_3$ at all beliefs less than $1/5$ on $\omega_2$. The optimal experiment $\sigma^{BP}$ balances these two forces by making the likelihood of message ``$a_1$'' in state $\omega_2$ just large enough to induce action $a_1$. We note that the experiment $\sigma^{BP}$ is \emph{canonical}, that is, the messages are action recommendations, and, in the example, it is also \emph{obedient}, that is, the receiver finds it optimal to obey the recommendations. In Bayesian persuasion, the restriction to canonical and obedient experiments is without loss, and we prove (Proposition \ref{rev-principle}) that this remains true in our generalization.
    
    Now, suppose that the sender can design \emph{ambiguous} experiments. These are communication strategies that leave some uncertainty about which statistical experiment will generate the messages, and, in this sense, may not completely pin-down the message likelihoods. We model ambiguous experiments as generated by (finitely-ranged) mappings from payoff-irrelevant ambiguous events to statistical experiments. Examples of such ambiguous events include artificially generated ambiguity like draws from an Ellsberg urn provided by a third-party, or natural-event ambiguity derived from meteorological or other events. More broadly, the only requirements beyond payoff-irrelevance are that the sender and receiver share a common view of the subjective uncertainty about these events, and that the receiver (and possibly the sender as well) is ambiguity averse and treats the uncertainty about these events as ambiguity. Our theory is agnostic about why there is a common view of the uncertainty over these events. It could be this common view comes from shared (but limited) historical data or, as is particularly likely in the case of artificially generated ambiguity, from symmetry or other logical considerations, or the commonality could be viewed simply as a convenient baseline modeling assumption.

    Formally, we model the source of such ambiguous events as a continuum, $\mathcal{A}$, of payoff-irrelevant states, $\alpha$, along with a continuous $\rho \in \Delta(\mathcal{A})$ that represents the common view of uncertainty over $\mathcal{A}$ that an ambiguity neutral player would use to compute their expected utility. Let $B$ be any finite partition of $\mathcal{A}$ and define $\tilde{\mu} \in \Delta(B)$ by $\tilde{\mu}(b) \equiv \int_{\alpha \in b}\rho(\alpha)d\alpha$ for each $b \in B$. By richness of $\mathcal{A}$ and continuity of $\rho$, \emph{any} finitely-supported distribution over experiments can be induced by the choice of some finite partition $B$ together with some mapping from $B$ to experiments. Hence, as justified by payoff-irrelevance of $\mathcal{A}$, we model an \emph{ambiguous experiment} as a pair $(\boldsymbol{\sigma},\mu)$, where the collection $\boldsymbol{\sigma}=(\sigma_{\theta})_{\theta \in \Theta}$ is a tuple of experiments that we index using a finite set $\Theta$ and $\mu = (\mu_{\theta})_{\theta \in \Theta}$ is any element of $\Delta(\Theta)$. The sender may choose any ambiguous experiment.\footnote{Notice ambiguous experiments are a generalization of experiments in the sense that any experiment $\sigma$ can be viewed as an ambiguous experiment with a collection $\boldsymbol{\sigma}$ such that $\sigma_{\theta} = \sigma$ for all $\theta$ in the support of $\mu$.} The crucial characteristic of ambiguous experiments is that ambiguity averse players behave as if they value some robustness with respect to perturbations of $\mu$. 

    In the stress-test setting, a specification of the exact model/test the bank must run and report the results of, would correspond to a statistical experiment (i.e., an unambiguous communication strategy on the part of the regulator). One channel through which ambiguity could be introduced into communication in this context is the use of contingent ``bottom-up" tests -- tests conducted by individual banks based on their own in-house models and data -- as input to the stress tests. By making which model/test a bank is to run contingent on the range a parameter, $\alpha \in [0,1]$, belongs to, where, for example, $\alpha$ is something to be calculated based on data private to the bank (and not \emph{directly} payoff-relevant for either the regulator or the investor), the regulator may cause the statistical experiment generating the announced result to vary with these ranges.\footnote{The use of bottom-up tests is common (see e.g., Table 1 in \cite{dent2016stress}). Making the instructions for them contingent is something already done in practice. For instance, in recent EU stress tests (see \cite{test2023methodological}, Section 2.4.4.): ``Banks with \emph{significant} foreign currency exposure are required to take into account the altered creditworthiness of their respective obligors, given the FX development under the baseline and adverse scenarios. In particular, banks are only required to evaluate this impact if the exposures of certain asset classes in foreign currencies are above certain thresholds."} As is plausible for an $\alpha$ for which the sender and receiver have little data, from their perspective the realization of $\alpha$ is ambiguous and elements of finite partitions of $[0,1]$ are ambiguous events. Thus, if the regulator says that the bank should use one model/test if $\alpha \in [0,1/4)$, another if $\alpha \in [1/4,3/4)$, and a third if $\alpha \in [3/4,1]$, this is an example of an ambiguous experiment. By varying the partition of $[0,1]$ used to define the contingencies under which the three models/tests will be run by the bank, the regulator may vary $\mu$.  

    An SEU player treats the ambiguous experiment $( \boldsymbol{\sigma}, \mu)$ as equivalent to the unambiguous experiment $\sum_{\theta} \mu_{\theta} \sigma_{\theta}$. If the receiver is SEU, the sender cannot do better than using the experiment $\sigma^{BP}$ and thus ambiguity adds no value. We assume instead that the receiver is ambiguity averse and represent their preference with the smooth ambiguity model of \citet{klibanoff2005smooth}.  Specifically, let $u_r(\sigma_{\theta},\tau^*)$ be the receiver's payoff when the (canonical) experiment is $\sigma_{\theta}$ and the receiver is obedient.\footnote{Obedient in the sense of following the action recommendations. We denote the obedient strategy by $\tau^*$.} The receiver values the ambiguous experiment as $\phi_r^{-1}\left(\sum_{\theta}\mu_{\theta} \phi_r(u_r(\sigma_{\theta},\tau^{*}))\right)$, where $\phi_r$ is some strictly increasing, concave and differentiable function.\footnote{We similarly model the sender's preferences, substituting $u_s$ and $\phi_s$.} The concavity of $\phi_r$ captures ambiguity aversion. Greater concavity corresponds to more ambiguity aversion. At one extreme, when the receiver is infinitely ambiguity averse, we have an instance of the maxmin expected utility (MEU) model \citep{gilboa1989maxmin}. At the other, when $\phi_{r}$ is affine, we have the SEU model (implying ambiguity neutrality).  
    
    As a preliminary result, we show (Lemma \ref{lem_effmeasure}) that such an ambiguity-averse receiver faced with an ambiguous experiment $(\boldsymbol{\sigma}, \mu)$ is obedient if, and only if, they are obedient when facing the unambiguous experiment  $\sum_{\theta}\nu_{\theta} \sigma_{\theta}$ with $\nu_{\theta}=\frac{\mu_{\theta} \phi^{'}_r(u_r(\sigma_{\theta},\tau^*))}{\sum_{\tilde{\theta}}\mu_{\tilde{\theta}} \phi^{'}_r(u_r(\sigma_{\tilde{\theta}},\tau^*))}$. We use this result throughout, and refer to $\nu$ as the receiver's \emph{effective measure} given $(\boldsymbol{\sigma}, \mu)$. Assume $\phi_r$ is strictly concave. Then $u_r(\sigma_{\theta}, \tau^{*}) < u_r(\sigma_{\theta'}, \tau^{*})$ implies $\nu_{\theta}/\nu_{\theta'} > \mu_{\theta}/\mu_{\theta'}$, that is, the effective measure assigns a \emph{higher} (relative) probability than $\mu$ to \emph{lower} payoffs. This relative pessimism of the effective measure reflects the value that an ambiguity-averse receiver places on some robustness with respect to perturbations of $\mu$. The more ambiguity averse the receiver, the stronger the relative pessimism. Lemma \ref{lem_effmeasure} also makes clear that, in addition to depending on the receiver's ambiguity aversion, $\nu_{\theta}$ is endogenous in the sense that it is a function of  the profile $\left(u_r(\sigma_{\theta}, \tau^{*}), \mu_{\theta}\right)_{\theta \in \Theta}$. Even local changes in the ambiguous experiment, say only changing $\sigma_{\theta}$ to $\sigma'_{\theta}$, might impact \emph{all} $\nu_{\theta}$. These endogenous pessimism properties stemming from ambiguity aversion distinguish our model from a model with exogenously fixed heterogeneous priors, e.g., \citet{alonso2016bayesian}, \citet{laclau2017public} and \citet{galperti2019persuasion}.
    
    \begin{figure}[h]
        \centering
        \begin{tikzpicture}[x= 6.5 cm, y = 5.5 cm] %Change these numbers to adjust the size
            \draw[ ->] (0,0) -- (1.3,0);
            \draw[ ->] (0, -0.6) -- (0,1);
            \node [left] at (0, 0) {$0$};
            %x-y axis
            \node [left] at (0, 1) {$u_s, u_r$};
            \node [right] at (1.3,0) {$\mathrm{Pr}(\omega_{2})$};
            %receiver's expected utility
            %a_{1}
            \draw[densely dotted] (0, 0.4) -- (1, 0);
            \node [below] at (0.1, 0.36) {$a_{1}$};
            \node [left] at (0, 0.4) {$1$};
            %a_{2}
            \draw[densely dotted] (0, -0.4) -- (1, 0.8);
            \node [below] at (1, 0) {$1$};
            \node [right] at (0.93, 0.7) {$a_{2}$};
            \node [left] at (0, -0.4) {$-1$};
            %a_{3}
            \draw[densely dotted] (0, 0.8) -- (0.5, -0.4);
            \node [right] at (0, 0.8) {$a_{3}$};
            \node [left] at (0, .8) {$2$};
            %prior 
            \draw [dashed] (0.5, 0) -- (0.5, 0.2); 
            \node [below] at (0.5, 0) {$\frac{1}{2}$};
            \draw[dashed] (1,0) -- (1, 0.8);
             \node [below] at (1, 0) {$1$};
            %sender's expected utility given receiver's best response 
            \draw [ultra thick] (0, -0.5) -- (0.2, -0.5);
            \node [left] at (0, -0.5) {$-4$};
            \draw [dashed] (0.2, 0.0) -- (0.2, 0.32);
            \draw [ultra thick] (0.2, 0.32) -- (0.5, 0.2);
            \draw [ultra thick] (0.5, 0.2) -- (1, 0.8);
            \node [below] at (0.2, 0) {$\frac{1}{5}$};
            %sender's concavification
            \draw [ultra thick, dashed] (0, -0.5) -- (0.2, 0.32);
            \draw [ultra thick, dashed] (0.2, 0.32) -- (1, 0.8);
            %sender's BP given half-half prior
            \draw [dashed] (0.5, 0.2) -- (0.5, 0.5);
            \filldraw [] (0.5, 0.5) circle (2pt);
            \draw [dashed] (0, 0.5) -- (0.5, 0.5); 
            \node [left] at (0, 0.5) {\textcolor{black}{$\frac{5}{4}$}};
            %informative experiment        
            %\draw [loosely dashdotted, thick] (0, 0.4) -- (1, 0.8);
            \draw [dashed] (0.5, 0.5) -- (0.5, 0.6);
            \filldraw [blue] (0.5, 0.6) circle (2pt);
            \draw [dashed] (0, 0.6) -- (0.5, 0.6);
            \node [left] at (0, 0.6) {\textcolor{blue}{$\frac{3}{2}$}};   
            %\node [below] at (0.35, 0.41)  {\textcolor{purple}{$\frac{3}{8}$}};
            %\node [below] at (0.75, 0.65)  {\textcolor{purple}{$\frac{5}{8}$}};
            %sender's payoff under the uninformative experiment
            \filldraw [orange] (0.5, 0.2) circle (2pt);
            \draw [dashed] (0, 0.2) -- (0.5, 0.2);
            \node [left] at (0, 0.2) {\textcolor{orange}{$\frac{1}{2}$}};
            %brackets
            \draw [decorate,decoration={brace,amplitude=5pt, raise=3ex}] (0,0.2) -- (0,0.5);
            \node [left] at (-0.1, 0.35) {\textcolor{blue}{$\mu_{\overline{\theta}} = \frac{3}{4}$}};
            \draw [-Stealth, line width=1mm, blue] (-0.35, 0.35) -- (-0.35, 0.47); 
            \draw [decorate,decoration={brace,amplitude=5pt, raise=3ex}] (0,0.5) -- (0,0.6);
            \node [left] at (-0.1, 0.55) {\textcolor{orange}{$\frac{1}{4}$}};
            %blue dot and arrow
            \filldraw [blue] (0.35, 0.26) circle (1.5pt);
            \draw [-Stealth, line width=1mm, blue] (0.35, 0.26) -- (0.2, 0.32);
    
            \node[draw,text width= 2.5 cm] at (0.7, 0.9) {$\textcolor{blue}{\sigma_{\overline{\theta}}(a_{1}|\omega_{1}) = 1,}$\\ $\textcolor{blue}{\sigma_{\overline{\theta}}(a_{2}|\omega_{2}) = 1.}$};
            \node[draw,text width= 2.5 cm] at (0.7, -0.24) {$\textcolor{orange}{\sigma_{\underline{\theta}}(a_{1}|\omega_{1}) = 1,}$\\ $\textcolor{orange}{\sigma_{\underline{\theta}}(a_{1}|\omega_{2}) = 1.}$};
            \node[draw,text width= 6.6 cm] at (1.55, 0.3) {$\textcolor{purple}{\sigma^{BP}(a_{1}|\omega_{1}) = 1},$\\ $\textcolor{purple}{\sigma^{BP}(a_{1}|\omega_{2}) = 1/4, \sigma^{BP}(a_{2}|\omega_{2}) = 3/4.}$};
            \end{tikzpicture}
        \caption{Construction of the ambiguous experiment}
        \label{fig:intro_example_2}
    \end{figure}
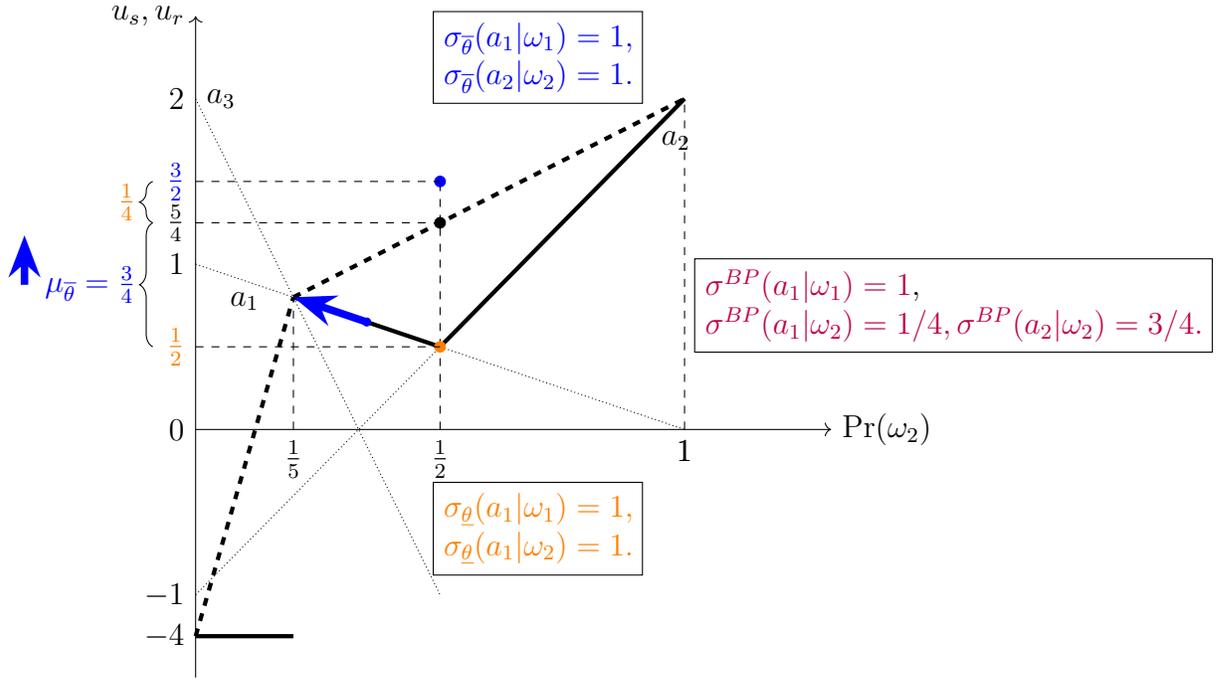
    
    We now illustrate (see Figure \ref{fig:intro_example_2}) how ambiguous experiments can benefit an SEU sender in the example.\footnote{Similar arguments remain valid as long as the sender is not too ambiguity averse. In particular, the sender continues to benefit even if they are as ambiguity averse as the receiver (and even a bit more so) assuming the sender is not infinitely ambiguity averse. This demonstrates that the essential source of the sender's benefit is \emph{not} a less ambiguity averse sender insuring a more ambiguity averse receiver.} Consider an ambiguous experiment such that only two experiments $\sigma_{\underline{\theta}}$ and $\sigma_{\overline{\theta}}$ (see the figure) get positive $\mu$-weight. The experiment $\sigma_{\underline{\theta}}$ is uninformative, while the experiment $\sigma_{\overline{\theta}}$ is fully informative. In the stress-test context, think of $\sigma_{\overline{\theta}}$ as a more comprehensive test than $\sigma_{\underline{\theta}}$.   Observe that the interpretation of the message $a_2$ is unambiguous:  the receiver learns that the state is $\omega_2$. The interpretation of the message $a_1$ is, however, ambiguous:  either it means that the state is $\omega_1$ (if $\sigma_{\overline{\theta}}$ generated the message)  or it is uninformative (if $\sigma_{\underline{\theta}}$ generated the message). The associated payoff profiles are $(u_s(\sigma_{\underline{\theta}},\tau^{*}), u_r(\sigma_{\underline{\theta}},\tau^{*}))= (1/2 ,1/2)$
    and $(u_s(\sigma_{\overline{\theta}},\tau^{*}), u_r(\sigma_{\overline{\theta}},\tau^{*}))= ( 3/2,3/2)$. Thus, if $\mu_{\overline{\theta}}>3/4$, an ambiguity-neutral sender's expected payoff, $\mu_{\underline{\theta}}u_{s}(\sigma_{\underline{\theta}}, \tau^{*}) + \mu_{\overline{\theta}}u_{s}(\sigma_{\overline{\theta}}, \tau^{*})$, is strictly higher than the Bayesian persuasion payoff of $5/4$. We now argue that we can simultaneously choose $\mu_{\overline{\theta}}>3/4$ and guarantee obedience. First, observe that $(1/4) \sigma_{\underline{\theta}}+ (3/4) \sigma_{\overline{\theta}} = \sigma^{BP}$ --  we call such a configuration a splitting of $\sigma^{BP}$. Since the receiver is obedient when facing $\sigma^{BP}$, by Lemma \ref{lem_effmeasure} the receiver is obedient when the effective weight $\nu_{\overline{\theta}}$ equals $3/4$. In fact, the receiver continues to be obedient for any effective weight weakly below $3/4$. Second, since $1/2=u_r(\sigma_{\underline{\theta}},\tau^{*}) < u_r(\sigma_{\overline{\theta}},\tau^{*})=3/2$, $\nu_{\overline{\theta}}$ is strictly lower than $\mu_{\overline{\theta}}$ (unless the receiver is ambiguity neutral) -- as mentioned above, this is a consequence of ambiguity aversion. Therefore, since $\nu_{\overline{\theta}} < 3/4$ when $\mu_{\overline{\theta}} = 3/4$, there is room to increase $\mu_{\overline{\theta}}$ above $3/4$ and maintain obedience until the point where $\nu_{\overline{\theta}}$ equals $3/4$.\footnote{The effective weight $\nu_{\overline{\theta}}$ is $3/4$ when  $\mu_{\overline{\theta}} = \frac{3 \phi_r'(1/2)}{3 \phi_r'(1/2) + \phi_r'(3/2)}$, with $\phi_r'$ the derivative of $\phi_r$. Observe, moreover, that if $3/4 < \mu_{\overline{\theta}} < \frac{3 \phi_r'(1/2)}{3 \phi_r'(1/2) + \phi_r'(3/2)}$ then the sender continues to benefit from the ambiguous communication even if the receiver slightly misperceives the ambiguous experiment and/or the sender slightly misperceives $\phi_r'$. Section \ref{sec_robust_benefit} shows that this robustness holds quite generally.} In the figure, the thick arrow moving along the sender's indirect utility curve indicates the movement of $\nu_{\overline{\theta}}$ towards $3/4$ from below as $\mu_{\overline{\theta}}$ increases above $3/4$ (along the thick arrow next to $\mu_{\overline{\theta}}$). Thus, the ambiguous communication strategy allows the sender to place more weight on the better experiment $\sigma_{\overline{\theta}}$ while maintaining obedience, than would be possible with unambiguous communication. This is how ambiguous communication provides benefits.\footnote{Though the particular formula for $\nu_{\theta}$ is special to the smooth ambiguity model, many models of ambiguity averse preferences (for example, everything in the very general class of Uncertainty Averse preferences, \cite{cerreia2011uncertainty}) allow one to derive an effective measure that is similarly pessimistic and endogenous. The logic and explanation of how, when the receiver is more ambiguity averse than an SEU receiver with beliefs $(\mu_{\theta})_{\theta}$, ambiguous communication can allow the sender to improve beyond Bayesian persuasion in this example applies to any of these models. \label{footnote_general_models}}  
    
    An important observation is that $\sigma_{\overline{\theta}}$ and $\sigma_{\underline{\theta}}$ are Pareto-ranked -- both players prefer $\sigma_{\overline{\theta}}$ assuming action recommendations are followed. If they were not Pareto-ranked, then ambiguity aversion would push the receiver's effective measure in a direction that would \emph{hurt} rather than help the sender -- for example, if the receiver thought $\sigma_{\underline{\theta}}$ were better this would cause $\nu_{\overline{\theta}}$ to exceed $\mu_{\overline{\theta}}$, leading the receiver to defect to the speculative action $a_3$ when the message is $a_1$. The ambiguity generated using Pareto-ranked experiments serves to beneficially misalign the (endogenous) effective beliefs of the sender and receiver.

    The remainder of the paper is organized as follows. The next section presents the model and two key preliminary results -- a revelation principle and an incentive-compatibility lemma. Main results are in Sections \ref{sec_properties_optimal_ambiguous_communication} through \ref{sec_improvement_binary_ambiguous_communication}. Section \ref{sec_discussion} contains further discussion. Proofs and additional material are in the Appendix. 

        \section{Persuasion with Ambiguous Communication}\label{sec:model}
        We consider a persuasion game between a sender and a receiver, where the sender can choose  \emph{ambiguous} experiments if they wish. 
        
        \subsection{The Model} 
        There is a finite set $\Omega$ of payoff-relevant states $\omega$, with common prior probability distribution $p \in \Delta(\Omega)$. There is a finite set $A$ of actions the receiver can choose from. If the receiver chooses $a \in A$, the payoff to the sender (resp., receiver) is $u_s(a,\omega) \in \R$ (resp., $u_r(a,\omega) \in \R$), when the state is $\omega$. A \emph{statistical experiment} is a finite set of messages $M$ and a map $\sigma$ from $\Omega$ to $\Delta(M)$, and we write $\sigma(m|\omega)$ for the probability of $m$ given $\omega$. 
        
        We assume that the sender can condition their statistical experiment on the realization of a finite partition of a \emph{source of ambiguity}. A source of ambiguity is a continuum, $\mathcal{A}$, of payoff-irrelevant ambiguous states, $\alpha$, along with a continuous $\rho \in \Delta(\mathcal{A})$ that represents the common view of uncertainty over $\mathcal{A}$ that an ambiguity-neutral (SEU) player would use to compute their expected utility. 

        Let $B$ be any finite partition of $\mathcal{A}$ and define $\tilde{\mu} \in \Delta(B)$ by $\tilde{\mu}(b) \equiv \int_{\alpha \in b}\rho(\alpha)d\alpha$ for each $b \in B$. By richness of $\mathcal{A}$ and continuity of $\rho$, \emph{any} finitely-supported distribution over experiments can be induced by the choice of some finite partition $B$ together with some mapping from $B$ to experiments. Hence, as justified by payoff-irrelevance of $\mathcal{A}$, we model an \emph{ambiguous experiment} as a pair $(\boldsymbol{\sigma},\mu)$, where $\boldsymbol{\sigma}=(\sigma_{\theta})_{\theta \in \Theta}$ is a tuple of experiments that we index using a finite set $\Theta$ and $\mu = (\mu_{\theta})_{\theta \in \Theta}$ is any element of $\Delta(\Theta)$.\footnote{It is without loss to assume that all statistical experiments in the tuple share the same message space.} The sender may choose any ambiguous experiment -- any finite length tuple of experiments together with any probability distribution over the experiments in that tuple.\footnote{The sender chooses and commits to $(\boldsymbol{\sigma}, \mu)$ \emph{before} $\alpha$ (which determines $\theta$) and $\omega$ are realized. Thus, just as in standard Bayesian persuasion where the sender chooses and commits to $\sigma$ before $\omega$ is realized, the sender's choice of a communication strategy is influenced by their beliefs about how uncertainty may unfold. Since $\mathcal{A}$ (and thus $\Theta$) is viewed as ambiguous, any ambiguity aversion on the part of the sender may influence their choice of $(\boldsymbol{\sigma}, \mu)$. See Section \ref{sec_cheap_talk} for a discussion of how things would change under the alternative assumption that it is common knowledge that the sender privately learns $\alpha$ before committing to an ambiguous experiment.} Henceforth, whenever we use the term ``experiment'' without a modifier, it refers to a standard, unambiguous statistical experiment. Our model enlarges the sender's strategy space relative to the standard Bayesian persuasion model in the sense that any experiment $\sigma$ can be viewed as an ambiguous experiment with a collection $\boldsymbol{\sigma}$ such that $\sigma_{\theta} = \sigma$ for all $\theta \in \text{supp}(\mu)$ (where $\text{supp}(\mu)$ denotes the support of $\mu$). Of special interest in some of our later constructions are \emph{binary ambiguous experiments}, those in which $\boldsymbol{\sigma}$ is a pair of experiments. 

        We analyze the receiver's behavior from the perspective of their ex-ante preferences, that is, we assume that the receiver observes the sender's choice of $(\boldsymbol{\sigma}, \mu)$ and then chooses a strategy $\tau: M \rightarrow \Delta(A)$ to maximize the receiver's ex-ante preference.\footnote{For compactness, this notation suppresses the allowed dependence of $\tau$ on $(\boldsymbol{\sigma}, \mu)$.} We further assume that $\tau$ is indeed carried out after the message is observed. This follows from either assuming that the receiver can commit to $\tau$, or that the receiver is dynamically consistent.\footnote{For a simple receiver's updating rule that guarantees dynamic consistency, see Remark \ref{rem_dynamic_consistency} in Section \ref{sec_incentive_compatibility}.} The main motivation for this ex-ante perspective is that we want to study whether the sender benefits from ambiguous communication even if the channel of dynamic \emph{in}consistency — the channel at work in nearly all previous literature on mechanism or information design with ambiguity — is shut down. We refer the interested reader to Section \ref{sec_related_literature} for more discussion on this point. 
        
        We write $u_i(\sigma,\tau)$ for the expected payoff of player $i \in \{s,r\}$ when the experiment generating the message $m$ is $\sigma$ and the receiver's strategy is $\tau$, that is, 
        \begin{equation}\label{eq:expected_payoff}
            u_i(\sigma,\tau) =\sum_{\omega,m,a}p(\omega)\sigma(m|\omega)\tau(a|m) u_i(a,\omega). 
        \end{equation}
        Note that the payoff-irrelevance of $\mathcal{A}$ implies that neither $\alpha$ nor $\theta$ appears in $\eqref{eq:expected_payoff}$.

        To isolate and clarify the role of intentional ambiguous communication, we work in a stylized environment where ambiguity is not payoff-relevant unless it becomes so by strategic choice of the sender to condition their communication on the realization of ambiguous events. Thus, while the payoff-irrelevant events generating $\theta$ are viewed as ambiguous, the payoff-relevant events $\omega$ and any randomization over messages induced by an experiment are viewed as unambiguous.\footnote{Section \ref{sec_prior_ambiguity} discusses extensions to environments with pre-existing ambiguity about $\omega$.} It follows that a message $m$ is viewed as ambiguous by the sender and receiver only if the experiments the sender chooses to associate with distinct possible $\theta$'s generate $m$ with different positive likelihoods. These different likelihoods may lead the expected payoff $u_{i}(\sigma_{\theta},\tau)$ to vary with $\theta$ and thus itself be viewed as ambiguous.   
        
        How does such ambiguity enter the sender's and receiver's preferences? We assume, as in the smooth ambiguity model \citep{klibanoff2005smooth}, that player $i$ evaluates the strategy profile $((\boldsymbol{\sigma},\mu),\tau)$ as
        \begin{equation}\label{eq:smooth_ambiguity}
            U_{i}(\boldsymbol{\sigma},\mu,\tau) = \phi_{i}^{-1} \left( \sum_{\theta}\mu_{\theta}\phi_i(u_i(\sigma_{\theta},\tau)) \right), 
        \end{equation}
        where $\phi_i: \mathbb{R} \rightarrow \mathbb{R}$ is a weakly concave and strictly increasing and differentiable function. An affine $\phi_{i}(\cdot)$ corresponds to ambiguity neutrality, in which case the preferences reduce to SEU with belief $\mu$. Greater concavity of $\phi_{i}(\cdot)$ corresponds to greater ambiguity aversion. We do not consider ambiguity loving behavior, as this would build-in a direct  preference benefit from ambiguous communication, while with ambiguity aversion, ambiguous communication can only be valuable if it has a strategic benefit.

        When the expected payoff $u_{i}(\sigma_{\theta},\tau)$ varies with $\theta$, an ambiguity-averse player $i$ responds to $(\boldsymbol{\sigma}, \mu)$ \emph{as if} they are ambiguity neutral and using an \emph{effective} measure over $\Theta$ that is more pessimistic than $\mu$. As mentioned in the introduction (see footnote \ref{footnote_general_models}), while \eqref{eq:smooth_ambiguity} delivers a particularly tractable formula for this effective measure (Section \ref{sec_incentive_compatibility}), the key qualitative properties of comparative pessimism and endogeneity would apply to the effective measure under many models of ambiguity averse preferences. 
        
        The following maxmin expected utility (MEU) objective can be seen as an appropriate limit of \eqref{eq:smooth_ambiguity} as ambiguity aversion goes to infinity \citep[Proposition 3]{klibanoff2005smooth}: 
        \begin{equation*}
            U^{MEU}_{i}(\boldsymbol{\sigma},\mu,\tau) =\min_{\theta \in supp(\mu)}u_i(\sigma_{\theta},\tau).
        \end{equation*}
        Such an MEU receiver together with an ambiguity-neutral sender is considered in Section \ref{sec_maxmin} of the Appendix.\footnote{Key differences from our main analysis are that (1) the sender never needs more than binary ambiguous experiments to approach their optimal payoff, and (2) the sender can approach their optimal payoff without providing the receiver any benefits from communication, and this may lead the receiver's payoff to drop discontinuously when passing to the MEU limit.} 

        Writing $BR(\boldsymbol{\sigma}, \mu)$ for the set of best replies of the receiver (i.e., the maximizers of $U_r(\boldsymbol{\sigma},\mu, \tau)$ with respect to $\tau$), the sender's problem is:
        \begin{equation*}
            (\mathcal{P})=
            \left\{
            \begin{array}{l}
                \max_{(\boldsymbol{\sigma},\mu,\tau)} U_{s}(\boldsymbol{\sigma},\mu,\tau), \\
                \text{subject to\;} \tau \in BR(\boldsymbol{\sigma}, \mu).
            \end{array}
            \right.
        \end{equation*}
    
        Observe that the sender's Bayesian persuasion problem \citep{kamenica2011bayesian} corresponds to the special case of our model where the sender is restricted to choosing an experiment:
        \begin{equation*}
            (\mathcal{P}^{BP}) =
            \left\{
            \begin{array}{l}
                \max_{(\sigma,\tau)}u_s(\sigma,\tau),\\
                \text{subject to\;} \tau \in br(\sigma),
            \end{array}
            \right.
        \end{equation*}
        where $br(\sigma)$ denotes the set of best replies to $\sigma$, i.e., the maximizers of $u_r(\sigma,\tau)$ with respect to $\tau$. Let $u_{s}^{BP}$ denote the value of $(\mathcal{P}^{BP})$, i.e., the sender's payoff at a solution to $(\mathcal{P}^{BP})$.  
    
        Our analysis will focus on optimal persuasion with ambiguous communication (the solution to $(\mathcal{P})$) and its properties, as well as when and how ambiguous communication may benefit the sender compared to the standard, unambiguous case of Bayesian persuasion. 
    
        \begin{definition}\label{def_ambiguity_strictly_benefits}
            \textbf{Ambiguous communication benefits the sender} if the value of $(\mathcal{P})$ is strictly higher than $u_s^{BP}$. 
        \end{definition}
    
        We next present two preliminary results -- a revelation principle and a characterization of incentive compatibility for ambiguous experiments -- that play a key role in our analysis. 
    
        \subsection{A Revelation Principle} 
    
        \begin{definition}\label{def_canonical_experiment}
            An ambiguous experiment $(\boldsymbol{\sigma}, \mu)$ is \textbf{canonical} if $M = A$. 
        \end{definition}
    
       We write $\tau^* : A \rightarrow \Delta(A)$ for the receiver's obedient strategy, that is, $\tau^*(a|a)=1$ for all $a$. We will refer to any canonical ambiguous experiment that induces such obedience as itself obedient. 
        \begin{definition}\label{def_obedient_experiment}
            A canonical ambiguous experiment $(\boldsymbol{\sigma}, \mu)$ is \textbf{obedient} if $\tau^{*} \in BR(\boldsymbol{\sigma}, \mu)$.
        \end{definition}
    
        We start with a preliminary observation: a revelation principle holds -- for payoff purposes, it is without loss of generality to restrict attention to canonical and obedient ambiguous experiments.
    
        \begin{proposition}\label{rev-principle}
            For any $((\boldsymbol{\sigma}, \mu), \tau)$ such that $\tau \in BR(\boldsymbol{\sigma}, \mu)$, there exists a canonical and obedient ambiguous experiment $(\boldsymbol{\sigma}^{*}, \mu)$ such that $u_{i}(\sigma_{\theta}, \tau) = u_{i}(\sigma^{*}_{\theta}, \tau^{*})$ for all $i \in \{s,r\}$ and $\theta$.
        \end{proposition}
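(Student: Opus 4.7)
The plan is to adapt the standard Bayesian persuasion revelation principle to the ambiguous setting, exploiting the fact that the smooth ambiguity functional $U_i$ in \eqref{eq:smooth_ambiguity} only aggregates the payoffs $u_i(\sigma_\theta, \tau)$ across $\theta$ \emph{after} they have been computed. Consequently, it suffices to construct, for each $\theta$ separately, a canonical experiment that replicates the expected payoffs $u_i(\sigma_\theta, \tau)$, while jointly guaranteeing obedience under the common weight vector $\mu$.

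Concretely, I would fix $(\boldsymbol{\sigma}, \mu)$ with message space $M$ and $\tau \in BR(\boldsymbol{\sigma}, \mu)$, and define, for each $\theta$, the canonical experiment
\begin{equation*}
\sigma^{*}_{\theta}(a \mid \omega) \equiv \sum_{m \in M} \sigma_{\theta}(m \mid \omega)\, \tau(a \mid m),
\end{equation*}
keeping the same $\mu$. Substituting into \eqref{eq:expected_payoff} and exchanging sums shows immediately that $u_i(\sigma^{*}_{\theta}, \tau^{*}) = u_i(\sigma_{\theta}, \tau)$ for every $i$ and every $\theta$, which gives the payoff-equivalence claim and, via \eqref{eq:smooth_ambiguity}, also yields $U_i(\boldsymbol{\sigma}^{*}, \mu, \tau^{*}) = U_i(\boldsymbol{\sigma}, \mu, \tau)$.

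The main step is verifying obedience of $(\boldsymbol{\sigma}^{*}, \mu)$, i.e. $\tau^{*} \in BR(\boldsymbol{\sigma}^{*}, \mu)$. I would argue by contradiction: suppose some $\tau'$ gives the receiver a strictly higher value $U_r(\boldsymbol{\sigma}^{*}, \mu, \tau') > U_r(\boldsymbol{\sigma}^{*}, \mu, \tau^{*})$. Define a candidate deviation under the original experiment by composing $\tau$ with $\tau'$:
\begin{equation*}
\tau''(a \mid m) \equiv \sum_{a' \in A} \tau(a' \mid m)\, \tau'(a \mid a').
\end{equation*}
A direct computation, parallel to the one above, shows $u_r(\sigma_{\theta}, \tau'') = u_r(\sigma^{*}_{\theta}, \tau')$ for every $\theta$. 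Since $\phi_r$ is strictly increasing, this payoff equality lifts through \eqref{eq:smooth_ambiguity} to $U_r(\boldsymbol{\sigma}, \mu, \tau'') = U_r(\boldsymbol{\sigma}^{*}, \mu, \tau') > U_r(\boldsymbol{\sigma}^{*}, \mu, \tau^{*}) = U_r(\boldsymbol{\sigma}, \mu, \tau)$, contradicting $\tau \in BR(\boldsymbol{\sigma}, \mu)$. Hence $\tau^{*}$ is indeed a best reply to $(\boldsymbol{\sigma}^{*}, \mu)$.

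The step I expect to require most care is precisely this last one, because the smooth ambiguity aggregator $\phi_r^{-1}(\sum_\theta \mu_\theta \phi_r(\cdot))$ is nonlinear in the underlying $u_r(\sigma_\theta, \tau)$, so one might worry that payoff equalities per $\theta$ fail to translate into ordinal comparisons of $U_r$. The key observation that makes it go through is that the construction preserves $u_r(\sigma_\theta, \cdot)$ \emph{for every $\theta$ simultaneously} (and uses the same $\mu$), so the composition of $\phi_r$, $\mu$-averaging, and $\phi_r^{-1}$ acts identically on both sides, and strict monotonicity of $\phi_r$ gives the required strict inequality. No ambiguity-aversion-specific property beyond monotonicity of $\phi_r$ is needed, which is why this revelation principle extends cleanly from Bayesian persuasion to our ambiguous setting.
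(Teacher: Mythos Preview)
Your proposal is correct and essentially identical to the paper's own proof: both construct $\sigma^{*}_{\theta}(a\mid\omega)=\sum_{m}\sigma_{\theta}(m\mid\omega)\tau(a\mid m)$, verify the per-$\theta$ payoff equality, and then derive obedience by composing a hypothetical profitable deviation $\tau'$ with $\tau$ to contradict $\tau\in BR(\boldsymbol{\sigma},\mu)$. Your closing remark that only monotonicity of $\phi_r$ (not its concavity) is needed is exactly the point the paper's proof implicitly uses.
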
	
    
        It is well-known that such a revelation principle holds in the persuasion game setting without ambiguity. However, one might have thought of at least two reasons why the same might not be true in our environment. First, an ambiguity averse receiver might strictly prefer a mixed strategy to any pure strategy for hedging reasons in the face of ambiguity. How can the receiver's desire to mix be reconciled with the revelation principle, which states that it is without loss of generality to have the receiver play the pure strategy $\tau^{*}$? The answer is that any mixing the receiver might desire to do can always be emulated through the use of experiments that mix over action recommendations. It is the standard Bayesian persuasion assumption of sender's commitment that guarantees that this emulation is always possible. Second, dynamic inconsistency, generated by ambiguity aversion together with assumptions on updating, is the main channel leading to the failure of such a revelation principle in existing literature. As previously mentioned, we shut down this channel by modeling the receiver as choosing a strategy $\tau$ to maximize $U_{r}(\boldsymbol{\sigma},\mu,\tau)$, their ex-ante payoff from $((\boldsymbol{\sigma},\mu),\tau)$, which imposes dynamic consistency on the receiver.  
        
        From here on, we restrict attention to canonical experiments, and represent incentive compatibility via obedience. Given the prominent role obedient experiments play, understanding when obedience holds is important. We next present a characterization of such incentive compatibility for ambiguous experiments.
        
        \subsection{Incentive Compatibility and Effective Measure}\label{sec_incentive_compatibility}
        We present a central result linking the obedience of an \emph{ambiguous} experiment to the obedience of an \emph{unambiguous} experiment that is derived from the ambiguous experiment. We repeatedly use this result throughout the paper. To state the result, we need the following definition:
    
        \begin{definition}\label{def_effmeasure}
            Given an ambiguous experiment $(\boldsymbol{\sigma}, \mu)$, the receiver's \textbf{effective measure} $em^{(\boldsymbol{\sigma}, \mu)} \in \Delta(\Theta)$ is given by:
            \begin{equation}\label{equ_effmeasure}
                em^{(\boldsymbol{\sigma}, \mu)}_{\theta} := \frac{\mu_{\theta} \phi^{'}_r(u_r(\sigma_{\theta},\tau^*))}{\sum_{\tilde{\theta}}\mu_{\tilde{\theta}} \phi^{'}_r(u_r(\sigma_{\tilde{\theta}},\tau^*))}, \text{\;for all\;} \theta \in \Theta.
            \end{equation}
        \end{definition}
    
    The effective measure $em^{(\boldsymbol{\sigma}, \mu)}$ is a probability measure with the same support as $\mu$. It is equal to $\mu$ when the receiver is ambiguity neutral (i.e., $\phi_{r}$ is affine), and is more pessimistic than $\mu$ for an ambiguity averse receiver (i.e., $\phi_{r}$ concave). Pessimism here means shifting weight toward $\theta$ yielding lower expected receiver's payoffs, i.e., if $u_{r}(\sigma_{\theta}, \tau^{*}) < u_{r}(\sigma_{\theta'}, \tau^{*})$, then $em^{(\boldsymbol{\sigma}, \mu)}_{\theta}/ em^{(\boldsymbol{\sigma}, \mu)}_{\theta'} > \mu_{\theta}/\mu_{\theta'}$. Notice also that the effective measure of a given $\theta$ depends on the specification of the ambiguous experiment for all $\theta \in \text{supp}(\mu)$.  
    
        The next result states that $\tau^{*}$ is the receiver's best response to the ambiguous experiment $(\boldsymbol{\sigma}, \mu)$ if, and only if, it is a best response to the experiment, $\sigma^{*}$, defined below as the convex combination of the experiments in the collection $\boldsymbol{\sigma}$ with weights given by the receiver's effective measure. 
    
        \begin{lemma}\label{lem_effmeasure}
            The ambiguous experiment $(\boldsymbol{\sigma}, \mu)$ is obedient if, and only if, the (unambiguous) experiment $\sigma^{*}$ is obedient, where
                \begin{equation*}
                    \sigma^{*} = \sum_{\theta} em^{(\boldsymbol{\sigma}, \mu)}_{\theta} \sigma_{\theta}. 
                \end{equation*}
        \end{lemma}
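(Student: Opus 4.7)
The plan is to reduce the ex-ante obedience condition for $(\boldsymbol{\sigma},\mu)$ under the smooth ambiguity functional $U_r$ to a first-order condition at $\tau^{*}$, and then observe that, after normalization, this first-order condition coincides exactly with obedience of the single (unambiguous) experiment $\sigma^{*}$. The key structural facts I would exploit are that $\tau \mapsto u_r(\sigma_{\theta},\tau)$ is affine in $\tau$ for every $\theta$ (immediate from \eqref{eq:expected_payoff}), that $\sigma \mapsto u_r(\sigma,\tau)$ is affine in $\sigma$ for every $\tau$, and that $\phi_r$ is strictly increasing, concave, and differentiable.

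First I would note that, because $\phi_r$ is strictly increasing, $\tau^{*}\in BR(\boldsymbol{\sigma},\mu)$ if and only if $\tau^{*}$ maximizes $V(\tau):=\sum_{\theta}\mu_{\theta}\phi_r(u_r(\sigma_{\theta},\tau))$ over the convex compact set of strategies $\tau:A\to\Delta(A)$. Since $u_r(\sigma_{\theta},\cdot)$ is affine and $\phi_r$ is concave, each summand is concave in $\tau$, hence $V$ is concave. For a concave function on a convex set, $\tau^{*}$ is a maximizer if and only if the Gateaux derivative of $V$ at $\tau^{*}$ in every feasible direction $\tau-\tau^{*}$ is non-positive. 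Affineness of $u_r(\sigma_{\theta},\cdot)$ makes this derivative computable exactly:
\begin{equation*}
\sum_{\theta}\mu_{\theta}\phi_r'\bigl(u_r(\sigma_{\theta},\tau^{*})\bigr)\bigl[u_r(\sigma_{\theta},\tau)-u_r(\sigma_{\theta},\tau^{*})\bigr]\leq 0\quad\text{for all }\tau.
\end{equation*}
Dividing through by the strictly positive constant $\sum_{\tilde{\theta}}\mu_{\tilde{\theta}}\phi_r'(u_r(\sigma_{\tilde{\theta}},\tau^{*}))$ converts the $\mu_{\theta}\phi_r'$ weights into the effective measure weights $em^{(\boldsymbol{\sigma},\mu)}_{\theta}$ from Definition~\ref{def_effmeasure}. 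Invoking linearity of $u_r(\cdot,\tau)$ in the experiment, $\sum_{\theta}em^{(\boldsymbol{\sigma},\mu)}_{\theta}u_r(\sigma_{\theta},\tau)=u_r(\sigma^{*},\tau)$, so the inequality system becomes $u_r(\sigma^{*},\tau)\leq u_r(\sigma^{*},\tau^{*})$ for all $\tau$. Since $u_r(\sigma^{*},\cdot)$ is itself affine, this is exactly $\tau^{*}\in br(\sigma^{*})$, i.e., obedience of the unambiguous experiment $\sigma^{*}$. Both directions of the equivalence are covered in one shot, because for a concave differentiable program the first-order condition at $\tau^{*}$ is simultaneously necessary (when $\tau^{*}$ maximizes $V$) and sufficient (by concavity of $V$).

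The main subtlety, and the only point that looks potentially tricky, is that the effective measure is itself defined using $u_r(\sigma_{\theta},\tau^{*})$, which at first glance seems to create a circularity: the weights of the candidate ``reduced'' experiment depend on the very strategy whose optimality we are characterizing. The reason this causes no trouble is that the FOC characterization of a concave maximum only requires evaluating $\phi_r'$ at the point $\tau^{*}$; there is no fixed-point problem because $\tau^{*}$ is a specific exogenously named strategy (the identity on $A$), not a variable to be solved for. Beyond that, the argument is essentially a clean application of concavity, the chain rule, and bilinearity of $u_r$ in $(\sigma,\tau)$; no compactness argument beyond standard existence of best replies is needed.
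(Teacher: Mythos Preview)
Your proposal is correct and takes essentially the same approach as the paper: both reduce the obedience question to the first-order condition for maximizing the concave function $\tau\mapsto\sum_{\theta}\mu_{\theta}\phi_r(u_r(\sigma_{\theta},\tau))$ at $\tau^{*}$, then normalize the weights $\mu_{\theta}\phi_r'(u_r(\sigma_{\theta},\tau^{*}))$ to obtain the effective measure and recognize the resulting linear inequality system as obedience of $\sigma^{*}$. The paper writes out the coordinates $\tau=\tau^{*}+\delta$ and the partials explicitly, while you phrase it via Gateaux derivatives in feasible directions; this is a purely cosmetic difference.
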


       Lemma \ref{lem_effmeasure} follows from the first-order conditions of the receiver's maximization problem $\max_{\tau}U_r(\boldsymbol{\sigma}, \mu,\tau)$, evaluated at $\tau^*$. Some intuition is that obedience will differ from the best response to the experiment $\sum_{\theta} \mu_{\theta} \sigma_{\theta}$ in that it will be better hedged against uncertainty about the weights on the experiments. In our introductory example, for instance,
       \begin{align*} 
        2 = u_{r}(\overline{\sigma}, br(\mu\overline{\sigma} + (1 - \mu )\underline{\sigma})) & > u_{r}(\overline{\sigma}, \tau^{*}) = 3/2 \\ &> u_{r}(\underline{\sigma}, \tau^{*}) = 1/2\\ & > u_{r}(\underline{\sigma}, br(\mu\overline{\sigma} + (1 - \mu )\underline{\sigma})) = -1, 
        \end{align*} 
        showing that the obedience strategy $\tau^{*}$ is hedged against the uncertainty about the weight $\mu$ more than the strategy $br(\mu\overline{\sigma} + (1 - \mu )\underline{\sigma})$. Thus the relative pessimism of the effective measure reflects the fact that an ambiguity averse receiver values such hedging.

       Lemma \ref{lem_effmeasure} gives rise to the following interpretation of the receiver's effective measure: It is an ``ambiguity-neutral measure supporting obedience" in the sense that \emph{if} the receiver were ambiguity neutral, the ambiguous experiment $(\boldsymbol{\sigma}, em^{(\boldsymbol{\sigma}, \mu)})$ would be obedient.

       \begin{remark}\label{rem_dynamic_consistency}
        \emph{These properties of the effective measure also give rise to an updating implementation of the receiver's ex-ante optimality. A receiver who forms their effective posterior after observing a message by updating using Bayes' rule but with the effective measure in place of $\mu$ in the Bayes' formula will be dynamically consistent. Formally, such a receiver updates $em^{(\boldsymbol{\sigma}, \mu)}_{\theta} p(\omega) \sigma_{\theta}(a|\omega)$, their effective prior over $\Theta \times \Omega \times A$, to $\frac{em^{(\boldsymbol{\sigma}, \mu)}_{\theta} p(\omega) \sigma_{\theta}(a|\omega)}{\sum_{\tilde{\theta},\tilde{\omega}} em^{(\boldsymbol{\sigma}, \mu)}_{\tilde{\theta}} p(\tilde{\omega}) \sigma_{\tilde{\theta}}(a|\tilde{\omega})}$, their effective posterior over $\Theta \times \Omega$ after observing $a$.}
       \end{remark}
    
        Finally, for later reference, observe that, fixing $\boldsymbol{\sigma}$, we can invert \eqref{equ_effmeasure} to express $\mu$ as a function of the effective measure it generates: 
        \begin{equation}\label{equ_reverse_effmeasure}
            \mu_{\theta} = \frac{em^{(\boldsymbol{\sigma}, \mu)}_{\theta}/ \phi^{'}_r(u_r(\sigma_{\theta},\tau^*))}{\sum_{\tilde{\theta}}em^{(\boldsymbol{\sigma}, \mu)}_{\tilde{\theta}}/ \phi^{'}_r(u_r(\sigma_{\tilde{\theta}},\tau^*))}.
        \end{equation}

        \section{Properties of Optimal Persuasion with Ambiguous Communication}\label{sec_properties_optimal_ambiguous_communication}

    Two experiments are \emph{Pareto-ranked} if the sender and receiver agree on their strict ranking under the assumption of obedience. As we shall see, Pareto-ranking and splittings into Pareto-ranked experiments play a key role in optimal persuasion and, more generally, in  the sender benefiting from ambiguous communication.
    
        \begin{definition}\label{def_pareto_ranked_splitting}
            Two experiments $\overline{\sigma}$ and $\underline{\sigma}$ are \textbf{weakly Pareto-ranked} if either the two inequalities
            \begin{align}\label{equ_pareto_ranked}
                u_{s}(\overline{\sigma}, \tau^{*}) \geq u_{s}(\underline{\sigma}, \tau^{*}) \text{ and } u_{r}(\overline{\sigma}, \tau^{*}) \geq u_{r}(\underline{\sigma}, \tau^{*}),
            \end{align}
            hold or both reversed inequalities hold. They are \textbf{Pareto-ranked} if the same holds true with strict inequalities. 
    
            A \textbf{Pareto-ranked splitting}  of the experiment $\sigma$ is a triple $(\overline{\sigma}, \underline{\sigma}, \lambda)$ such that (i) $\lambda \overline{\sigma} + (1-\lambda)\underline{\sigma} = \sigma$, (ii) $\lambda \in (0,1)$, and (iii) \eqref{equ_pareto_ranked} holds with strict inequalities, i.e., $\overline{\sigma}$ and $\underline{\sigma}$ are Pareto-ranked. 
            \end{definition}

        Our next result shows that these concepts are fundamental in describing properties of optimal ambiguous communication. Part (i) of the result says that if two experiments in the support of $\mu$ bracket the sender's payoff from an optimal ambiguous experiment, they must be weakly Pareto-ranked. The proof shows that if they were not, then improvement could be achieved by merging the two experiments. Part (ii) shows there must not exist any opportunities to introduce additional ambiguity through Pareto-ranked splittings that bracket the sender's payoff from the ambiguous experiment. The proof shows that any such splittings would be beneficial for the sender. Unlike Part (i), Part (ii) requires the sender to be ambiguity neutral. However, Lemma \ref{thm_Pareto_ranked_concavification_experiment_space} in the Appendix shows that a similar result holds whenever the sender is not too ambiguity averse compared to the receiver.

        \begin{theorem}\label{thm_optimal_persuasion_pareto_ranked_experiments}
            Suppose $(\boldsymbol{\sigma}, \mu)$ is optimal (i.e., is a solution to $(\mathcal{P})$), and that $\phi_{r}$ is strictly concave.\footnote{As the proof in the Appendix makes clear, the only role of strict concavity of $\phi_{r}$ is to simplify the statement of the theorem. Without it, one needs to add conditions checking if $\phi'_{r}(u_{r}(\sigma_{\theta}, \tau^{*})) \neq \phi'_{r}(u_{r}(\sigma_{\theta'}, \tau^{*}))$ to each part of the theorem.} Then 
            \begin{enumerate}[(i)]
                \item for all $\theta, \theta' \in \text{supp}(\mu)$ such that $u_{s}(\sigma_{\theta}, \tau^{*})\geq U_{s}(\boldsymbol{\sigma}, \mu, \tau^{*}) \geq u_{s}(\sigma_{\theta'}, \tau^{*})$ with at least one inequality strict, $\sigma_{\theta}$ and $\sigma_{\theta'}$ are weakly Pareto-ranked, and
                \item if $\phi_{s}$ is affine, then for all $\theta\in \text{supp}(\mu)$, there does not exist a Pareto-ranked splitting of $\sigma_{\theta}$, $(\overline{\sigma}, \underline{\sigma}, \lambda)$, such that $u_{s}(\overline{\sigma}, \tau^{*}) \geq U_{s}(\boldsymbol{\sigma}, \mu, \tau^{*}) \geq u_{s}(\underline{\sigma}, \tau^{*})$. 
            \end{enumerate}
        \end{theorem}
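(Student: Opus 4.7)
The plan for both parts is to exploit Lemma \ref{lem_effmeasure}: obedience of $(\boldsymbol{\sigma}, \mu)$ depends on the ambiguous experiment only through the aggregate $\sigma^{*} := \sum_{\theta} em_{\theta}^{(\boldsymbol{\sigma}, \mu)}\sigma_{\theta}$. To contradict optimality I construct an alternative ambiguous experiment $(\boldsymbol{\sigma}', \mu')$ whose effective measure satisfies $\sum_{\theta'} em'_{\theta'}\sigma'_{\theta'} = \sigma^{*}$---so obedience is preserved by Lemma \ref{lem_effmeasure}---but which raises $U_{s}$. In Part (i) the alternative \emph{merges} two experiments; in Part (ii) it \emph{splits} one. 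In each case $\mu'$ is recovered from a judiciously chosen $em'$ via the inversion formula \eqref{equ_reverse_effmeasure}. Writing $\phi_{s}(U_{s})$ and $\phi_{s}(U_{s}')$ as ratios of sums (indices unaffected by the rearrangement contribute $A$ to the numerator and $B$ to the denominator, while the affected indices contribute $X$ and $Y$, or $X'$ and $Y'$, respectively), a short manipulation yields
\begin{equation*}
\operatorname{sgn}(U_{s}' - U_{s}) = \operatorname{sgn}\bigl(\Delta_{f} - \phi_{s}(U_{s})\,\Delta_{g}\bigr), \qquad \Delta_{f} := X' - X, \ \Delta_{g} := Y' - Y.
\end{equation*}
The work of the proof is to sign this quantity using strict concavity of $\phi_{r}$ and the bracketing/Pareto hypotheses.

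For Part (ii), given a Pareto-ranked splitting $(\overline{\sigma}, \underline{\sigma}, \lambda)$ of $\sigma_{\theta}$ with $u_{s}(\overline{\sigma}, \tau^{*}) \geq U_{s} \geq u_{s}(\underline{\sigma}, \tau^{*})$, I replace $\theta$ by $\{\overline{\theta}, \underline{\theta}\}$, set $\sigma'_{\overline{\theta}} = \overline{\sigma}$, $\sigma'_{\underline{\theta}} = \underline{\sigma}$, $em'_{\overline{\theta}} = \lambda\,em_{\theta}$, $em'_{\underline{\theta}} = (1-\lambda)\,em_{\theta}$, and leave $em'_{\theta''} = em_{\theta''}$ for $\theta'' \neq \theta$; using $\sigma_{\theta} = \lambda\overline{\sigma} + (1-\lambda)\underline{\sigma}$ gives $\sigma^{*'} = \sigma^{*}$. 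With $\phi_{s}$ affine, $\Delta_{f} - \phi_{s}(U_{s})\Delta_{g}$ is a positive multiple of
\begin{equation*}
\lambda\,(u_{s}(\overline{\sigma}) - U_{s})\!\left[\tfrac{1}{\phi'_{r}(u_{r}(\overline{\sigma}))} - \tfrac{1}{\phi'_{r}(u_{r}(\sigma_{\theta}))}\right] + (1-\lambda)\,(U_{s} - u_{s}(\underline{\sigma}))\!\left[\tfrac{1}{\phi'_{r}(u_{r}(\sigma_{\theta}))} - \tfrac{1}{\phi'_{r}(u_{r}(\underline{\sigma}))}\right].
\end{equation*}
The bracketed factors are strictly positive because strict concavity of $\phi_{r}$ makes $\phi'_{r}$ strictly decreasing and $u_{r}(\overline{\sigma}) > u_{r}(\sigma_{\theta}) > u_{r}(\underline{\sigma})$ by Pareto-ranking; the two multipliers are nonnegative with at least one strict (otherwise $u_{s}(\overline{\sigma}) = u_{s}(\underline{\sigma})$, contradicting Pareto-ranking). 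Hence $U_{s}' > U_{s}$, contradicting optimality.

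For Part (i), suppose $\theta, \theta' \in \text{supp}(\mu)$ bracket $U_{s}$ with at least one strict but are not weakly Pareto-ranked. Then $u_{s}(\sigma_{\theta}) > u_{s}(\sigma_{\theta'})$ and $u_{r}(\sigma_{\theta}) < u_{r}(\sigma_{\theta'})$ both strictly (equality in $u_{s}$ would force weak Pareto-ranking). I merge: replace $\{\theta, \theta'\}$ by a single $\theta_{m}$ with $\sigma'_{\theta_{m}} = \beta\sigma_{\theta} + (1-\beta)\sigma_{\theta'}$, $\beta = em_{\theta}/(em_{\theta} + em_{\theta'})$, and $em'_{\theta_{m}} = em_{\theta} + em_{\theta'}$, leaving other $em'$ unchanged; again $\sigma^{*'} = \sigma^{*}$. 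Since $\phi_{s}$ is only weakly concave, Jensen's inequality gives $\phi_{s}(u_{s}(\sigma_{m})) \geq \beta\phi_{s}(u_{s}(\sigma_{\theta})) + (1-\beta)\phi_{s}(u_{s}(\sigma_{\theta'}))$; substituting this into the identity for $\Delta_{f} - \phi_{s}(U_{s})\Delta_{g}$ yields a lower bound proportional to
\begin{equation*}
\beta\,(\phi_{s}(u_{s}(\sigma_{\theta})) - \phi_{s}(U_{s}))\!\left[\tfrac{1}{\phi'_{r}(u_{r}(\sigma_{m}))} - \tfrac{1}{\phi'_{r}(u_{r}(\sigma_{\theta}))}\right] + (1-\beta)\,(\phi_{s}(U_{s}) - \phi_{s}(u_{s}(\sigma_{\theta'})))\!\left[\tfrac{1}{\phi'_{r}(u_{r}(\sigma_{\theta'}))} - \tfrac{1}{\phi'_{r}(u_{r}(\sigma_{m}))}\right].
\end{equation*}
Both bracketed factors are strictly positive by strict concavity of $\phi_{r}$ together with $u_{r}(\sigma_{\theta}) < u_{r}(\sigma_{m}) < u_{r}(\sigma_{\theta'})$; the two multipliers are nonnegative by bracketing and $\phi_{s}$-monotonicity, with at least one strict (since at least one of the bracketing inequalities is strict). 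Hence $U_{s}' > U_{s}$, again contradicting optimality. The main obstacle is the ratio-of-sums algebra that reduces $U_{s}' - U_{s}$ to $\operatorname{sgn}(\Delta_{f} - \phi_{s}(U_{s})\Delta_{g})$, and the careful handling of the $\phi_{s}$-Jensen correction in Part (i)---a correction absent in Part (ii) thanks to the affineness of $\phi_{s}$.
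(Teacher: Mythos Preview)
Your proof is correct and follows essentially the same approach as the paper: merging two non--Pareto-ranked experiments for (i) and splitting via a Pareto-ranked splitting for (ii), with the sign analysis driven by strict monotonicity of $1/\phi_r'$ and the bracketing hypotheses. The only organizational difference is that the paper first establishes the concavification-like characterization (Theorem~\ref{thm_optimal_persuasion_characterization_experiment_space}) and then phrases everything as improvements in $\sum_\theta \lambda_\theta \Phi_{U_s}(\sigma_\theta)$, whereas your ratio-of-sums identity $\operatorname{sgn}(U_s'-U_s)=\operatorname{sgn}(\Delta_f-\phi_s(U_s)\Delta_g)$ computes exactly the same quantity directly---your $\Delta_f-\phi_s(U_s)\Delta_g$ is precisely the change in $\sum_\theta em_\theta \Phi_{U_s}(\sigma_\theta)$---so the two arguments are equivalent once unwound.
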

    
        To gain intuition for part (i), first observe that if such $\sigma_{\theta}$ and $\sigma_{\theta'}$ are not weakly Pareto-ranked, then the receiver must get a strictly higher expected payoff from $\sigma_{\theta'}$ than from $\sigma_{\theta}$, while the reverse is true for the sender. Ambiguity aversion then implies that the receiver's effective measure places more weight on $\sigma_{\theta}$ relative to $\sigma_{\theta'}$ than the ambiguity neutral weights  do, i.e., $em^{(\boldsymbol{\sigma}, \mu)}_{\theta}/em^{(\boldsymbol{\sigma}, \mu)}_{\theta'} > \mu_{\theta}/\mu_{\theta'}$. If $\sigma_{\theta}$ and $\sigma_{\theta'}$ are the only two experiments in the support of $\mu$, the sender can merge them into the (unambiguous) experiment $em^{(\boldsymbol{\sigma}, \mu)}_{\theta} \sigma_{\theta} + em^{(\boldsymbol{\sigma}, \mu)}_{\theta'} \sigma_{\theta'}$. By construction, the receiver would continue to be obedient, and the sender would strictly benefit from this merging -- a profitable deviation. When $\sigma_{\theta}$ and $\sigma_{\theta'}$ are not the only two experiments in the support of $\mu$, however, this is not the complete story as this merging may also impact the weighting of the merged experiment relative to the other experiments. Part of the additional insight of the proof is that when $u_{s}(\sigma_{\theta}, \tau^{*})\geq U_{s}(\boldsymbol{\sigma}, \mu, \tau^{*}) \geq u_{s}(\sigma_{\theta'}, \tau^{*})$ holds, this impact is at least weakly beneficial to the sender. The intuition for part (ii) is similar. 
    
       In Theorem  \ref{thm_optimal_persuasion_pareto_ranked_experiments}, the conditions refer to pairs of experiments for which the sender's payoffs bracket $U_{s}(\boldsymbol{\sigma}, \mu, \tau^{*})$. Intuition for why similar conclusions may not apply when the pairs involved in the Pareto-ranking or the Pareto-ranked splitting lie on the same side of $U_{s}(\boldsymbol{\sigma}, \mu, \tau^{*})$ is related to how the receiver's ambiguity aversion, as reflected in properties of $\phi_{r}$, connects $\mu$ with the effective measure $em^{(\boldsymbol{\sigma}, \mu)}$ via \eqref{equ_effmeasure}. In particular, when there are more than two experiments in $\boldsymbol{\sigma}$, splitting or merging experiments on the same side of $U_{s}(\boldsymbol{\sigma}, \mu, \tau^{*})$ may shift their combined weights in the effective measure relative to the other experiments in a manner unfavorable to the sender. In the Appendix, we show that concavity (resp. convexity) of $1/\phi'_{r}$ is sufficient to extend the conclusions to pairs of experiments on a particular side of $U_{s}(\boldsymbol{\sigma}, \mu, \tau^{*})$, and assuming linearity of $1/\phi'_{r}$ leads to the following simpler necessary conditions for optimal persuasion: 
        
        \begin{proposition}\label{cor_linear_phi_inverse_concavification}
            Suppose $(\boldsymbol{\sigma}, \mu)$ is a solution to $(\mathcal{P})$, and 
            \begin{equation}\label{equ_linear_phi_inverse} 
                \phi_{r}(x) = c\ln(ax + b) + d
            \end{equation}
            for some $a,b,c,d \in \R$ where $a,c > 0$ and $ax + b > 0$ for $x \in [\min_{a, \omega}u_{r}(a, \omega), \max_{a, \omega}u_{r}(a, \omega)]$. Then, all experiments are weakly Pareto-ranked, that is,  for all $\theta, \theta' \in \text{supp}(\mu)$, $\sigma_{\theta}$ and $\sigma_{\theta'}$ are weakly Pareto-ranked. If, in addition, the sender is ambiguity neutral, no Pareto-ranked splitting of $\sigma_{\theta}$ exists for any $\theta\in \text{supp}(\mu)$. 
        \end{proposition}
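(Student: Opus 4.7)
My plan is to prove both parts by direct perturbation arguments that exploit the key feature of the logarithmic family: differentiating $\phi_r(x)=c\ln(ax+b)+d$ gives $1/\phi'_r(x)=(ax+b)/(ca)$, which is \emph{affine} in $x$. I would use this affine structure in two ways. First, for any convex combination $x=\lambda x_1+(1-\lambda)x_2$, linearity yields $\lambda/\phi'_r(x_1)+(1-\lambda)/\phi'_r(x_2)=1/\phi'_r(x)$. Second, writing $S(\boldsymbol{\sigma},\mu):=\sum_{\tilde\theta}\mu_{\tilde\theta}\phi'_r(u_r(\sigma_{\tilde\theta},\tau^*))$ for the normalizer in \eqref{equ_effmeasure}, any reassignment of masses that leaves $S$ unchanged and delivers the same effective weights on the untouched experiments will, by Lemma \ref{lem_effmeasure}, preserve obedience automatically. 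With these tools in hand, I only need to produce a profitable merge (for Part 1) and a profitable split (for Part 2) that each satisfy $S$-invariance.

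For Part 1, I would suppose for contradiction that $(\boldsymbol{\sigma},\mu)$ is optimal while two experiments $\sigma_\theta,\sigma_{\theta'}$ in the support of $\mu$ violate weak Pareto-ranking, say $u_s(\sigma_\theta,\tau^*)>u_s(\sigma_{\theta'},\tau^*)$ and $u_r(\sigma_\theta,\tau^*)<u_r(\sigma_{\theta'},\tau^*)$. Merge them into a single experiment $\sigma^{merge}:=\pi\sigma_\theta+(1-\pi)\sigma_{\theta'}$ with combined mass $\mu_\theta+\mu_{\theta'}$, where $\pi:=em^{(\boldsymbol{\sigma},\mu)}_\theta/(em^{(\boldsymbol{\sigma},\mu)}_\theta+em^{(\boldsymbol{\sigma},\mu)}_{\theta'})$. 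Since $u_r(\sigma^{merge},\tau^*)=\pi u_r(\sigma_\theta,\tau^*)+(1-\pi)u_r(\sigma_{\theta'},\tau^*)$, the affinity identity gives $(\mu_\theta+\mu_{\theta'})\phi'_r(u_r(\sigma^{merge},\tau^*))=\mu_\theta\phi'_r(u_r(\sigma_\theta,\tau^*))+\mu_{\theta'}\phi'_r(u_r(\sigma_{\theta'},\tau^*))$, so $S$ is preserved, the new effective weight on $\sigma^{merge}$ equals $em^{(\boldsymbol{\sigma},\mu)}_\theta+em^{(\boldsymbol{\sigma},\mu)}_{\theta'}$, and all other effective weights are unchanged; hence obedience survives. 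Letting $\delta:=(\mu_\theta+\mu_{\theta'})\pi-\mu_\theta$, a short algebra yields $\delta=(\mu_\theta\mu_{\theta'}/(SE))(\phi'_r(u_r(\sigma_\theta,\tau^*))-\phi'_r(u_r(\sigma_{\theta'},\tau^*)))$ where $E:=em^{(\boldsymbol{\sigma},\mu)}_\theta+em^{(\boldsymbol{\sigma},\mu)}_{\theta'}$, which is strictly positive under the contradiction hypothesis since $\phi'_r$ is decreasing. The change in $\sum_\theta\mu_\theta\phi_s(u_s(\sigma_\theta,\tau^*))$ is then at least $\delta(\phi_s(u_s(\sigma_\theta,\tau^*))-\phi_s(u_s(\sigma_{\theta'},\tau^*)))>0$ (with an additional non-negative Jensen gain from concavity of $\phi_s$), so the sender strictly benefits regardless of their ambiguity attitude, contradicting optimality. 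The symmetric failure of Pareto-ranking is handled by swapping the roles of $\theta$ and $\theta'$.

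For Part 2, I would further assume $\phi_s$ affine and suppose $\sigma_\theta=\lambda\overline\sigma+(1-\lambda)\underline\sigma$ is a Pareto-ranked splitting for some $\theta\in\text{supp}(\mu)$. Replace $\sigma_\theta$ in $\boldsymbol{\sigma}$ by the pair $(\overline\sigma,\underline\sigma)$ with $\mu$-masses $\mu_\theta^+:=\mu_\theta\lambda\,\phi'_r(u_r(\sigma_\theta,\tau^*))/\phi'_r(u_r(\overline\sigma,\tau^*))$ and $\mu_\theta^-:=\mu_\theta(1-\lambda)\,\phi'_r(u_r(\sigma_\theta,\tau^*))/\phi'_r(u_r(\underline\sigma,\tau^*))$. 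Using $u_r(\sigma_\theta,\tau^*)=\lambda u_r(\overline\sigma,\tau^*)+(1-\lambda)u_r(\underline\sigma,\tau^*)$ together with linearity of $1/\phi'_r$, I would verify $\mu_\theta^++\mu_\theta^-=\mu_\theta$; by construction $S$ is unchanged and the new effective weights on $\overline\sigma$ and $\underline\sigma$ are $\lambda\,em^{(\boldsymbol{\sigma},\mu)}_\theta$ and $(1-\lambda)\,em^{(\boldsymbol{\sigma},\mu)}_\theta$, so the mixture $\sum_\theta em\cdot\sigma$ and hence obedience are preserved. The ambiguity-neutral sender's payoff then changes by $(\mu_\theta^+-\mu_\theta\lambda)(u_s(\overline\sigma,\tau^*)-u_s(\underline\sigma,\tau^*))$; the first factor is strictly positive because $u_r(\overline\sigma,\tau^*)>u_r(\sigma_\theta,\tau^*)$ forces $\phi'_r(u_r(\overline\sigma,\tau^*))<\phi'_r(u_r(\sigma_\theta,\tau^*))$, and the second is strictly positive by Pareto-ranking, contradicting optimality. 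The main obstacle I anticipate is isolating exactly this reweighting rule that renders obedience automatic after a merge or split: linearity of $1/\phi'_r$ is precisely what collapses each perturbation into a rank-one change that leaves $S$ and every other effective weight fixed, reducing the comparison to a scalar sender-payoff computation. It is the absence of this identity for general concave $\phi_r$ that forces Theorem \ref{thm_optimal_persuasion_pareto_ranked_experiments} to restrict attention to pairs bracketing $U_s(\boldsymbol{\sigma},\mu,\tau^*)$.
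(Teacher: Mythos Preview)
Your argument is correct and takes a genuinely different route from the paper's. The paper proves Proposition~\ref{cor_linear_phi_inverse_concavification} by passing to the reformulated program $(\Phi^{*}(u))$ and then combining Lemma~\ref{lem:inversely_ranked_pareto_ranked} (the bracketing case) with two auxiliary results (Theorems~\ref{thm_inverstly_ranked_concavification_experiment_space} and~\ref{thm_Pareto_ranked_concavification_experiment_space}) that handle the ``same-side'' cases separately under concavity, respectively convexity, of $1/\phi'_r$; linearity of $1/\phi'_r$ is then simply the intersection in which all cases apply. You instead stay in the primitive $(\boldsymbol{\sigma},\mu)$ space and exploit the affine identity $1/\phi'_r(\lambda x+(1-\lambda)y)=\lambda/\phi'_r(x)+(1-\lambda)/\phi'_r(y)$ directly: it lets you choose the merge weight $\pi$ (Part~1) and the split masses $\mu_\theta^{\pm}$ (Part~2) so that the normalizer $S=\sum_{\tilde\theta}\mu_{\tilde\theta}\phi'_r(u_r(\sigma_{\tilde\theta},\tau^*))$ is exactly preserved, which in turn fixes every other effective weight and, via Lemma~\ref{lem_effmeasure}, obedience. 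The comparison then collapses to the single scalar $\delta=(\mu_\theta+\mu_{\theta'})\pi-\mu_\theta$ (or $\mu_\theta^+-\mu_\theta\lambda$) and no case analysis on the position relative to $U_s(\boldsymbol{\sigma},\mu,\tau^*)$ is needed. Your approach is more elementary and makes the role of the logarithmic family transparent; the paper's route, by contrast, yields the more general Theorems~\ref{thm_inverstly_ranked_concavification_experiment_space} and~\ref{thm_Pareto_ranked_concavification_experiment_space} as by-products, which apply whenever $1/\phi'_r$ is merely concave or merely convex rather than affine.
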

    
        Note that \eqref{equ_linear_phi_inverse} implies constant relative ambiguity aversion (see \cite{klibanoff2005smooth}). The result that all experiments used must be weakly Pareto-ranked is reminiscent of a key Pareto-ranking result \cite[Lemma 2]{rayo2010optimal} in an entirely different persuasion setting (one in which ambiguity plays no role). The more general results of our Theorem \ref{thm_optimal_persuasion_pareto_ranked_experiments} have no obvious analogue in the setting of \cite{rayo2010optimal}.     
    
        \begin{example}[Introductory Example Continued]
        Suppose $\phi_{r}(x) = \ln(x+5)$ and $\phi_{s}(x) = x$. The sender's optimal persuasion strategy is the $((\sigma_{\overline{\theta}}, \sigma_{\underline{\theta}}), \mu)$ described in Figure \ref{fig:intro_example_2} with $\mu_{\overline{\theta}} = 39/50$. Both the sender and receiver do better than the payoff of $5/4$ they would each obtain under Bayesian persuasion.
        \end{example}
    
    So far, the analysis was devoted to properties of optimal communication strategies when ambiguous experiments are allowed. However, it does not directly tell us whether the sender strictly benefits from introducing ambiguity into their communication. We now turn to this issue, which we view as a primary focus of the paper.

    \section{When Does Ambiguous Communication Benefit The Sender?}\label{sec_ambiguous_communication_benefit_sender}
           
    We show that Pareto-ranked experiments continue to be key in determining whether ambiguous communication is better for the sender than unambiguous communication. The following theorem shows having Pareto-ranked experiments in the collection (in particular, better and worse ones having sender's expected payoffs bracketing $u_{s}^{BP}$) is necessary for an ambiguous experiment to benefit the sender. 
    
    \begin{theorem}\label{thm_necessary_condition_existence_of_Pareto_ranked_splitting}
        If an obedient ambiguous experiment $(\boldsymbol{\sigma}, \mu)$ benefits the sender, then there exist $\theta, \theta' \in \text{supp}(\mu)$ such that $\sigma_{\theta}$ and $\sigma_{\theta'}$ are Pareto-ranked, with
        $u_{s}(\sigma_{\theta}, \tau^{*}) > u_{s}^{BP} \geq u_{s}(\sigma_{\theta'}, \tau^{*})$. 
    \end{theorem}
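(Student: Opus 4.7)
The plan is to argue by contradiction, using two ingredients in tandem. First, by Jensen's inequality (concavity of $\phi_{s}$) together with the assumption that ambiguous communication benefits the sender, $\sum_{\theta}\mu_{\theta}u_{s}(\sigma_{\theta},\tau^{*}) \geq U_{s}(\boldsymbol{\sigma},\mu,\tau^{*}) > u_{s}^{BP}$. Second, by Lemma \ref{lem_effmeasure}, the unambiguous experiment $\sigma^{*} := \sum_{\theta}em^{(\boldsymbol{\sigma},\mu)}_{\theta}\sigma_{\theta}$ is obedient, so it is feasible in $(\mathcal{P}^{BP})$ and hence $\sum_{\theta}em^{(\boldsymbol{\sigma},\mu)}_{\theta}u_{s}(\sigma_{\theta},\tau^{*}) = u_{s}(\sigma^{*},\tau^{*}) \leq u_{s}^{BP}$. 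These two inequalities immediately force a nontrivial partition $\text{supp}(\mu) = \Theta_{H}\sqcup\Theta_{L}$, where $\Theta_{H} := \{\theta : u_{s}(\sigma_{\theta},\tau^{*}) > u_{s}^{BP}\}$ and $\Theta_{L} := \text{supp}(\mu)\setminus\Theta_{H}$, with both pieces non-empty: $\Theta_{H}$ since otherwise the $\mu$-average would lie at most at $u_{s}^{BP}$, and $\Theta_{L}$ since otherwise the $em$-average would strictly exceed $u_{s}^{BP}$.

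For any $\theta\in\Theta_{H}$ and $\theta'\in\Theta_{L}$ the sender already strictly prefers $\sigma_{\theta}$ to $\sigma_{\theta'}$ by construction, so the desired conclusion fails if and only if $u_{r}(\sigma_{\theta},\tau^{*}) \leq u_{r}(\sigma_{\theta'},\tau^{*})$ for every such pair. Assume this for contradiction and set $c := \min_{\theta'\in\Theta_{L}}u_{r}(\sigma_{\theta'},\tau^{*})$. Then $u_{r}(\sigma_{\theta},\tau^{*}) \leq c$ on $\Theta_{H}$ and $u_{r}(\sigma_{\theta'},\tau^{*}) \geq c$ on $\Theta_{L}$. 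Because $\phi_{r}$ is weakly concave and strictly increasing, $\phi'_{r}$ is strictly positive and weakly decreasing, so $\phi'_{r}(u_{r}(\sigma_{\theta},\tau^{*})) - \phi'_{r}(c) \geq 0$ on $\Theta_{H}$ and the reverse weak inequality holds on $\Theta_{L}$. Matching this with the sign of $u_{s}(\sigma_{\theta},\tau^{*}) - u_{s}^{BP}$ on each piece, the product $\bigl(u_{s}(\sigma_{\theta},\tau^{*}) - u_{s}^{BP}\bigr)\bigl(\phi'_{r}(u_{r}(\sigma_{\theta},\tau^{*})) - \phi'_{r}(c)\bigr)$ is non-negative for every $\theta\in\text{supp}(\mu)$.

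Taking $\mu$-expectation, rearranging, and dividing by the positive normalizer $Z := \sum_{\theta}\mu_{\theta}\phi'_{r}(u_{r}(\sigma_{\theta},\tau^{*}))$ yields
\[
\sum_{\theta}em^{(\boldsymbol{\sigma},\mu)}_{\theta}u_{s}(\sigma_{\theta},\tau^{*}) - u_{s}^{BP} \;\geq\; \frac{\phi'_{r}(c)}{Z}\Bigl(\sum_{\theta}\mu_{\theta}u_{s}(\sigma_{\theta},\tau^{*}) - u_{s}^{BP}\Bigr).
\]
The right-hand side is strictly positive, since $\phi'_{r}(c) > 0$ and, by the first paragraph, $\sum_{\theta}\mu_{\theta}u_{s}(\sigma_{\theta},\tau^{*}) > u_{s}^{BP}$. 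This contradicts $\sum_{\theta}em^{(\boldsymbol{\sigma},\mu)}_{\theta}u_{s}(\sigma_{\theta},\tau^{*}) \leq u_{s}^{BP}$, also established in the first paragraph. The main obstacle is identifying the pivot $c$: without it one is left with a raw covariance between $u_{s}$ and $\phi'_{r}(u_{r})$ whose sign is unclear, whereas anchoring $\phi'_{r}$ at $\phi'_{r}(c)$ is exactly what converts the contradiction hypothesis into a pointwise non-negative product and, through the endogenous pessimism of the effective measure, into the overshoot of $u_{s}^{BP}$ by the $em$-average of the sender's payoffs.
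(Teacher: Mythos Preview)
Your proof is correct and follows essentially the same strategy as the paper's: both define $\sigma^{*}=\sum_{\theta}em_{\theta}^{(\boldsymbol{\sigma},\mu)}\sigma_{\theta}$, use Lemma~\ref{lem_effmeasure} to get $u_{s}(\sigma^{*},\tau^{*})\leq u_{s}^{BP}$, show both $\Theta_{H}$ and $\Theta_{L}$ are non-empty, and then derive a contradiction from the hypothesis that receiver payoffs on $\Theta_{H}$ lie weakly below those on $\Theta_{L}$. Your pivot-$c$ covariance formulation is a cleaner repackaging of the paper's explicit chain of inequalities, and applying Jensen for $\phi_{s}$ up front (rather than carrying $\phi_{s}$ through the chain as the paper does) is a cosmetic difference only.
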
    
    
    Comparing with part (i) of Theorem \ref{thm_optimal_persuasion_pareto_ranked_experiments}, we see that while optimal persuasion requires \emph{weak} Pareto-ranking of experiments that bracket the sender's payoff from that ambiguous experiment, Theorem \ref{thm_necessary_condition_existence_of_Pareto_ranked_splitting} says that any improvement over Bayesian persuasion requires some Pareto-ranked experiments (and thus strictly ranked) bracketing $u_{s}^{BP}$ for the sender.
    
    We next present necessary conditions for ambiguity to benefit the sender, and show that these conditions imply that ambiguous communication can never benefit the sender when the receiver has only two available actions -- a common assumption in many examples and applications in the literature. Whereas Theorem \ref{thm_necessary_condition_existence_of_Pareto_ranked_splitting} described a necessary property of any sender's strategy that improves on Bayesian persuasion, these next conditions relate the possibility of ambiguity benefiting the sender in a given persuasion game to the existence of Pareto-ranked experiments with certain properties. 
    
    \begin{theorem}\label{thm_necessary}
        Ambiguous communication benefits the sender only if $\phi_{r}$ is not affine and there exists a Pareto-ranked splitting, $(\overline{\sigma}, \underline{\sigma}, \lambda)$, of an obedient experiment $\hat{\sigma}$ such that $u_{s}(\overline{\sigma}, \tau^{*}) > u_{s}^{BP}$. 
    \end{theorem}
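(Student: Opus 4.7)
Plan. My approach is to establish the two necessary conditions separately. Assume throughout that $(\boldsymbol{\sigma},\mu)$ is an obedient ambiguous experiment with $U_{s}(\boldsymbol{\sigma},\mu,\tau^{*}) > u_{s}^{BP}$.

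For the first conclusion ($\phi_{r}$ is not affine), I argue by contradiction. If $\phi_{r}$ is affine then, by \eqref{equ_effmeasure}, the effective measure $em^{(\boldsymbol{\sigma},\mu)}$ coincides with $\mu$, and by Lemma \ref{lem_effmeasure} obedience of $(\boldsymbol{\sigma},\mu)$ is equivalent to obedience of the unambiguous experiment $\bar{\sigma}:=\sum_{\theta}\mu_{\theta}\sigma_{\theta}$. Weak concavity of $\phi_{s}$ and Jensen's inequality give
\begin{equation*}
U_{s}(\boldsymbol{\sigma},\mu,\tau^{*}) = \phi_{s}^{-1}\!\left(\sum_{\theta}\mu_{\theta}\phi_{s}(u_{s}(\sigma_{\theta},\tau^{*}))\right) \leq \sum_{\theta}\mu_{\theta}u_{s}(\sigma_{\theta},\tau^{*}) = u_{s}(\bar{\sigma},\tau^{*}),
\end{equation*}
and since $\bar{\sigma}$ is obedient, the optimality of $\sigma^{BP}$ in $(\mathcal{P}^{BP})$ forces $u_{s}(\bar{\sigma},\tau^{*})\leq u_{s}^{BP}$, contradicting $U_{s}(\boldsymbol{\sigma},\mu,\tau^{*})>u_{s}^{BP}$.

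For the second conclusion, Theorem \ref{thm_necessary_condition_existence_of_Pareto_ranked_splitting} supplies $\theta,\theta'\in \mathrm{supp}(\mu)$ such that $\sigma_{\theta}$ and $\sigma_{\theta'}$ are strictly Pareto-ranked (with $\sigma_{\theta}$ the better one in both $u_{s}$ and $u_{r}$) and $u_{s}(\sigma_{\theta},\tau^{*})>u_{s}^{BP}\geq u_{s}(\sigma_{\theta'},\tau^{*})$. I plan to take $\overline{\sigma}=\sigma_{\theta}$ and to exhibit an obedient $\hat{\sigma}=\lambda\sigma_{\theta}+(1-\lambda)\underline{\sigma}$ with $\lambda\in(0,1)$ and $\underline{\sigma}$ a valid experiment strictly Pareto-dominated by $\sigma_{\theta}$. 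By linearity of $u_{s}$ and $u_{r}$ in the experiment, this dominance reduces to $u_{s}(\sigma_{\theta},\tau^{*})>u_{s}(\hat{\sigma},\tau^{*})$ and $u_{r}(\sigma_{\theta},\tau^{*})>u_{r}(\hat{\sigma},\tau^{*})$; the sender inequality is automatic for any $\hat{\sigma}$ in the obedient polytope $O$ since $u_{s}(\sigma_{\theta},\tau^{*})>u_{s}^{BP}\geq u_{s}(\hat{\sigma},\tau^{*})$. The natural candidate is $\hat{\sigma}=\tilde{\lambda}\sigma_{\theta}+(1-\tilde{\lambda})\sigma_{\theta'}$ with $\tilde{\lambda}=em_{\theta}/(em_{\theta}+em_{\theta'})$, which by a direct calculation is the unambiguous experiment whose obedience is equivalent, via Lemma \ref{lem_effmeasure}, to obedience of the binary ambiguous experiment $((\sigma_{\theta},\sigma_{\theta'}), (\mu_{\theta},\mu_{\theta'})/(\mu_{\theta}+\mu_{\theta'}))$ obtained by conditioning $(\boldsymbol{\sigma},\mu)$ on $\{\theta,\theta'\}$.

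Verifying obedience of $\hat{\sigma}$ will be the main obstacle. Obedience of $\sigma^{*}=\sum_{\tilde{\theta}}em_{\tilde{\theta}}\sigma_{\tilde{\theta}}$ (Lemma \ref{lem_effmeasure}) yields linear inequalities mixing the contributions from all $\tilde{\theta}\in \mathrm{supp}(\mu)$, so the truncated inequality $em_{\theta}X_{\theta}(a,a')+em_{\theta'}X_{\theta'}(a,a')\geq 0$ (with $X_{\tilde{\theta}}(a,a')=\sum_{\omega}p(\omega)\sigma_{\tilde{\theta}}(a|\omega)[u_{r}(a,\omega)-u_{r}(a',\omega)]$) does not follow directly, because the other-$\tilde{\theta}$ contributions can be slack in either direction. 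My planned remedy is to perturb $\underline{\sigma}$ slightly toward an obedient benchmark (for instance $\sigma^{BP}$, or a suitably chosen uninformative obedient experiment with receiver payoff below $u_{r}(\sigma_{\theta},\tau^{*})$) that strictly slackens every binding receiver IC. By continuity, for a sufficiently small perturbation the strict Pareto-dominance of the perturbed low experiment by $\sigma_{\theta}$ and the strict inequality $u_{s}(\overline{\sigma},\tau^{*})>u_{s}^{BP}$ are preserved, and the binary convex combination of $\sigma_{\theta}$ with the perturbed low experiment becomes obedient for an appropriate interior $\lambda$. Simultaneously achieving obedience, valid-experiment status, strict Pareto-dominance, and $u_{s}(\overline{\sigma},\tau^{*})>u_{s}^{BP}$ in one perturbation — with the choice of benchmark potentially depending on the geometry of $u_{r}$ over $O$ — will be the main technical step of the proof.
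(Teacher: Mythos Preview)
Your first step (non-affine $\phi_r$) is correct and matches the paper.

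Your second step has a genuine gap. Theorem \ref{thm_necessary_condition_existence_of_Pareto_ranked_splitting} hands you a Pareto-ranked pair $(\sigma_\theta,\sigma_{\theta'})$ inside the collection, but converting this into a Pareto-ranked splitting of an \emph{obedient} experiment is harder than your plan suggests. Your candidate $\hat\sigma=\tilde\lambda\sigma_\theta+(1-\tilde\lambda)\sigma_{\theta'}$ need not be obedient: Lemma \ref{lem_effmeasure} only gives obedience of the full $\sigma^*=\sum_{\tilde\theta}em_{\tilde\theta}\sigma_{\tilde\theta}$, and the contributions $X_{\tilde\theta}(a,a')$ from $\tilde\theta\notin\{\theta,\theta'\}$ can be large and of either sign, so $\hat\sigma$ may violate obedience by a fixed margin that no \emph{small} perturbation can repair. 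A large perturbation, on the other hand, threatens the receiver inequality: you need $u_r(\sigma_\theta,\tau^*)>u_r(\hat\sigma,\tau^*)$, yet since $\sigma_\theta$ is not itself obedient, $u_r(\sigma_\theta,\tau^*)$ can lie below the no-information payoff $\underline u_r^*$ and hence below $u_r(\sigma_0,\tau^*)$ for \emph{every} obedient benchmark $\sigma_0$. So there is no guarantee that any convex combination of $\sigma_{\theta'}$ with an obedient benchmark will simultaneously restore obedience of $\hat\sigma$ and keep $u_r(\hat\sigma,\tau^*)$ below $u_r(\sigma_\theta,\tau^*)$.

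The paper sidesteps all of this by taking different objects as $\overline\sigma$ and $\hat\sigma$: set $\overline\sigma:=\sigma=\sum_\theta\mu_\theta\sigma_\theta$ and $\hat\sigma:=\sigma^*=\sum_\theta em_\theta^{(\boldsymbol\sigma,\mu)}\sigma_\theta$. Then obedience of $\hat\sigma$ is automatic from Lemma \ref{lem_effmeasure}; the sender inequality $u_s(\sigma,\tau^*)>u_s^{BP}$ is exactly your Jensen computation; and the supports coincide because $\mu$ and $em^{(\boldsymbol\sigma,\mu)}$ do. The only nontrivial step is the receiver inequality $u_r(\sigma,\tau^*)>u_r(\sigma^*,\tau^*)$, which the paper obtains from a clean rearrangement lemma: if $(x_i)$ is decreasing, $(y_i)$ increasing and positive, and not both constant on $\mathrm{supp}(\mu)$, then $\sum_i x_i\frac{\mu_i y_i}{\sum_j\mu_j y_j}<\sum_i\mu_i x_i$; apply with $x_\theta=u_r(\sigma_\theta,\tau^*)$ and $y_\theta=\phi_r'(u_r(\sigma_\theta,\tau^*))$. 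A support-inclusion argument (Lemma \ref{lem_existence_splitting}) then produces the required splitting $(\sigma,\underline\sigma,\lambda)$ of $\sigma^*$. The key idea you are missing is to work with the two \emph{averages} of the collection (the $\mu$-average and the effective-measure average) rather than two individual experiments; this makes obedience of $\hat\sigma$ free and reduces the receiver inequality to a one-line pessimism-of-the-effective-measure argument.
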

    
    \begin{example}[Introductory Example Continued] 
        For the collection $\boldsymbol{\sigma}=(\sigma_{\overline{\theta}}, \sigma_{\underline{\theta}})$ constructed in Figure \ref{fig:intro_example_2} of the introductory example, $(\sigma_{\overline{\theta}}, \sigma_{\underline{\theta}}, 3/4)$ is a Pareto-ranked splitting of $\sigma^{BP}$ and $u_{s}(\sigma_{\overline{\theta}}, \tau^{*}) > u_{s}^{BP}$. Thus, for this example, the existence required in Theorem \ref{thm_necessary} is satisfied for $\hat{\sigma}=\sigma^{BP}$.
    \end{example}

    \begin{remark} [Not necessary for $\hat{\sigma}$ to be an optimal Bayesian persuasion experiment]\label{rem_not_necessary_BP_splitting} 
    \emph{The reader might wonder if a stronger version of the theorem that requires $\hat{\sigma}$ to be an optimal Bayesian persuasion experiment holds. This is false. There are examples in which the sender benefits from ambiguous communication even though no Pareto-ranked splitting of any optimal Bayesian persuasion experiment exists (as is true, for instance, whenever all such experiments are efficient). In such cases, it is splittings of some other obedient experiment that generate the gains over Bayesian persuasion for the sender.} 
    \end{remark} 

    The conditions in Theorem \ref{thm_necessary} are deceptively powerful: From these conditions alone, strict benefit from ambiguity can be ruled out for a simple yet important class of problems -- those in which the receiver has a binary action space. 
    
    \begin{corollary}\label{cor_binary_action}
        If the receiver has only two actions, the sender cannot benefit from ambiguous communication. 
    \end{corollary}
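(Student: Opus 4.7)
The plan is to derive a contradiction directly from Theorem \ref{thm_necessary}. Suppose, toward contradiction, that ambiguous communication benefits the sender. Then there exists a Pareto-ranked splitting $(\overline{\sigma}, \underline{\sigma}, \lambda)$ of an obedient canonical experiment $\hat{\sigma} = \lambda \overline{\sigma} + (1-\lambda)\underline{\sigma}$ with $u_s(\overline{\sigma}, \tau^{*}) > u_s^{BP}$, and (by Pareto-ranking) $u_r(\overline{\sigma}, \tau^{*}) > u_r(\underline{\sigma}, \tau^{*})$. The target is to show that $\overline{\sigma}$ is itself obedient; once this is established, $u_s(\overline{\sigma}, \tau^{*}) \leq u_s^{BP}$ by definition of $u_s^{BP}$, contradicting the strict inequality above.

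The crux is to exploit the binary action structure. Writing $A = \{a_1, a_2\}$ and defining the linear functional
\begin{equation*}
g(\sigma) := \sum_{\omega} p(\omega)\, \sigma(a_1|\omega)\bigl[u_r(a_1, \omega) - u_r(a_2, \omega)\bigr],
\end{equation*}
I would verify (via $\sigma(a_2|\omega) = 1 - \sigma(a_1|\omega)$) that $u_r(\sigma, \tau^{*}) = \E_p[u_r(a_2, \omega)] + g(\sigma)$, and that the obedience constraints at messages $a_1$ and $a_2$ reduce to $g(\sigma) \geq 0$ and $g(\sigma) \geq \E_p[u_r(a_1, \omega) - u_r(a_2, \omega)]$ respectively. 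Consequently, a canonical experiment $\sigma$ is obedient if and only if $g(\sigma) \geq c$, where $c := \max\{0,\, \E_p[u_r(a_1, \omega) - u_r(a_2, \omega)]\}$ is a constant depending only on primitives.

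With this representation, obedience of $\overline{\sigma}$ is immediate: $u_r(\overline{\sigma}, \tau^{*}) > u_r(\underline{\sigma}, \tau^{*})$ forces $g(\overline{\sigma}) > g(\underline{\sigma})$, and since $\lambda \in (0,1)$,
\begin{equation*}
g(\overline{\sigma}) > \lambda g(\overline{\sigma}) + (1-\lambda)\, g(\underline{\sigma}) = g(\hat{\sigma}) \geq c,
\end{equation*}
where the last inequality is obedience of $\hat{\sigma}$. Hence $\overline{\sigma}$ satisfies the obedience constraint, yielding the desired contradiction. There is no real technical obstacle; the heart of the argument is simply the observation that with $|A|=2$ both the receiver's obedient payoff and the obedience requirement are monotone in the \emph{same} one-dimensional linear functional $g$, so the receiver-preferred half of any splitting of an obedient experiment automatically inherits obedience — leaving no way for Theorem \ref{thm_necessary}'s hypotheses to be satisfied.
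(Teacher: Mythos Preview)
Your proof is correct. Both your argument and the paper's rest on the same underlying observation: with two actions, the receiver's obedient payoff $u_r(\sigma,\tau^*)$ and the obedience constraints are governed by the \emph{same} one-dimensional linear functional of $\sigma$. The paper works from the equivalent reformulation established inside the proof of Theorem~\ref{thm_necessary} (Lemma~\ref{lem:necessary_equivalent_condition}): it takes the pair $(\sigma,\sigma^*)$ with $\tau^*\in br(\sigma^*)\setminus br(\sigma)$ and, by a case split on which obedience constraint $\sigma$ violates, derives $u_r(\sigma^*,\tau^*)>u_r(\sigma,\tau^*)$, contradicting the Pareto ranking. You instead start directly from the splitting $(\overline\sigma,\underline\sigma,\lambda)$ of $\hat\sigma$ in Theorem~\ref{thm_necessary} and package the two-action structure as the half-space characterization $g(\sigma)\geq c$; since $g$ also orders the receiver's obedient payoff, the receiver-preferred $\overline\sigma$ has $g(\overline\sigma)>g(\hat\sigma)\geq c$ and is itself obedient, forcing $u_s(\overline\sigma,\tau^*)\leq u_s^{BP}$. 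Your route is a touch cleaner (no case analysis, no appeal to the intermediate lemma); the paper's route more transparently supports the economic intuition spelled out in the text, that with only two actions exploiting information requires best responding.
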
 

    An important step in the proofs of Theorem \ref{thm_necessary} and Corollary \ref{cor_binary_action} is showing that, when $\phi_{r}$ is not affine, the conditions in Theorem \ref{thm_necessary} are equivalent to the existence of Pareto-ranked experiments, $\sigma$ and $\sigma^*$ such that:  (i) $\mathrm{supp\,} \sigma(\cdot|\omega)= \mathrm{supp\,} \sigma^*(\cdot|\omega)$ for all $\omega$, (ii) $u_s(\sigma,\tau^*) >  u_s^{BP}$, and (iii) $\tau^* \in br(\sigma^*) \setminus br(\sigma)$. The argument that this formulation of the conditions are necessary is constructive, and the effective measure plays a key role. Suppose that there exists a solution $(\boldsymbol{\sigma}^*, \mu^{*})$ to the sender's program $(\mathcal{P})$ that benefits the sender. Construct $\sigma$ and $\sigma^{*}$ by letting $\sigma=\sum_{\theta} \mu^{*}_{\theta}\sigma_{\theta}^*$ and $\sigma^*=\sum_{\theta} em^{(\boldsymbol{\sigma^{*}}, \mu^{*})}_{\theta} \sigma_{\theta}^*$.
    
    The intuition for Corollary \ref{cor_binary_action} is as follows. From (iii), above, we have that part of a necessary condition for ambiguity helping the sender is the existence of an experiment $\sigma$ that strictly improves the receiver's expected payoff compared to some other experiment $\sigma^*$, with the added property that obedience of $\sigma$ is not optimal, i.e., $\tau^* \notin br(\sigma)$. Intuitively, such an improvement is possible only when $\sigma$ is more informative for the receiver and the benefit of this extra information outweighs the cost of not best responding. When there are only two actions, taking advantage of extra information requires best responding. To see this, note that not best responding implies either taking the same action always (and thus ignoring any information) or always doing the opposite of what is optimal for the receiver (which hurts more when there is more information). In contrast, when there are three or more actions, it becomes possible to have some beneficial responsiveness to information without going all the way to best responding. As we saw in the introductory example, this indeed can leave scope for possible improvements.   
    
    \section{Benefits from Binary Ambiguous Communication}\label{sec_improvement_binary_ambiguous_communication}
    
    This section restricts attention to binary ambiguous experiments. Recall that a binary ambiguous experiment is one in which the collection $\boldsymbol{\sigma}$ contains exactly two distinct experiments. This restriction is not without loss of generality because there are examples in which the sender benefitting from ambiguous communication requires a larger $\boldsymbol{\sigma}$ (see Section \ref{sec_insufficiency_binary_experiment}). Nonetheless, this restriction allows us to derive sufficient conditions for the sender to benefit from ambiguous communication, how these conditions vary with the extent of the sender's and/or receiver's ambiguity aversion, and how ambiguous communication may also improve the receiver's payoff.  
    
    If a binary ambiguous experiment benefits the sender compared to Bayesian persuasion, it follows from Theorem \ref{thm_necessary_condition_existence_of_Pareto_ranked_splitting} that the experiments must be Pareto-ranked. We therefore focus on Pareto-ranked binary ambiguous experiments in what follows.  
    
    The next theorem, Theorem \ref{general_improvement}, provides necessary and sufficient conditions for a binary ambiguous experiment based on a Pareto-ranked splitting of any obedient experiment $\sigma^{*}$ to strictly improve the sender's payoff compared to $\sigma^{*}$. Additionally, it provides necessary and sufficient conditions for such an experiment to strictly improve the receiver's payoff compared to $\sigma^{*}$. We later apply the theorem to the case in which $u_s(\sigma^{*}, \tau^{*})=u_s^{BP}$, thereby obtaining sufficient conditions for the sender to benefit from ambiguous communication (see Corollary \ref{sufficient_BP}). Proposition \ref{prop:inefficiency_improvement} provides conditions on the primitives sufficient for existence of a Pareto-ranked splitting of a given experiment and relates them to conditions on inefficiency of Bayesian persuasion.
    
    The theorem uses the following notion of \emph{probability premium}.
    
    \begin{definition}\label{def_probability_premium}
    Given $\phi$, $u$,  and experiments $\overline{\sigma}$ and $\underline{\sigma}$ such that $u(\overline{\sigma} , \tau^{*}) > u(\underline{\sigma} , \tau^{*})$, the \textbf{$((\overline{\sigma}, \underline{\sigma}), \lambda)$-probability premium} required to compensate for replacing the unambiguous experiment $\sigma^{*} := \lambda \overline{\sigma} + (1-\lambda)\underline{\sigma}$ by the ambiguous experiment $((\overline{\sigma}, \underline{\sigma}), \lambda)$, assuming obedience, is:
    \begin{align*}
        \rho^{\phi,u}((\overline{\sigma}, \underline{\sigma}), \lambda):=\frac{\phi(u(\sigma^{*}, \tau^{*})) - \lambda \phi(u(\overline{\sigma} , \tau^{*})) - (1-\lambda)\phi(u(\underline{\sigma}, \tau^{*}))}{ \phi(u(\overline{\sigma} , \tau^{*})) - \phi(u(\underline{\sigma}, \tau^{*}))}.
    \end{align*}
    \end{definition}
    
    The probability premium $\rho^{\phi,u}((\overline{\sigma}, \underline{\sigma}), \lambda)$ is exactly the $\phi$-payoff difference between $\sigma^{*}$ and the ambiguous experiment $((\overline{\sigma}, \underline{\sigma}), \lambda)$, normalized to lie in $[0,1]$. This premium is non-negative under ambiguity aversion, and is zero under ambiguity neutrality. Similar notions of probability premium in the context of risk go back to at least \citet{pratt1964} (see \citet{EECKHOUDT2015} for a graphical representation of Pratt's concept).  Thus, if we let
    \begin{equation*}
    \mu_{\overline{\sigma}} =\lambda + \rho^{\phi,u}((\overline{\sigma}, \underline{\sigma}), \lambda) \in [0,1],
    \end{equation*}
    be the probability of $\overline{\sigma}$, then
    \begin{align*}
    U((\overline{\sigma}, \underline{\sigma}), \mu, \tau^{*}) &=\phi^{-1}\Big((\lambda + \rho^{\phi,u}((\overline{\sigma}, \underline{\sigma}), \lambda))\phi(u(\overline{\sigma} , \tau^{*}))+(1-\lambda - \rho^{\phi,u}((\overline{\sigma}, \underline{\sigma}), \lambda)\phi(u(\underline{\sigma}, \tau^{*}))\Big)\\
    &=\phi^{-1}\Big(\phi(u(\sigma^{*}, \tau^{*}))\Big)=u(\sigma^{*}, \tau^{*}),
    \end{align*}
    meaning that the premium $\rho^{\phi,u}((\overline{\sigma}, \underline{\sigma}), \lambda)$ is exactly the increase in $\mu_{\overline{\sigma}}$ above $\lambda$ needed to make the player indifferent between the ambiguous experiment $((\overline{\sigma}, \underline{\sigma}), \mu)$ and $\sigma^{*}$. Thus, assuming obedience, this premium is the smallest increase in the $\mu$-probability of the higher payoff experiment required to compensate for exposure to the ambiguous experiment:
    
    \begin{lemma}\label{lem_prob-premium_improvement}
    Let $\overline{\sigma}$ and $\underline{\sigma}$ be experiments such that $u_{i}(\overline{\sigma} , \tau^{*}) > u_{i}(\underline{\sigma} , \tau^{*})$.  For all $\mu_{\overline{\sigma}},\lambda \in [0,1]$, $U_{i}((\overline{\sigma}, \underline{\sigma}), \mu, \tau^{*}) > u_{i}(\lambda \overline{\sigma} + (1-\lambda)\underline{\sigma}, \tau^{*})$ if, and only if, player i's $((\overline{\sigma}, \underline{\sigma}), \lambda)$-probability premium is strictly less than $\mu_{\overline{\sigma}} - \lambda$.
    \end{lemma}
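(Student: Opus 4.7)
The plan is to reduce the claimed equivalence to an algebraic rearrangement of the definition of $U_i$ under smooth ambiguity. Writing $\sigma^{*}:=\lambda\overline{\sigma}+(1-\lambda)\underline{\sigma}$, I first note that since $\phi_i$ is strictly increasing, the inequality $U_{i}((\overline{\sigma},\underline{\sigma}),\mu,\tau^{*}) > u_{i}(\sigma^{*},\tau^{*})$ is equivalent, via \eqref{eq:smooth_ambiguity} applied to the two-element collection, to
$$\mu_{\overline{\sigma}}\,\phi_i(u_i(\overline{\sigma},\tau^{*})) + (1-\mu_{\overline{\sigma}})\,\phi_i(u_i(\underline{\sigma},\tau^{*})) > \phi_i(u_i(\sigma^{*},\tau^{*})).$$

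Next, I would subtract $\phi_i(u_i(\underline{\sigma},\tau^{*}))$ from both sides and divide by $\phi_i(u_i(\overline{\sigma},\tau^{*})) - \phi_i(u_i(\underline{\sigma},\tau^{*}))$. The hypothesis $u_i(\overline{\sigma},\tau^{*}) > u_i(\underline{\sigma},\tau^{*})$ together with strict monotonicity of $\phi_i$ guarantees that this denominator is strictly positive, so the direction of the inequality is preserved, yielding the equivalent condition
$$\mu_{\overline{\sigma}} > \frac{\phi_i(u_i(\sigma^{*},\tau^{*})) - \phi_i(u_i(\underline{\sigma},\tau^{*}))}{\phi_i(u_i(\overline{\sigma},\tau^{*})) - \phi_i(u_i(\underline{\sigma},\tau^{*}))}.$$

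Finally, I would verify by direct computation from Definition \ref{def_probability_premium} that the right-hand side above equals $\lambda + \rho^{\phi_i,u_i}((\overline{\sigma},\underline{\sigma}),\lambda)$: putting $\rho^{\phi_i,u_i}((\overline{\sigma},\underline{\sigma}),\lambda)$ over the common denominator $\phi_i(u_i(\overline{\sigma},\tau^{*})) - \phi_i(u_i(\underline{\sigma},\tau^{*}))$ and adding $\lambda$, the $\lambda$ and $(1-\lambda)$ terms in the numerator combine with $-\lambda\phi_i(u_i(\overline{\sigma},\tau^{*})) - (1-\lambda)\phi_i(u_i(\underline{\sigma},\tau^{*}))$ to collapse to $\phi_i(u_i(\sigma^{*},\tau^{*})) - \phi_i(u_i(\underline{\sigma},\tau^{*}))$. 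Subtracting $\lambda$ from both sides of the preceding inequality then gives exactly $\mu_{\overline{\sigma}} - \lambda > \rho^{\phi_i,u_i}((\overline{\sigma},\underline{\sigma}),\lambda)$. Each step is a bi-implication, so the ``if and only if'' is established in both directions. There is no real obstacle here beyond confirming nonvanishing of the denominator, which the hypothesis secures, so the proof is essentially a clean algebraic unpacking of the definitions.
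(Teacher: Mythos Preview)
Your proof is correct and takes essentially the same approach as the paper: both arguments reduce the equivalence to a one-line algebraic rearrangement of the definition of the probability premium, using strict monotonicity of $\phi_i$ to pass between the $U_i$-level inequality and the $\phi_i$-level inequality and the positivity of $\phi_i(u_i(\overline{\sigma},\tau^{*}))-\phi_i(u_i(\underline{\sigma},\tau^{*}))$ to divide through. The only difference is the direction in which the chain of equivalences is written.
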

    
    As a consequence, we have the following result:
    
    \begin{theorem}\label{general_improvement}
    Let $\sigma^{*}$ be an obedient experiment. Suppose that $(\overline{\sigma}, \underline{\sigma}, \lambda)$ is a Pareto-ranked splitting of $\sigma^{*}$ satisfying $u_s(\overline{\sigma},\tau^*) > u_s(\underline{\sigma},\tau^*)$. 
    The binary ambiguous experiment $(\boldsymbol{\sigma},\mu)$, with $\boldsymbol{\sigma} = (\overline{\sigma}, \underline{\sigma})$ and 
    \begin{equation}\label{equ_thm_Pareto_ranked_improvement}
        \mu_{\overline{\sigma}} = \frac{\lambda \phi_{r}'(u_{r}(\underline{\sigma}, \tau^{*}))}{\lambda \phi_{r}'(u_{r}(\underline{\sigma}, \tau^{*})) + (1-\lambda) \phi_{r}'(u_{r}(\overline{\sigma}, \tau^{*}))},  
    \end{equation}
     satisfies the following properties: 
    \begin{enumerate}[(i)]
    \item $(\boldsymbol{\sigma}, \mu)$ is obedient, 
    \item $U_{r}(\boldsymbol{\sigma}, \mu, \tau^{*}) > u_{r}(\sigma^{*}, \tau^{*})$ if, and only if, $\mu_{\overline{\sigma}}>\lambda$, 
    \item $U_{s}(\boldsymbol{\sigma}, \mu, \tau^{*}) > u_{s}(\sigma^{*}, \tau^{*})$ if, and only if, the sender's $((\overline{\sigma}, \underline{\sigma}), \lambda)$-probability premium is strictly less than $\mu_{\overline{\sigma}} - \lambda$.
    \end{enumerate}
    Furthermore, the sender's $((\overline{\sigma}, \underline{\sigma}), \lambda)$-probability premium is increasing in the sender's ambiguity aversion, and $\mu_{\overline{\sigma}}$ is increasing in the receiver's ambiguity aversion.
    \end{theorem}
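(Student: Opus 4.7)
The plan is to anchor the whole argument on one pivotal computation: substituting the value of $\mu_{\overline{\sigma}}$ from \eqref{equ_thm_Pareto_ranked_improvement} into the effective-measure formula \eqref{equ_effmeasure} and cancelling common factors, which will produce the clean identity $em^{(\boldsymbol{\sigma},\mu)}_{\overline{\sigma}} = \lambda$ (and therefore $em^{(\boldsymbol{\sigma},\mu)}_{\underline{\sigma}} = 1-\lambda$). With this in hand, part (i) will follow immediately from Lemma \ref{lem_effmeasure}, since $(\boldsymbol{\sigma},\mu)$ is obedient iff $\sum_{\theta}em^{(\boldsymbol{\sigma},\mu)}_{\theta}\sigma_{\theta} = \lambda\overline{\sigma} + (1-\lambda)\underline{\sigma} = \sigma^{*}$ is obedient, which holds by hypothesis. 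Part (iii) will follow by direct application of Lemma \ref{lem_prob-premium_improvement} to the sender, using that Pareto-ranking gives $u_{s}(\overline{\sigma},\tau^{*}) > u_{s}(\underline{\sigma},\tau^{*})$.

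Part (ii) will take more work. Writing $\bar{u} = u_{r}(\overline{\sigma},\tau^{*})$, $\underline{u} = u_{r}(\underline{\sigma},\tau^{*})$, and using linearity of $u_{r}(\cdot,\tau^{*})$ to get $u_{r}(\sigma^{*},\tau^{*}) = u^{*} := \lambda\bar{u}+(1-\lambda)\underline{u}$, my strategy will be to show that both sides of the asserted equivalence are in fact equivalent to $\phi_{r}$ being strictly concave on $[\underline{u},\bar{u}]$. From the formula, $\mu_{\overline{\sigma}} - \lambda = \frac{\lambda(1-\lambda)(\phi_{r}'(\underline{u})-\phi_{r}'(\bar{u}))}{\lambda\phi_{r}'(\underline{u})+(1-\lambda)\phi_{r}'(\bar{u})}$, so $\mu_{\overline{\sigma}} > \lambda$ iff $\phi_{r}'(\underline{u}) > \phi_{r}'(\bar{u})$, which (since $\phi_{r}$ is concave and $\bar{u}>\underline{u}$) is equivalent to strict concavity of $\phi_{r}$ on $[\underline{u},\bar{u}]$. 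For the other side, I will substitute $\mu_{\overline{\sigma}}$ into $\mu_{\overline{\sigma}}\phi_{r}(\bar{u})+(1-\mu_{\overline{\sigma}})\phi_{r}(\underline{u})-\phi_{r}(u^{*})$ and write the two $\phi_{r}$-differences as integrals of $\phi_{r}'$ over $[u^{*},\bar{u}]$ and $[\underline{u},u^{*}]$; after cancelling a positive factor $\lambda(1-\lambda)(\bar{u}-\underline{u})$, the sign will reduce to that of $\phi_{r}'(\underline{u})\bar{\phi} - \phi_{r}'(\bar{u})\underline{\phi}$, where $\bar{\phi}$ and $\underline{\phi}$ denote the averages of $\phi_{r}'$ on $[u^{*},\bar{u}]$ and $[\underline{u},u^{*}]$. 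When $\phi_{r}$ is strictly concave on $[\underline{u},\bar{u}]$, strict monotonicity of $\phi_{r}'$ will give $\bar{\phi}>\phi_{r}'(\bar{u})$ and $\underline{\phi}<\phi_{r}'(\underline{u})$, so I can chain $\phi_{r}'(\underline{u})\bar{\phi} > \phi_{r}'(\underline{u})\phi_{r}'(\bar{u}) > \phi_{r}'(\bar{u})\underline{\phi}$; when $\phi_{r}$ is affine on the interval, $\phi_{r}'$ is constant and the expression vanishes. Since $\phi_{r}^{-1}$ is strictly increasing, this will yield $U_{r}(\boldsymbol{\sigma},\mu,\tau^{*}) > u_{r}(\sigma^{*},\tau^{*})$ iff $\phi_{r}$ is strictly concave on $[\underline{u},\bar{u}]$, completing the equivalence.

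For the comparative statics, I will use the standard representation that ``more ambiguity averse'' means replacing $\phi_{i}$ with $\psi\circ\phi_{i}$ for some strictly increasing, concave $\psi$. For the sender's probability premium, I will note that by Definition \ref{def_probability_premium}, $\lambda+\rho^{\phi_{s},u_{s}}((\overline\sigma,\underline\sigma),\lambda)$ is the unique weight $p$ making $p\,\phi_{s}(u_{s}(\overline\sigma,\tau^{*})) + (1-p)\,\phi_{s}(u_{s}(\underline\sigma,\tau^{*})) = \phi_{s}(u_{s}(\sigma^{*},\tau^{*}))$; applying the concavity of $\psi$ to this identity (Jensen on the two-point distribution with weight $p$) will force the analogous weight for $\psi\circ\phi_{s}$ to weakly exceed $p$, so the premium weakly increases. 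For $\mu_{\overline{\sigma}}$, I will rewrite it as $1/(1+\frac{1-\lambda}{\lambda}\cdot\phi_{r}'(\bar{u})/\phi_{r}'(\underline{u}))$ and observe that $(\psi\circ\phi_{r})'(\bar{u})/(\psi\circ\phi_{r})'(\underline{u}) = \frac{\psi'(\phi_{r}(\bar{u}))}{\psi'(\phi_{r}(\underline{u}))}\cdot\frac{\phi_{r}'(\bar{u})}{\phi_{r}'(\underline{u})} \leq \frac{\phi_{r}'(\bar{u})}{\phi_{r}'(\underline{u})}$, since $\psi'$ is non-increasing and $\phi_{r}(\underline{u})<\phi_{r}(\bar{u})$, so $\mu_{\overline{\sigma}}$ weakly increases. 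The principal obstacle will be the nontrivial direction of part (ii): showing that $\mu_{\overline{\sigma}}>\lambda$ implies $U_{r}>u_{r}(\sigma^{*})$ strictly rather than only weakly. A direct Jensen argument will not suffice because the $\mu$-weights already differ from the $\lambda$-weights; the integral-average comparison described above is the one non-routine step, and everything else will reduce to the effective-measure identity and the two pre-established lemmas.
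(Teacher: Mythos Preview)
Your approach is essentially the same as the paper's: the pivotal effective-measure computation $em^{(\boldsymbol{\sigma},\mu)}_{\overline{\sigma}}=\lambda$ followed by Lemma~\ref{lem_effmeasure} for (i), Lemma~\ref{lem_prob-premium_improvement} for (iii), and the $\psi\circ\phi$ representation for the comparative statics all match the paper exactly. For (ii) the paper instead applies Lemma~\ref{lem_prob-premium_improvement} to the receiver and runs an algebraic chain of equivalences, but this chain collapses to precisely your inequality $\phi_r'(\underline{u})\,\bar\phi > \phi_r'(\bar{u})\,\underline\phi$, so the two routes are very close.

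One technical slip in your (ii): the condition $\phi_r'(\underline{u})>\phi_r'(\bar{u})$ is equivalent to $\phi_r$ being \emph{not affine} on $[\underline{u},\bar{u}]$, not to strict concavity on the whole interval. Under mere non-affineness, $\phi_r'$ may be constant on one of $[\underline{u},u^*]$ or $[u^*,\bar{u}]$, so you cannot assert both $\bar\phi>\phi_r'(\bar{u})$ and $\underline\phi<\phi_r'(\underline{u})$ strictly. The repair is easy: concavity always gives $\bar\phi\geq\phi_r'(\bar{u})$ and $\underline\phi\leq\phi_r'(\underline{u})$, and non-affineness forces at least one of these to be strict; whichever one is strict makes the corresponding step of your chain $\phi_r'(\underline{u})\bar\phi \geq \phi_r'(\underline{u})\phi_r'(\bar{u}) \geq \phi_r'(\bar{u})\underline\phi$ strict, which suffices. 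With that adjustment your argument is complete.
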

    
    That a $\mu$ satisfying (\ref{equ_thm_Pareto_ranked_improvement}) ensures that the obedience of $\sigma^{*}$ extends to the binary ambiguous experiment $(\boldsymbol{\sigma}, \mu)$ as in (i) is a straightforward consequence of Lemma \ref{lem_effmeasure}. The necessary and sufficient conditions in (ii) for the receiver to be better off when the sender communicates ambiguously using $(\boldsymbol{\sigma}, \mu)$ rather than unambiguously using $\sigma^{*}$ require some elaboration. First, the condition $\mu_{\overline{\sigma}} > \lambda$ is equivalent to  $\phi_{r}'(u_{r}(\underline{\sigma}, \tau^{*})) > \phi_{r}'(u_{r}(\overline{\sigma}, \tau^{*}))$, i.e., the receiver is, within this range of payoffs, not everywhere ambiguity neutral. In particular, this condition is always satisfied if $\phi_r$ is strictly concave. Second, the condition $\mu_{\overline{\sigma}} > \lambda$ can be shown to be equivalent to the receiver's $((\overline{\sigma}, \underline{\sigma}), \lambda)$-probability premium being strictly less than $\mu_{\overline{\sigma}} - \lambda$, which, by Lemma \ref{lem_prob-premium_improvement}, characterizes when the receiver is better off under $(\boldsymbol{\sigma},\mu)$ than under $\sigma^{*}$. The necessary and sufficient conditions in (iii) for $(\boldsymbol{\sigma}, \mu)$ to be better for the sender than $\sigma^{*}$ follow directly from Lemma \ref{lem_prob-premium_improvement}. For an ambiguity neutral sender, the probability premium is zero, and thus the condition in (iii) reduces to $\mu_{\overline{\sigma}} > \lambda$, as in (ii). Thus, for an ambiguity neutral sender facing a strictly ambiguity averse receiver, the ambiguity introduced in $(\boldsymbol{\sigma}, \mu)$ improves on $\sigma^*$ for both sender and receiver.
    
    The source of the economic gain from ambiguous communication, for both sender and receiver, is the greater use, as measured by $\mu_{\overline{\sigma}}-\lambda$, of the Pareto-better experiment $\overline{\sigma}$. This gain has to be netted-off against the probability premium, which encapsulates the cost due to the player's own ambiguity aversion of the exposure to ambiguity from the ambiguous experiment. The fact that $\mu_{\overline{\sigma}}$ is constructed to respect obedience taking into account the receiver's ambiguity aversion but not the sender's, is what explains why the condition for this net gain to be positive can be simplified for the receiver, but not the sender. Though our result (and proof) is global in the sense that they apply to any Pareto-ranked splitting, intuition based on a local, envelope-theorem type argument may be helpful.\footnote{We thank Dilip Abreu for suggesting this line of reasoning.} The idea is that the effects on the receiver of exposure to a small amount of ambiguity around $\sigma^{*}$ are second-order because they are partially mitigated by the endogeneity of the receiver's best response which, because of ambiguity aversion, is partly designed to hedge against ambiguity and thus reduce its negative effect. In contrast, the positive effects of increasing $\mu_{\overline{\sigma}}$ are first-order. See Section \ref{sec_local_argument_receiver_gain} for a formal argument along these lines.
    
    The comparative static about $\mu_{\overline{\sigma}}$ in the final section of the theorem, when combined with (ii) and (iii), shows that the payoff difference between the ambiguous experiment $(\boldsymbol{\sigma}, \mu)$ and the unambiguous $\sigma^{*}$ satisfies single-crossing with respect to the receiver's ambiguity aversion for both the sender and receiver. Similarly, the comparative static in the sender's probability premium, together with (iii), shows that the negative of this payoff difference for the sender satisfies single-crossing with respect to the sender's ambiguity aversion.
    
    Starting from any given obedient experiment, Theorem \ref{general_improvement} provides necessary and sufficient conditions for binary ambiguous communication to strictly improve the sender's payoff, and thus sufficient conditions for \emph{some} ambiguous communication to do so. Thus, if we apply Theorem \ref{general_improvement} to the case where $\sigma^*$ is an optimal Bayesian persuasion, we obtain sufficient conditions for ambiguity to benefit the sender: 
    
    \begin{corollary}\label{sufficient_BP}
    Let $\sigma^{BP}$ be an obedient experiment such that $u_{s}(\sigma^{BP}, \tau^{*}) = u_{s}^{BP}$. If there exists a Pareto-ranked splitting of $\sigma^{BP}$, $(\overline{\sigma}, \underline{\sigma}, \lambda)$, for which $\rho^{\phi_{s},u_{s}}((\overline{\sigma}, \underline{\sigma}), \lambda)<\mu_{\overline{\sigma}}-\lambda$, with $\mu_{\overline{\sigma}}$ given by equation \eqref{equ_thm_Pareto_ranked_improvement}, then ambiguous communication benefits the sender.
    \end{corollary}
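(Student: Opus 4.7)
The plan is to apply Theorem \ref{general_improvement} directly, taking $\sigma^{*} := \sigma^{BP}$. The corollary is essentially a corollary in the literal sense: the hypothesis is designed to match precisely the characterization in parts (i) and (iii) of the theorem, once one substitutes an optimal Bayesian persuasion experiment for the reference experiment $\sigma^{*}$.

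First I would observe that the hypothesis supplies everything Theorem \ref{general_improvement} requires: $\sigma^{BP}$ is obedient, $(\overline{\sigma}, \underline{\sigma}, \lambda)$ is a Pareto-ranked splitting of $\sigma^{BP}$, and the labeling $u_{s}(\overline{\sigma},\tau^{*}) > u_{s}(\underline{\sigma},\tau^{*})$ is implicit in the definition of the sender's probability premium $\rho^{\phi_{s}, u_{s}}((\overline{\sigma}, \underline{\sigma}), \lambda)$ (Definition \ref{def_probability_premium} requires this inequality to hold). Pareto-ranking then yields the matching inequality for the receiver as well. Construct the binary ambiguous experiment $(\boldsymbol{\sigma}, \mu) = ((\overline{\sigma}, \underline{\sigma}), \mu)$ with $\mu_{\overline{\sigma}}$ given by \eqref{equ_thm_Pareto_ranked_improvement}. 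By part (i) of Theorem \ref{general_improvement}, $(\boldsymbol{\sigma}, \mu)$ is obedient, so $\tau^{*} \in BR(\boldsymbol{\sigma}, \mu)$ and the triple $((\boldsymbol{\sigma}, \mu), \tau^{*})$ is feasible in $(\mathcal{P})$.

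Next, applying part (iii) of Theorem \ref{general_improvement}, the hypothesized inequality $\rho^{\phi_{s}, u_{s}}((\overline{\sigma}, \underline{\sigma}), \lambda) < \mu_{\overline{\sigma}} - \lambda$ is exactly the necessary and sufficient condition for
\[
U_{s}(\boldsymbol{\sigma}, \mu, \tau^{*}) > u_{s}(\sigma^{BP}, \tau^{*}).
\]
Since $u_{s}(\sigma^{BP}, \tau^{*}) = u_{s}^{BP}$ by hypothesis, this yields $U_{s}(\boldsymbol{\sigma}, \mu, \tau^{*}) > u_{s}^{BP}$. Hence the value of $(\mathcal{P})$ is strictly greater than $u_{s}^{BP}$, which by Definition \ref{def_ambiguity_strictly_benefits} is precisely the statement that ambiguous communication benefits the sender.

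There is no real obstacle here; the only thing to check carefully is the translation of Theorem \ref{general_improvement}'s conclusion about strict improvement over the reference experiment $\sigma^{*}$ into improvement over the Bayesian persuasion value $u_{s}^{BP}$. This is immediate because the hypothesis fixes $u_{s}(\sigma^{BP}, \tau^{*}) = u_{s}^{BP}$, so exceeding the former is equivalent to exceeding the latter.
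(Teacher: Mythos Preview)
Your proposal is correct and takes essentially the same approach as the paper: the corollary is obtained by specializing Theorem \ref{general_improvement} to $\sigma^{*} = \sigma^{BP}$, and you have spelled out the details accurately, including the observation that the probability-premium hypothesis fixes the labeling $u_{s}(\overline{\sigma},\tau^{*}) > u_{s}(\underline{\sigma},\tau^{*})$ required by the theorem.
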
 
    \noindent Therefore, whenever a Pareto-ranked splitting of an optimal Bayesian persuasion experiment exists, an ambiguity-neutral sender benefits from ambiguous communication as long as the receiver is not completely ambiguity neutral over the payoff range of the splitting.\footnote{Recall from Remark \ref{rem_not_necessary_BP_splitting} that existence of a Pareto-ranked splitting of $\sigma^{BP}$ is not necessary to benefit.} 
    
    Theorem \ref{general_improvement} and Corollary \ref{sufficient_BP} require the existence of a Pareto-ranked splitting. This existence is not guaranteed. For instance, if the obedient experiment $\sigma^*$ induces an efficient payoff profile, no Pareto-ranked splitting of it exists. The next result provides sufficient conditions on the primitives for the existence of a Pareto-ranked splitting of $\sigma^{*}$.
    
    \begin{proposition}\label{prop:inefficiency_improvement}
        Given any experiment $\sigma^{*}$, fix, for each $\omega \in \Omega$, $a_{\omega} \in \mathop{supp}(\sigma^{*}(\cdot|\omega))$ and consider the following set of vectors, 
        \begin{equation}\label{equ:inefficiency_improvement}
            \left \{ 
            \begin{bmatrix}
            p(\omega)(u_{s}(a, \omega) - u_{s}(a_{\omega}, \omega))\\
            p(\omega)(u_{r}(a, \omega) - u_{r}(a_{\omega}, \omega))
            \end{bmatrix}: a \in \mathop{supp}(\sigma^{*}(\cdot|\omega)), \omega \in \Omega\right \}.
        \end{equation}
        If this set spans $\R^{2}$, then there exists a Pareto-ranked splitting of $\sigma^{*}$. 
    \end{proposition}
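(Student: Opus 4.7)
The plan is to build the Pareto-ranked splitting by perturbing $\sigma^{*}$ locally within its support. Concretely, I seek a signed perturbation $\delta : A \times \Omega \to \R$ that vanishes outside $\mathop{supp}(\sigma^{*}(\cdot|\omega))$, satisfies the balance condition $\sum_{a} \delta(a|\omega) = 0$ for each $\omega$, and has the property that $(u_{s}(\sigma^{*}+\epsilon\delta,\tau^{*}), u_{r}(\sigma^{*}+\epsilon\delta,\tau^{*}))$ strictly dominates $(u_{s}(\sigma^{*},\tau^{*}), u_{r}(\sigma^{*},\tau^{*}))$ for small $\epsilon>0$. Because $\sigma^{*}(a|\omega)>0$ wherever $\delta(a|\omega)\ne 0$, both $\sigma^{*}\pm\epsilon\delta$ remain stochastic matrices for small enough $\epsilon$, and taking $\overline{\sigma}:=\sigma^{*}+\epsilon\delta$, $\underline{\sigma}:=\sigma^{*}-\epsilon\delta$, and $\lambda=1/2$ yields $\lambda\overline{\sigma}+(1-\lambda)\underline{\sigma}=\sigma^{*}$ together with $u_{i}(\overline{\sigma},\tau^{*})>u_{i}(\underline{\sigma},\tau^{*})$ for both $i\in\{s,r\}$---exactly a Pareto-ranked splitting in the sense of Definition \ref{def_pareto_ranked_splitting}.

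The key step is expressing the payoff shift induced by $\delta$ as a linear combination of the vectors in \eqref{equ:inefficiency_improvement}. Direct computation gives
\begin{equation*}
u_{i}(\sigma^{*}+\epsilon\delta,\tau^{*}) - u_{i}(\sigma^{*},\tau^{*}) = \epsilon \sum_{\omega,a} p(\omega) \delta(a|\omega) u_{i}(a,\omega),
\end{equation*}
and since $\sum_{a}\delta(a|\omega) = 0$ I may subtract $u_{i}(a_{\omega},\omega)$ state-by-state without changing the sum, obtaining
\begin{equation*}
u_{i}(\sigma^{*}+\epsilon\delta,\tau^{*}) - u_{i}(\sigma^{*},\tau^{*}) = \epsilon \sum_{\omega} \sum_{a \in \mathop{supp}(\sigma^{*}(\cdot|\omega))} p(\omega)\delta(a|\omega)\bigl[u_{i}(a,\omega)-u_{i}(a_{\omega},\omega)\bigr].
\end{equation*}
Read as a vector in $\R^{2}$ indexed by $i \in \{s,r\}$, this is precisely $\epsilon$ times the linear combination, with coefficients $\delta(a|\omega)$, of the vectors appearing in \eqref{equ:inefficiency_improvement}. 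The spanning hypothesis then says that any direction in $\R^{2}$ is attainable.

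Given this, I select coefficients $\{\delta(a|\omega) : a\in\mathop{supp}(\sigma^{*}(\cdot|\omega))\setminus\{a_{\omega}\},\ \omega\in\Omega\}$ whose linear combination realizes the direction $(1,1)$, enforce balance by defining $\delta(a_{\omega}|\omega) := -\sum_{a\ne a_{\omega}} \delta(a|\omega)$, and finally pick $\epsilon>0$ small enough that both $\sigma^{*}\pm\epsilon\delta$ are stochastic matrices. The only conceptually non-routine point---and what I view as the main obstacle---is the rewriting via the balance condition that converts the spanning hypothesis (phrased in terms of differences from a reference action $a_{\omega}$) into joint controllability of the pair $(u_{s},u_{r})$. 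Once that translation is made, the rest is a routine verification that the constructed triple $(\overline{\sigma},\underline{\sigma},\lambda)$ satisfies all three clauses of Definition \ref{def_pareto_ranked_splitting}.
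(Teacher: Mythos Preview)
Your proof is correct and rests on the same linear-algebraic observation as the paper's: after substituting out the reference actions $a_{\omega}$, the map $\sigma \mapsto (u_{s}(\sigma,\tau^{*}),u_{r}(\sigma,\tau^{*}))$ is an affine map whose linear part has as columns precisely the vectors in \eqref{equ:inefficiency_improvement}, so the spanning hypothesis lets one move the payoff pair in any direction---in particular into the strictly positive orthant relative to $\sigma^{*}$. The paper packages this slightly differently: it invokes the open mapping theorem to conclude that the image of the set of same-support experiments is open in $\R^{2}$, extracts a Pareto-improving $\hat{\sigma}$, and then appeals to a separate lemma (Lemma~\ref{lem_existence_splitting}) to turn $\hat{\sigma}$ into a splitting of $\sigma^{*}$. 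Your construction is more elementary and self-contained: you pick the perturbation $\delta$ explicitly to realize the direction $(1,1)$, and the symmetric pair $\sigma^{*}\pm\epsilon\delta$ gives both halves of the splitting at once (which is exactly what Lemma~\ref{lem_existence_splitting} would produce for $\hat{\sigma}=\sigma^{*}+\epsilon\delta$ and $\lambda=1/2$). The paper's route is a touch more modular; yours avoids the open mapping theorem and the auxiliary lemma entirely.
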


    \cite{arieli2024} argue that Bayesian persuasion solutions are typically inefficient\footnote{Though \cite{ichihashi2019} proves they are always efficient when the receiver has only two actions.} and provide a necessary condition, $\sum_{\omega \in \Omega} |\mathop{supp}(\sigma(\cdot|\omega))| \leq |\Omega| + 1$, for their efficiency. If a Bayesian persuasion solution violates this condition, for a generic specification of the payoffs, $u_{i}(a,\omega)$, condition \eqref{equ:inefficiency_improvement} will have $\sum_{\omega \in \Omega} |\mathop{supp}(\sigma(\cdot|\omega))| - |\Omega| \geq 2$ non-zero and linearly independent vectors, implying the existence of a Pareto-ranked splitting of $\sigma^{BP}$ by Proposition \ref{prop:inefficiency_improvement}. By Corollary \ref{sufficient_BP}, this implies that there exists an obedient ambiguous experiment that yields more than the optimal Bayesian persuasion payoff for both the receiver and an ambiguity-neutral sender. Therefore, in generic persuasion games for which $\sigma^{BP}$ violates the necessary condition for efficiency, an ambiguity-neutral sender benefits from ambiguous communication.

    An important class of persuasion games that \cite{arieli2024} show generically violates their necessary condition for efficiency are those that are threshold environments with safe and risky actions for the receiver. In these games, there are at least three states and the same number of actions as states. The receiver's payoffs are such that there is a distinct action that is optimal in each state and the receiver wants to take this matching action when they believe that state is sufficiently likely. When no state is believed sufficiently likely, the receiver prefers the safe action. The sender's payoff is state-independent and assigns higher values to each risky action than to the safe action. Thus, our results imply that, for generic such games, a Pareto-ranked splitting of the optimal Bayesian persuasion experiment, $\sigma^{BP}$ exists, and an ambiguity-neutral sender benefits from ambiguous communication. 

    \begin{remark}[Non-necessity]\label{rem:spanning_condition_not_necessary}
        \emph{That neither the spanning condition nor violation of the \cite{arieli2024} condition are necessary for the existence of a Pareto-ranked splitting can be seen from our introductory example. For $\sigma^{*} = \sigma^{BP}$, the example satisfies neither condition but there are, as depicted in Figure \ref{fig:intro_example_2}, Pareto-ranked splittings of $\sigma^{BP}$.}
    \end{remark}
    
    While the optimal persuasion does depend on $\phi_{r}$ and $\phi_{s}$, i.e., the ambiguity attitudes, we next show that the possibility of strict improvement for the sender from using binary ambiguous experiments is robust in several respects. First, the same ambiguous experiment remains beneficial to any less ambiguity averse sender. Second, it is robust to the sender underestimating the extent of ambiguity aversion of the receiver. In other words, if an improvement is possible when facing a given receiver, it is also possible when facing a more ambiguity-averse receiver. While we show that the same collection $\boldsymbol{\sigma}$ can be used to generate the improvement for all more ambiguity-averse receivers, in general, the $\mu$ guaranteeing improvement may need to change. Part (iii) of the result shows that adding the requirement that $\underline{\sigma}$ is obedient allows a stronger robustness: the same $\mu$ that generates an improvement for the sender when facing a receiver with $\phi_{r}$ also generates an improvement when facing any more ambiguity averse receiver (more concave $\phi_{r}$). 
    
    \begin{theorem}\label{thm_receiver_more_ambi_averse}
    Suppose there exist a $\boldsymbol{\sigma}=(\overline{\sigma}, \underline{\sigma})$ and a non-degenerate $\mu $ such that $(\boldsymbol{\sigma}, \mu)$ is obedient and benefits the sender (compared to $u_{s}^{BP}$). 
    Then: 
    \begin{enumerate}
        \item[(i)] $(\boldsymbol{\sigma},\mu)$ also benefits all less ambiguity averse senders, and
        \item[(ii)] for any more ambiguity averse receiver, there exists some $\tilde{\mu}$ such that $(\boldsymbol{\sigma},\tilde{\mu})$ benefits the sender (and all less ambiguity averse senders), and
        \item[(iii)] if $\underline{\sigma}$ is obedient, then $\tilde{\mu}$ in (ii) can be set equal to $\mu$. 
    \end{enumerate}
    \end{theorem}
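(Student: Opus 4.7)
Proof plan. By Theorem~\ref{thm_necessary_condition_existence_of_Pareto_ranked_splitting}, a binary $(\boldsymbol{\sigma},\mu)$ that benefits the sender must have Pareto-ranked components, so I may label them with $u_{s}(\overline{\sigma},\tau^{*}) > u_{s}(\underline{\sigma},\tau^{*})$ and $u_{r}(\overline{\sigma},\tau^{*}) > u_{r}(\underline{\sigma},\tau^{*})$. Set $\lambda := em^{(\boldsymbol{\sigma},\mu)}_{\overline{\sigma}}$ (computed under $\phi_{r}$) and $\sigma^{*} := \lambda\overline{\sigma} + (1-\lambda)\underline{\sigma}$; Lemma~\ref{lem_effmeasure} says $\sigma^{*}$ is obedient. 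All three parts rely on this setup combined with how \eqref{equ_effmeasure} and \eqref{equ_reverse_effmeasure} transform when $\phi_{s}$ or $\phi_{r}$ is replaced by a more/less concave version.

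Part (i) is a Pratt-style argument. Writing $\phi_{s} = f\circ\tilde{\phi}_{s}$ with $f$ strictly increasing and concave, Jensen's inequality applied to $f$ on the $\mu$-distribution over $\tilde{\phi}_{s}(u_{s}(\sigma_{\theta},\tau^{*}))$, followed by $\tilde{\phi}_{s}^{-1}$ and use of $\phi_{s}^{-1}=\tilde{\phi}_{s}^{-1}\circ f^{-1}$, yields $U_{s}(\boldsymbol{\sigma},\mu,\tau^{*})$ computed under $\tilde{\phi}_{s}$ weakly above that under $\phi_{s}$. Since $u_{s}^{BP}$ does not depend on the sender's ambiguity attitudes, the benefit is preserved.

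For part (ii), take $\tilde{\phi}_{r} = g\circ\phi_{r}$ with $g$ strictly increasing and concave, and define $\tilde{\mu}$ via \eqref{equ_reverse_effmeasure} so that the effective measure under $\tilde{\phi}_{r}$ puts weight $\lambda$ on $\overline{\sigma}$. By construction the weighted experiment in Lemma~\ref{lem_effmeasure} is again $\sigma^{*}$, giving obedience of $(\boldsymbol{\sigma},\tilde{\mu})$ under $\tilde{\phi}_{r}$. The chain rule $\tilde{\phi}_{r}'=(g'\circ\phi_{r})\,\phi_{r}'$, together with $g'$ decreasing and the receiver's Pareto-ranking, gives $\tilde{\phi}_{r}'(u_{r}(\overline{\sigma},\tau^{*}))/\tilde{\phi}_{r}'(u_{r}(\underline{\sigma},\tau^{*})) \le \phi_{r}'(u_{r}(\overline{\sigma},\tau^{*}))/\phi_{r}'(u_{r}(\underline{\sigma},\tau^{*}))$; plugging this into \eqref{equ_reverse_effmeasure} forces $\tilde{\mu}_{\overline{\sigma}} \ge \mu_{\overline{\sigma}}$. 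Strict monotonicity of $U_{s}(\boldsymbol{\sigma},\cdot,\tau^{*})$ in $\mu_{\overline{\sigma}}$ (since $\phi_{s}$ is strictly increasing and $u_{s}(\overline{\sigma},\tau^{*}) > u_{s}(\underline{\sigma},\tau^{*})$) then yields $U_{s}(\boldsymbol{\sigma},\tilde{\mu},\tau^{*}) \ge U_{s}(\boldsymbol{\sigma},\mu,\tau^{*}) > u_{s}^{BP}$, and (i) applied to $(\boldsymbol{\sigma},\tilde{\mu})$ extends the benefit to all less ambiguity averse senders.

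For part (iii), assume $\underline{\sigma}$ is obedient and keep $\mu$ fixed while switching to $\tilde{\phi}_{r}$. Let $\tilde{\lambda}$ be the effective-measure weight on $\overline{\sigma}$ under $\tilde{\phi}_{r}$; the same chain-rule estimate applied directly to \eqref{equ_effmeasure} shows $\tilde{\lambda} \le \lambda$. By Lemma~\ref{lem_effmeasure}, obedience of $(\boldsymbol{\sigma},\mu)$ under $\tilde{\phi}_{r}$ reduces to obedience of $\tilde{\sigma}^{*} := \tilde{\lambda}\overline{\sigma} + (1-\tilde{\lambda})\underline{\sigma}$. The algebraic identity $\tilde{\sigma}^{*} = (\tilde{\lambda}/\lambda)\sigma^{*} + (1-\tilde{\lambda}/\lambda)\underline{\sigma}$ exhibits $\tilde{\sigma}^{*}$ as a convex combination of the two obedient experiments $\sigma^{*}$ and $\underline{\sigma}$, and the main technical step is to verify that obedience is preserved under convex combinations: for any recommendation $a$, the posterior on $\Omega$ given $a$ under a mixture is itself a convex combination of the two individual posteriors given $a$, so if $a$ is a best response under each it remains one under the mixture by linearity of expected utility. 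Since the sender's payoff does not involve $\phi_{r}$, $(\boldsymbol{\sigma},\mu)$ retains its benefit, completing the argument.
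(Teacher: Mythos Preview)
Your proof is correct and follows essentially the same route as the paper's: invoke Theorem~\ref{thm_necessary_condition_existence_of_Pareto_ranked_splitting} to get Pareto-ranking, use a Jensen argument for (i), define $\tilde{\mu}$ via \eqref{equ_reverse_effmeasure} to preserve the effective measure and show $\tilde{\mu}_{\overline{\sigma}}\ge\mu_{\overline{\sigma}}$ via the chain rule for (ii), and for (iii) observe that the new effective-measure experiment lies on the segment between $\sigma^{*}$ and $\underline{\sigma}$. The only stylistic difference is that you justify closure of obedience under convex combinations via a posterior argument, whereas the paper simply relies on the convexity of $\Sigma^{*}$ (the obedience constraints are linear in $\sigma$); both are valid.
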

    
\section{Further Discussion}\label{sec_discussion}
\subsection{What if the sender knows the resolution of ambiguity in advance?}\label{sec_cheap_talk}
We have assumed that the sender commits to an ambiguous experiment $(\boldsymbol{\sigma},\mu)$ without knowing in advance how this ambiguity will resolve. Here we consider the alternative
assumption that the sender privately observes the resolution of the source of ambiguity, $\alpha \in \mathcal{A}$, before committing to $(\boldsymbol{\sigma},\mu)$ (and this is common knowledge).
In this case, the ``type $\alpha$'' sender privately knows that
the experiment generating the message will be $\sigma_{\theta}$ for the
$\theta$ corresponding to the partition element from $\mathcal{A}$ containing $\alpha$. The receiver remains uncertain. We now
argue that, in any perfect Bayesian equilibria of this modified game,
the sender does \emph{not} benefit from ambiguous communication.\footnote{A (pure) strategy for the sender maps from $\mathcal{A}$ to ambiguous
experiments, whereas a (pure) strategy for the receiver is a mapping
from ambiguous experiments and messages to actions. Upon observing the sender's choice of an ambiguous experiment, the receiver may revise their beliefs about
the sender's private information, as in cheap-talk games. We assume that
whenever the receiver faces an \emph{unambiguous} experiment, the
receiver's action is optimal given that experiment and the realized
message, regardless of whether it is on or off-path.} First observe that the payoff of any privately informed sender must
be at least $u_{s}^{BP}$ in any equilibrium. To see this, note that
a sender can always offer an \emph{unambiguous} optimal Bayesian persuasion
experiment and, thereby, guarantee a payoff of $u_{s}^{BP}$.

One subtlety is that observing the sender's choice of $(\boldsymbol{\sigma},\mu)$ may lead the receiver to update their beliefs
about the underlying state $\alpha$ (and thus $\theta$) because senders informed of different
$\alpha$'s might make different choices in equilibrium. Letting $\beta^{\boldsymbol{\sigma}}\in\Delta(\Theta)$ denote this possibly updated belief over $\Theta$, rather than reacting to the ambiguous experiment $(\boldsymbol{\sigma},\mu)$,
the receiver views the ambiguous experiment as $(\boldsymbol{\sigma},\beta^{\boldsymbol{\sigma}})$.
Thus, the receiver best responds to $(\boldsymbol{\sigma},\beta^{\boldsymbol{\sigma}})$,
which is equivalent to best responding to an unambiguous experiment
that is a particular convex combination of the $\sigma_{\theta}$
for $\theta\in\mathop{supp}(\beta^{\boldsymbol{\sigma}})$. Denote
this unambiguous experiment by $\sum_{\theta}\zeta_{\theta}^{(\boldsymbol{\sigma},\beta^{\boldsymbol{\sigma}})}\sigma_{\theta}$.
For all senders informed of $\hat{\alpha}$ corresponding to some $\widehat{\theta} \in\Theta$
committing to $(\boldsymbol{\sigma}, \mu)$, the sender's expected payoff
is 
\begin{equation}
u_{s}\left(\sigma_{\widehat{\theta}},br\left(\sum_{\theta}\zeta_{\theta}^{(\boldsymbol{\sigma},\beta^{\boldsymbol{\sigma}})}\sigma_{\theta}\right)\right)\geq u_{s}^{BP},\label{equ_sender_payoff_learn_theta}
\end{equation}
where the inequality follows from the preceding observation that $u_{s}^{BP}$
is a lower bound on the informed sender's equilibrium payoff.

Since $\mathop{supp}(\zeta^{(\boldsymbol{\sigma},\beta^{\boldsymbol{\sigma}})})\subseteq\mathop{supp}(\beta^{\boldsymbol{\sigma}})$,
we have 
\[
u_{s}^{BP}\geq u_{s}\left(\sum_{\theta}\zeta_{\theta}^{(\boldsymbol{\sigma},\beta^{\boldsymbol{\sigma}})}\sigma_{\theta},br\left(\sum_{\theta}\zeta_{\theta}^{(\boldsymbol{\sigma},\beta^{\boldsymbol{\sigma}})}\sigma_{\theta}\right)\right)=\sum_{\widehat{\theta}}\zeta_{\widehat{\theta}}^{(\boldsymbol{\sigma},\beta^{\boldsymbol{\sigma}})}u_{s}\left(\sigma_{\widehat{\theta}},br\left(\sum_{\theta}\zeta_{\theta}^{(\boldsymbol{\sigma},\beta^{\boldsymbol{\sigma}})}\sigma_{\theta}\right)\right)\geq u_{s}^{BP},
\]
where the first inequality follows since $u_{s}^{BP}$ is the sender's
best expected payoff from any unambiguous experiment to which the
receiver best responds, the equality from linearity of $u_{s}$ in
its first argument, and the second inequality from \eqref{equ_sender_payoff_learn_theta}.
Therefore, to benefit from ambiguous communication, the sender needs
to use a source of ambiguity about which they do not have private information
at the time of committing to the ambiguous experiment.

\subsection{Robustness to Small Misperceptions}\label{sec_robust_benefit}
So far, we have assumed that if the sender designs the ambiguous experiment $(\boldsymbol{\sigma},\mu)$, the receiver perceives it correctly. More realistically,  
the receiver might have a somewhat different perception of the experiment than the one the sender intends to convey. After all, conveying the exact specifications of an experiment is a complex task, let alone of an ambiguous one. Yet, we show that if the sender benefits from ambiguous communication, they continue to benefit even if the receiver somewhat misperceives the intended experiment.  

\begin{proposition}\label{prop_robust_benefit}
    Suppose ambiguous communication benefits the sender and that the set of obedient experiments has a non-empty interior. Then, there exists a non-empty open set of obedient ambiguous experiments that benefit the sender.
\end{proposition}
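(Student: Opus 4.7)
The plan is to invoke Lemma~\ref{lem_effmeasure} to work with the induced effective experiment, perturb the hypothesized benefiting ambiguous experiment so that its effective experiment sits in the \emph{interior} of the obedient set, and then close via continuity of the effective-experiment map and of the sender's payoff. Fix an obedient ambiguous experiment $(\boldsymbol{\hat{\sigma}}, \hat{\mu})$ with $U_{s}(\boldsymbol{\hat{\sigma}}, \hat{\mu}, \tau^{*}) > u_{s}^{BP}$, and set $\bar{\sigma} := \sum_{\theta} em^{(\boldsymbol{\hat{\sigma}}, \hat{\mu})}_{\theta} \hat{\sigma}_{\theta}$, which is obedient by Lemma~\ref{lem_effmeasure} but may sit on the boundary of the obedient set. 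Using the non-empty interior assumption, fix some $\tilde{\sigma}$ in the interior of the set of obedient experiments.

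\textbf{Perturbation step.} Extend the index set to $\hat{\Theta} \cup \{0\}$, let $\sigma^{\epsilon}_{0} = \tilde{\sigma}$ and $\sigma^{\epsilon}_{\theta} = \hat{\sigma}_{\theta}$ for $\theta \in \hat{\Theta}$, and let $\mu^{\epsilon}_{\theta} = (1-\epsilon)\hat{\mu}_{\theta}$ for $\theta \in \hat{\Theta}$ with $\mu^{\epsilon}_{0} = \epsilon$. Writing $D = \sum_{\theta} \hat{\mu}_{\theta} \phi'_{r}(u_{r}(\hat{\sigma}_{\theta}, \tau^{*}))$ and $\tilde{D} = \phi'_{r}(u_{r}(\tilde{\sigma}, \tau^{*}))$, a direct substitution into \eqref{equ_effmeasure} shows that the ratios of effective weights among the original $\theta \in \hat{\Theta}$ are preserved, yielding
\[
\sum_{\theta} em^{(\boldsymbol{\sigma}^{\epsilon}, \mu^{\epsilon})}_{\theta} \sigma_{\theta}^{\epsilon} = \frac{(1-\epsilon)D}{(1-\epsilon)D + \epsilon \tilde{D}}\, \bar{\sigma} + \frac{\epsilon \tilde{D}}{(1-\epsilon)D + \epsilon \tilde{D}}\, \tilde{\sigma}.
\]
For any $\epsilon \in (0,1)$ this is a convex combination of $\bar{\sigma}$ (obedient) and $\tilde{\sigma}$ (interior obedient) with strictly positive weight on the interior point; since the obedient set is convex (carved out of experiment space by linear inequalities in $\sigma$), the combination lies in its interior. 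Lemma~\ref{lem_effmeasure} then delivers obedience of $(\boldsymbol{\sigma}^{\epsilon}, \mu^{\epsilon})$, and by continuity $U_{s}(\boldsymbol{\sigma}^{\epsilon}, \mu^{\epsilon}, \tau^{*}) > u_{s}^{BP}$ for all sufficiently small $\epsilon > 0$.

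\textbf{Opening the neighborhood and main obstacle.} On the space of ambiguous experiments indexed by $\hat{\Theta} \cup \{0\}$, both $(\boldsymbol{\sigma}, \mu) \mapsto U_{s}(\boldsymbol{\sigma}, \mu, \tau^{*})$ and, on the full-support region (where the denominator in \eqref{equ_effmeasure} is bounded away from zero), $(\boldsymbol{\sigma}, \mu) \mapsto \sum_{\theta} em^{(\boldsymbol{\sigma}, \mu)}_{\theta} \sigma_{\theta}$ are continuous. Hence $\{U_{s}(\cdot) > u_{s}^{BP}\}$ is open, and the preimage of the (open) interior of the obedient set is open; both contain $(\boldsymbol{\sigma}^{\epsilon}, \mu^{\epsilon})$ for small $\epsilon$, and by Lemma~\ref{lem_effmeasure} every ambiguous experiment in the latter preimage is obedient. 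Their intersection is the required non-empty open set of benefiting obedient ambiguous experiments. The pivotal step is the perturbation identity above: it shows that augmenting the tuple with an interior obedient experiment of arbitrarily small weight drags the effective experiment off the possibly-boundary point $\bar{\sigma}$ into the interior, which is exactly what continuity needs to produce an open neighborhood. The rest is standard soft analysis.
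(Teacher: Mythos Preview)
Your proof is correct and follows essentially the same logical structure as the paper's: push the effective experiment into the interior of the obedient set $\Sigma^{*}$ by a small perturbation, then appeal to continuity to open a neighborhood on which both obedience and strict benefit persist. The main difference is packaging. The paper works via the concavification machinery of Section~\ref{sec_optimal_ambiguous_communication}: it reparametrizes ambiguous experiments by their effective-measure splittings $\lambda \in \Delta_{\mathrm{simple}}(\Sigma)$, invokes Corollary~\ref{cor_strict_benefit_concavification} to express ``benefits the sender'' as $\mathbb{E}_{\lambda}[\Phi_{u_{s}^{BP}}(\sigma)] > 0$, and then observes that the set $\{\lambda: \mathbb{E}_{\lambda}[\sigma] \in \mathrm{int}\,\Sigma^{*},\ \mathbb{E}_{\lambda}[\Phi_{u_{s}^{BP}}(\sigma)] > 0\}$ is open and non-empty. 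You instead stay in the primitive $(\boldsymbol{\sigma}, \mu)$-coordinates, explicitly augment the tuple with an interior obedient experiment of weight $\epsilon$, and compute the resulting effective experiment directly. Your route is more elementary in that it avoids the $\Phi$-reformulation entirely and uses only Lemma~\ref{lem_effmeasure} plus convexity of $\Sigma^{*}$; the paper's route is terser but presupposes the Section~\ref{sec_optimal_ambiguous_communication} results. Both arguments pivot on the same fact: a convex combination of an obedient experiment with an interior obedient experiment (with strictly positive weight on the latter) lies in the interior.
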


A sketch of the argument (full details are in Section \ref{sec_proofs_auxiliary_results}) is to show the persuasion problem is sufficiently continuous to guarantee the existence of an open set of obedient ambiguous experiments that benefit the sender. In fact, this continues to be true even under small perturbations of $\phi_r$, so that the existence of benefits does not rest on exact knowledge of the receiver's ambiguity aversion.   

\begin{corollary}\label{coro_robust_benefit}
    Suppose ambiguous communication benefits the sender and that the set of obedient experiments has a non-empty interior. Then, there exists an ambiguous experiment that benefits the sender, and continues to do so under small perturbations of $\phi_{r}$.
\end{corollary}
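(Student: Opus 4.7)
The plan is to leverage Proposition \ref{prop_robust_benefit} together with the observation that the sender's payoff does not involve $\phi_{r}$ at all, while obedience depends on $\phi_{r}$ only through finitely many evaluations of $\phi_{r}'$. The first thing to fix is a topology on the space of admissible $\phi_{r}$'s: I would use the $C^{1}$-topology on the compact interval $I := [\min_{a,\omega}u_{r}(a,\omega), \max_{a,\omega}u_{r}(a,\omega)]$, which is fine enough to control $\phi_{r}'$ at any point in $I$ and coarse enough that perturbations remain in the strictly increasing, concave, differentiable class.

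By Proposition \ref{prop_robust_benefit}, there is a non-empty open set $\mathcal{O}$ of obedient ambiguous experiments that benefit the sender under the given $\phi_{r}$. I would first select $(\boldsymbol{\sigma}^{0}, \mu^{0}) \in \mathcal{O}$ such that the induced unambiguous experiment $\sigma^{*} := \sum_{\theta} em^{(\boldsymbol{\sigma}^{0}, \mu^{0})}_{\theta}\sigma^{0}_{\theta}$ satisfies the obedience inequalities \emph{strictly}. This uses the hypothesis that the obedient experiments set has non-empty interior: starting from any point of $\mathcal{O}$ and perturbing the tuple slightly within $\mathcal{O}$ continuously perturbs $\sigma^{*}$, so one can push $\sigma^{*}$ into the interior of the obedience set while preserving the strict sender's benefit and obedience (both hold on an open set).

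With such $(\boldsymbol{\sigma}^{0}, \mu^{0})$ fixed, both required properties become easy. The sender's payoff $U_{s}(\boldsymbol{\sigma}^{0}, \mu^{0}, \tau^{*}) = \phi_{s}^{-1}\bigl(\sum_{\theta}\mu^{0}_{\theta}\phi_{s}(u_{s}(\sigma^{0}_{\theta},\tau^{*}))\bigr)$ and the Bayesian-persuasion value $u_{s}^{BP}$ are independent of $\phi_{r}$, so the strict inequality $U_{s}(\boldsymbol{\sigma}^{0},\mu^{0},\tau^{*}) > u_{s}^{BP}$ is automatically preserved under any $\phi_{r}$-perturbation. By Lemma \ref{lem_effmeasure}, obedience of $(\boldsymbol{\sigma}^{0},\mu^{0})$ under a perturbed $\tilde\phi_{r}$ is equivalent to obedience of $\tilde\sigma^{*} := \sum_{\theta} \tilde{em}_{\theta}\sigma^{0}_{\theta}$, where $\tilde{em}_{\theta}$ is the effective measure computed with $\tilde\phi_{r}$. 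Since $\tilde{em}_{\theta}$ depends continuously on $\tilde\phi_{r}'(u_{r}(\sigma^{0}_{\theta},\tau^{*}))$, the map $\tilde\phi_{r} \mapsto \tilde\sigma^{*}$ is continuous at $\phi_{r}$ in the chosen topology. The strict obedience inequalities satisfied at $\sigma^{*}$ therefore persist for every $\tilde\phi_{r}$ in a suitable $C^{1}$-neighborhood of $\phi_{r}$, yielding the desired robustness.

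The main obstacle is the selection step: ensuring we can pick a point in $\mathcal{O}$ whose corresponding $\sigma^{*}$ is strictly obedient, rather than only weakly so. Without strict interiority, a $\phi_{r}$-perturbation could nudge $\tilde{em}$ and hence $\tilde\sigma^{*}$ across an indifference boundary and destroy obedience. The non-empty interior assumption on the obedient experiments set is precisely what rules this out; once strict interiority is arranged, the remainder is a routine continuity argument.
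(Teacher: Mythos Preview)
Your approach is correct and structurally parallel to the paper's: both invoke Proposition \ref{prop_robust_benefit} to obtain an ambiguous experiment whose effective experiment lies in $\mathrm{int}\,\Sigma^{*}$, then argue continuity in $\phi_{r}$. The paper, however, routes the argument through the $\Phi^{*}$-characterization, noting that $\mathbb{E}_{\lambda}[\Phi_{u_{s}^{BP}}(\sigma)]$ is continuous in $\phi_{r}'$ and invoking Rockafellar's Theorem 25.7 (small perturbations of a concave $\phi_{r}$ force small perturbations of $\phi_{r}'$), which lets it work in a coarser topology than $C^{1}$. Your decomposition is more elementary: since $U_{s}(\boldsymbol{\sigma}^{0},\mu^{0},\tau^{*})$ and $u_{s}^{BP}$ genuinely do not involve $\phi_{r}$, the benefit inequality survives \emph{any} perturbation of $\phi_{r}$, and only obedience requires the continuity argument. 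This is a clean simplification the paper does not exploit. One remark on your selection step: the open set produced in the \emph{proof} of Proposition \ref{prop_robust_benefit} is already characterized by $\mathbb{E}_{\lambda}[\sigma]\in\mathrm{int}\,\Sigma^{*}$, so strict interiority of $\sigma^{*}$ comes for free and no additional perturbation within $\mathcal{O}$ is needed.
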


\subsection{Robustness to Pre-Existing Ambiguity about Payoff-Relevant States}\label{sec_prior_ambiguity}
To isolate the role of ambiguous communication, we have assumed throughout that there is no pre-existing ambiguity about the payoff-relevant states. A full analysis of persuasion with ambiguous communication given arbitrary pre-existing ambiguity about payoff-relevant states is beyond the scope of this paper and is an interesting topic for future research. Nonetheless, many of the results and concepts emphasized in this paper remain relevant when there is pre-existing ambiguity. Here we establish (Theorem \ref{thm:binary_improvement_under_prior_ambiguity}) the continued role of Pareto-ranked splittings of obedient experiments in generating benefits from ambiguous communication even under pre-existing ambiguity. Additionally, the conditions under which a sender benefits from ambiguous communication are robust to the introduction of a small amount of pre-existing ambiguity (Theorem \ref{thm:sender_benefit_prior_ambiguity}).

In the context of our model, pre-existing ambiguity over the payoff-relevant state space $\Omega$ can be represented by a common subjective distribution over priors $\eta \in \Delta(\Delta(\Omega))$. Denote by $u_{i}(p, \sigma, \tau^{*})$ the obvious extension of $u_{i}(\sigma, \tau^{*})$ to allow for different priors $p \in \mathop{supp}(\eta)$. Theorem \ref{thm:binary_improvement_under_prior_ambiguity} establishes the benefit of Pareto-ranked splittings of obedient experiments where both obedience and Pareto ranking are satisfied for each $p \in \mathop{supp}(\eta)$ rather than simply the unique $p$ as in the rest of the paper.

\begin{theorem}\label{thm:binary_improvement_under_prior_ambiguity}
    Let $\sigma^{*}$ be obedient under all $p \in \mathop{supp}(\eta)$. Suppose that $\phi_{r}$ is strictly concave and $(\overline{\sigma}, \underline{\sigma}, \lambda)$ is a Pareto-ranked splitting of $\sigma^{*}$ satisfying $u_s(p, \overline{\sigma},\tau^*) > u_s(p, \underline{\sigma},\tau^*)$ for all $p \in \mathop{supp}(\eta)$ and $\underline{\sigma}$ is obedient under all $p \in \mathop{supp}(\eta)$. Then the binary ambiguous experiment $(\boldsymbol{\sigma}, \mu)$ with $\boldsymbol{\sigma} = (\overline{\sigma}, \underline{\sigma})$ and $\mu_{\overline{\sigma}}$ given by
    \begin{equation}\label{equ_thm_binary_improvement_under_prior_ambiguity}
        \mu_{\overline{\sigma}} = \min\limits_{p \in \mathop{supp}(\eta)} \frac{\lambda \phi_{r}'(u_{r}(p, \underline{\sigma}, \tau^{*}))}{\lambda \phi_{r}'(u_{r}(p, \underline{\sigma}, \tau^{*})) + (1-\lambda) \phi_{r}'(u_{r}(p, \overline{\sigma}, \tau^{*}))} > \lambda,  
    \end{equation}
    is obedient and an ambiguity neutral sender does strictly better using $(\boldsymbol{\sigma}, \mu)$ than using $\sigma^{*}$. 
\end{theorem}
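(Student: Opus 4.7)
The plan is to leverage an extension of Lemma \ref{lem_effmeasure} to pre-existing ambiguity over $\omega$ and show that the stated $\mu_{\overline{\sigma}}$ is the largest weight on $\overline{\sigma}$ consistent with prior-by-prior obedience, then exploit the prior-by-prior Pareto-ranking to conclude strict gains for an ambiguity-neutral sender. First I write the natural extension of the receiver's smooth-ambiguity evaluation to the joint, independent ambiguity over $(\theta,p)$,
\[
U_r(\boldsymbol{\sigma},\mu,\tau) = \phi_r^{-1}\!\left(\sum_{\theta}\mu_\theta \sum_{p} \eta(p) \phi_r(u_r(p,\sigma_\theta,\tau))\right),
\]
from which the first-order obedience condition reduces, for each pair $(a,a')$, to
\[
\sum_p \eta(p)\Bigl[\mu_{\overline{\sigma}}\phi_r'(u_r(p,\overline{\sigma},\tau^*))f_p(\overline{\sigma}) + \mu_{\underline{\sigma}}\phi_r'(u_r(p,\underline{\sigma},\tau^*))f_p(\underline{\sigma})\Bigr]\geq 0,
\]
where $f_p(\sigma):=\sum_\omega p(\omega)\sigma(a|\omega)[u_r(a,\omega)-u_r(a',\omega)]$ is the per-prior obedience slack. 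A sufficient condition is that each bracket is nonnegative.

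For each $p$, the bracket is proportional to $\beta(p)f_p(\overline{\sigma})+(1-\beta(p))f_p(\underline{\sigma})$ for the per-prior effective weight $\beta(p):=\mu_{\overline{\sigma}}\phi_r'(u_r(p,\overline{\sigma},\tau^*))/[\mu_{\overline{\sigma}}\phi_r'(u_r(p,\overline{\sigma},\tau^*))+\mu_{\underline{\sigma}}\phi_r'(u_r(p,\underline{\sigma},\tau^*))]$. The set of experiments obedient under a fixed $p$ is an intersection of linear half-spaces and hence convex. Since, by hypothesis, both $\sigma^*=\lambda\overline{\sigma}+(1-\lambda)\underline{\sigma}$ and $\underline{\sigma}$ are obedient under every $p\in\mathrm{supp}(\eta)$, every mixture $\beta\overline{\sigma}+(1-\beta)\underline{\sigma}$ with $\beta\in[0,\lambda]$ is also obedient under every such $p$. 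It therefore suffices to verify $\beta(p)\leq\lambda$ pointwise. Rearranging algebraically, $\beta(p)\leq\lambda$ is equivalent to $\mu_{\overline{\sigma}}\leq \lambda\phi_r'(u_r(p,\underline{\sigma},\tau^*))/[\lambda\phi_r'(u_r(p,\underline{\sigma},\tau^*))+(1-\lambda)\phi_r'(u_r(p,\overline{\sigma},\tau^*))]$, which is precisely the expression being minimized to define $\mu_{\overline{\sigma}}$. Setting $\mu_{\overline{\sigma}}$ equal to the minimum over $p$ makes the binding constraint hold at the worst-case $p$ and slack elsewhere, delivering obedience of $(\boldsymbol{\sigma},\mu)$.

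For the strict inequality $\mu_{\overline{\sigma}}>\lambda$, the prior-by-prior receiver preference $u_r(p,\overline{\sigma},\tau^*)>u_r(p,\underline{\sigma},\tau^*)$ at each $p\in\mathrm{supp}(\eta)$ that comes with the Pareto-ranked splitting in this context, combined with strict concavity of $\phi_r$, gives $\phi_r'(u_r(p,\underline{\sigma},\tau^*))>\phi_r'(u_r(p,\overline{\sigma},\tau^*))$ and hence every term inside the $\min$ strictly exceeds $\lambda$; compactness of $\mathrm{supp}(\eta)$ preserves strictness under the minimum. An ambiguity-neutral sender evaluates $(\boldsymbol{\sigma},\mu)$ and $\sigma^*$ as linear interpolations between $\sum_p\eta(p)u_s(p,\overline{\sigma},\tau^*)$ and $\sum_p\eta(p)u_s(p,\underline{\sigma},\tau^*)$ with weights $\mu_{\overline{\sigma}}$ and $\lambda$, respectively; since the prior-by-prior sender preference $u_s(p,\overline{\sigma},\tau^*)>u_s(p,\underline{\sigma},\tau^*)$ yields a strict $\eta$-expected gap, shifting weight from $\lambda$ up to $\mu_{\overline{\sigma}}$ strictly raises the sender's payoff.

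The main obstacle I anticipate is pinning down the correct formulation of obedience under pre-existing ambiguity and recognizing that checking obedience prior-by-prior is both a sufficient route (via the convexity-plus-effective-measure argument above) and exactly the route that dovetails with the $\min_p$ structure of $\mu_{\overline{\sigma}}$. Once this structural observation is made, the argument is essentially a uniform-in-$p$ version of Theorem \ref{general_improvement}: the single-prior effective weight is replaced by $\beta(p)$, and the minimum-over-$p$ definition of $\mu_{\overline{\sigma}}$ supplies the uniformity needed to maintain obedience at every $p$ simultaneously while still beating $\lambda$.
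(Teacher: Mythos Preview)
Your proposal is correct and follows essentially the same approach as the paper: defining the per-prior effective weight $\beta(p)$, showing that the choice of $\mu_{\overline{\sigma}}$ as the minimum over $p$ forces $\beta(p)\leq\lambda$ for every $p$, and then using convexity of the set of obedient experiments (both $\sigma^*$ and $\underline{\sigma}$ obedient under each $p$) to conclude obedience prior-by-prior and hence under $\eta$. The only cosmetic difference is that the paper packages the ``prior-by-prior obedience implies obedience under $\eta$'' step into a separate lemma about convexity of the set $\Pi^*$ of obedient joint distributions, whereas you verify it directly by summing the nonnegative per-$p$ brackets in the first-order condition.
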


\begin{theorem}\label{thm:sender_benefit_prior_ambiguity}
    Suppose an ambiguity neutral sender benefits from ambiguous communication when the prior is $p \in \Delta(\Omega)$ and that the set of distributions over states and actions for which obedience is a best response for the receiver has a non-empty interior. Then, the sender continues to benefit from ambiguous communication under any small enough pre-existing ambiguity, $\eta$, around $p$. Specifically, there exists $\delta > 0$ such that the sender benefits from ambiguous communication whenever $\mathbb{E}[\eta] = p$ and $\mathop{supp}(\eta) \subseteq \{q \in \Delta(\Omega): \Vert q - p \Vert < \delta\}$. 
\end{theorem}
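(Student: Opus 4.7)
The plan is a continuity/robustness argument that leverages the sender's ambiguity neutrality (together with linearity of $u_s$ in the prior) to show that introducing a small amount of pre-existing ambiguity $\eta$ around $p$ neither destroys obedience of a strictly-beneficial ambiguous experiment at $p$ nor raises the unambiguous benchmark enough to wipe out the gap. First I would apply Proposition \ref{prop_robust_benefit} (together with the non-empty interior assumption) at prior $p$ to obtain an ambiguous experiment $(\boldsymbol{\sigma}^0, \mu^0)$ that strictly benefits the sender at $p$ and lies in the interior of the set of obedient ambiguous experiments, so every relevant receiver-IC inequality at $(\boldsymbol{\sigma}^0, \mu^0)$ holds with strict slack when evaluated at prior $p$. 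Denote the gap $G := U_s(p, \boldsymbol{\sigma}^0, \mu^0, \tau^*) - u_s^{BP}(p) > 0$.

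Next I would verify two facts for $\eta$ with $\mathbb{E}[\eta]=p$ and $\mathop{supp}(\eta) \subseteq \{q : \Vert q - p \Vert < \delta\}$. First, the receiver's smooth-ambiguity evaluation
\[
\phi_r^{-1}\!\left(\sum_\theta \mu^0_\theta \int \phi_r(u_r(p', \sigma^0_\theta, \tau))\, d\eta(p')\right)
\]
is jointly continuous in $\tau$ and in $\eta$ (weak*), so the strict slack at $\eta=\delta_p$ implies existence of $\delta_1>0$ such that $(\boldsymbol{\sigma}^0, \mu^0)$ remains obedient whenever $\mathop{supp}(\eta) \subseteq B_{\delta_1}(p)$. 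Second, by ambiguity neutrality of the sender and linearity of $u_s$ in the prior,
\[
U_s(\eta, \boldsymbol{\sigma}^0, \mu^0, \tau^*) = \sum_\theta \mu^0_\theta \int u_s(p', \sigma^0_\theta, \tau^*)\, d\eta(p') = \sum_\theta \mu^0_\theta u_s(p, \sigma^0_\theta, \tau^*) = U_s(p, \boldsymbol{\sigma}^0, \mu^0, \tau^*),
\]
so the sender's payoff from $(\boldsymbol{\sigma}^0, \mu^0)$ is invariant across all such $\eta$.

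Third, I would establish upper semicontinuity at $\delta_p$ of the unambiguous-persuasion benchmark $u_s^{BP}(\eta)$, namely the supremum over unambiguous $\sigma$ obedient under $\eta$ of $\int u_s(p', \sigma, \tau^*)\, d\eta(p') = u_s(p, \sigma, \tau^*)$ (again by linearity). For any sequence $\eta_n \to \delta_p$ and near-maximizers $\sigma_n$, compactness of the space of experiments gives a convergent subsequence $\sigma_n \to \sigma^\infty$; continuity of the obedience constraint in $(\sigma, \eta)$ implies $\sigma^\infty$ is obedient at $p$, so $\limsup_n u_s^{BP}(\eta_n) \leq u_s^{BP}(p)$. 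Hence there exists $\delta_2 \in (0, \delta_1]$ such that $u_s^{BP}(\eta) < u_s^{BP}(p) + G/2$ whenever $\mathop{supp}(\eta) \subseteq B_{\delta_2}(p)$, and then
\[
U_s(\eta, \boldsymbol{\sigma}^0, \mu^0, \tau^*) = u_s^{BP}(p) + G > u_s^{BP}(p) + G/2 > u_s^{BP}(\eta),
\]
establishing strict benefit under $\eta$, with $\delta := \delta_2$.

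The main obstacle is the upper semicontinuity of $u_s^{BP}(\eta)$ at $\delta_p$. Unlike the sender's objective, which is trivially invariant in $\eta$ once the expectation is fixed, the constraint set $\{\sigma : \sigma\text{ obedient under }\eta\}$ depends non-trivially on $\eta$ through the receiver's (possibly strictly concave) $\phi_r$, and the maximizer may sit on its boundary. The fix is to use the continuity of the obedience condition in $\eta$ on a compact domain of experiments plus a standard maximum-theorem/subsequence extraction as above; the non-empty interior assumption ensures the analogous obedience slack is available for $(\boldsymbol{\sigma}^0, \mu^0)$ so that Step 1 produces an experiment robust to perturbations in $\eta$.
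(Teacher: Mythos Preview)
Your approach is correct and broadly parallel to the paper's, but with two differences worth noting.

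For obedience of $(\boldsymbol{\sigma}^0,\mu^0)$ under $\eta$, the paper takes a more structural route than your direct continuity-in-$\eta$ argument. It first establishes a decomposition lemma (Lemma~\ref{lem:obedience_sufficient_condition}): if an ambiguous experiment is obedient under each $q\in\mathop{supp}(\eta)$, then it is obedient under $\eta$ (because $\pi^{(\eta,(\boldsymbol{\sigma},\mu))}$ is a convex combination of the $\pi^{(q,(\boldsymbol{\sigma},\mu))}$ and $\Pi^*$ is convex). So the paper only needs to check that $\pi^{(q,(\tilde{\boldsymbol{\sigma}},\tilde{\mu}))}\in\Pi^*$ for every $q$ in a $\delta$-ball around $p$, which follows immediately from having perturbed into $\mathrm{int}(\Pi^*)$ and continuity of $q\mapsto\pi^{(q,(\tilde{\boldsymbol{\sigma}},\tilde{\mu}))}$. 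Both arguments work; the paper's buys a cleaner reduction to a prior-by-prior check, while yours avoids stating the auxiliary lemma.

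More interestingly, your Step~3---the upper semicontinuity of $u_s^{BP}(\eta)$ at $\delta_p$---addresses a point the paper's proof leaves implicit. The paper's argument only shows $U_s(\eta,\tilde{\boldsymbol{\sigma}},\tilde{\mu},\tau^*)=U_s(p,\tilde{\boldsymbol{\sigma}},\tilde{\mu},\tau^*)>u_s^{BP}(p)$ and then declares the sender ``continues to benefit'', effectively taking the unambiguous benchmark under $\eta$ to be $u_s^{BP}(p)$. Your compactness/limit-point argument makes this rigorous: any sequence of experiments obedient under $\eta_n$ with shrinking support has a subsequential limit that is obedient at $p$ (closed graph of the obedience correspondence), so $\limsup u_s^{BP}(\eta_n)\le u_s^{BP}(p)$. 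That is a genuine tightening of the paper's write-up, and your identification of it as ``the main obstacle'' is well placed.
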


For both theorems, similar results continue to hold as long as the sender is not too ambiguity averse. 

    \subsection{Concavification}
    The concavification characterization of Bayesian persuasion in \cite{kamenica2011bayesian} and its extension to allow for exogenously heterogeneous priors in \cite{alonso2016bayesian} provide common methods of solving persuasion problems. Is a concavification-like characterization possible in our setting? In Appendix \ref{sec_proof_properties_optimal_ambiguous_communication}, Corollary \ref{cor_optimal_persuasion_concavification_experiment_space} gives such a result. However, the splittings in Corollary \ref{cor_optimal_persuasion_concavification_experiment_space} are on the higher-dimensional space of experiments, rather than on the space of priors as in \cite{kamenica2011bayesian} and \cite{alonso2016bayesian}. We now explain why this is the case.
    
    Suppose we try to write a program in which the sender maximizes with respect to splittings of the prior. Consider the simplest case of an ambiguity neutral sender, i.e., $\phi_s$ linear. Any ambiguous experiment $(\mu_{\theta},\sigma_{\theta})_{\theta \in \Theta}$ induces a distribution over the receiver's \emph{effective} posteriors, that is,  the posteriors that the ``effective'' experiment, $\sum_{\theta}\lambda_{\theta}\sigma_{\theta}$, induces, where $\lambda_{\theta} = em^{(\boldsymbol{\sigma}, \mu)}_{\theta}$, the effective measure. Thus, the splitting the ``effective''  experiment, $\sum_{\theta}\lambda_{\theta}\sigma_{\theta}$, induces may differ from the splitting the experiment $\sum_{\theta} \mu_{\theta}\sigma_{\theta}$ induces. The latter is the one the ambiguity neutral sender uses to evaluate their payoff. To be amenable to a concavification approach on this space, the sender's objective function would therefore need to be, as in \cite{alonso2016bayesian}, an increasing transformation of a function that is linear in the distribution over the receiver's effective posteriors. However, since the relationship between the $em^{(\boldsymbol{\sigma}, \mu)}_{\theta}$ and $(\mu_{\theta},\sigma_{\theta})_{\theta \in \Theta}$ is highly non-linear, this is impossible. Economically, this non-linearity has its source in the fact that ambiguity aversion  causes the effective measure to be proportional to the product of the ambiguity neutral probability $\mu_{\theta}$ and the marginal utility  $\phi'_{r}(u_{r}(\sigma_{\theta}, \tau^{*}))$ and the latter is non-separable across action recommendations. Thus, it is this non-separability across action recommendations in determining obedience under ambiguity aversion that prevents adopting the strategies of \cite{kamenica2011bayesian} and \cite{alonso2016bayesian}. 
        
    \subsection{Related Literature}\label{sec_related_literature}

        In addition to the papers cited in the introduction, the following are also at the intersection of Bayesian persuasion (BP) and ambiguity. \citet{kosterina2022persuasion} studies BP when an MEU sender is ambiguous about the receiver's prior, while in \citet{dworczak2022robustbp} an MEU sender (who also has a preference for selecting among MEU-optimal strategies those that perform best under a baseline conjecture) is ambiguous about the exogenous information a receiver might learn.\footnote{\citet{dworczak2022robustbp}'s model is not restricted to single-receiver persuasion settings.}  \citet{NIKZAD2021144} studies BP when the receiver is MEU and has ambiguity about the prior over states. \citet{hedlund2020persuasion} studies BP in problems with two states and two actions, when the receiver has $\alpha$-MEU preferences \citep{ghirardato2004differentiating} and considers an interval of priors and the sender has state-independent preferences over the action taken by the agent and is ambiguity neutral. In all of these papers, the sender is limited to standard, unambiguous experiments, and thus any ambiguity is exogenous. This stands in contrast to the endogeneity of ambiguity in our setting, where it becomes payoff-relevant only through the intentional communication choices of the sender.
    
        \citet{kellner2018endogenous} study cheap talk communication assuming that the receiver has MEU preferences and the sender can choose to communicate ambiguously. The key difference between cheap talk and persuasion is the sender's inability to commit to a communication strategy. Their receiver uses the same dynamically inconsistent update rule as in BLL. They find that both sender and receiver may benefit from the sender choosing to communicate ambiguously. \citet{kellner2017exogenous} studies cheap talk communication with purely exogenous ambiguity.
    
        Papers studying mechanism design with ambiguity include \citet{bose2014mechanism}, \citet{wolitzky2016mechanism}, \citet{ditillio2017design}, \citet{guo2019mechanism} and \citet{tang2021maxmin}, among others. All but \citet{wolitzky2016mechanism} consider ambiguity that arises intentionally through design of the mechanism. \cite{dutting2024ambiguous} allow a principal to offer ambiguous contracts to an MEU agent and show how the principal may benefit and that optimal contracts have a simple form. All gains from ambiguous contracting disappear in their model if the agent can hedge against ambiguity by randomizing. 

        We conclude by returning to the discussion of BLL begun in the introduction. Broadly speaking, the gains we identify work through key properties, such as Pareto-ranking of (at least some) experiments in the collection $\boldsymbol{\sigma}$ chosen by the sender. Such properties contrast sharply with the ``synonym'' constructions emphasized in BLL that lead to collections in which each experiment yields the same expected payoff to the sender. BLL and our approach also lead to different outcomes. For example, our Corollary \ref{cor_binary_action} shows that ambiguous communication never benefits the sender when the receiver has only two actions. In contrast, BLL find gains from ambiguous communication in such cases, including their main example. Conversely, there are examples in which there is no benefit for the sender according to BLL's approach (even when extended to include sender preferences less extremely ambiguity averse than $U_{s}^{MEU}$), in which the sender benefits from ambiguous communication in our approach.  
        
        As previously mentioned, the benefits from ambiguous communication in BLL involve in an essential way the receiver behaving suboptimally with respect to their ex-ante preferences as specified by BLL. \cite{pahlke2022dynamic} uses constructions based on rectangularity \citep{epstein2003recursive} to construct alternative ex-ante MEU preferences (different from BLL and from $U_{r}^{MEU}$) that are consistent with the receiver's interim behavior in BLL. When there are gains in BLL from ambiguous communication, some of the measures in Pahlke's construction must reflect correlation between $\Omega$ and which experiment from the ambiguous collection generates the messages. This is the manifestation of the dynamic inconsistency in BLL within the dynamically consistent reformulation of \cite{pahlke2022dynamic}.\footnote{For discussion and approaches to dynamic consistency issues in decision-making under ambiguity more broadly see e.g., \citet{klibanoff2007updating, hanany2009updating} and \citet{siniscalchi2011dynamic}. See \citet{hanany2020incomplete} on dynamic games of incomplete information with ambiguity averse players.}

        %\newpage 
        \appendix
        \appendixpage
    
        \numberwithin{definition}{section}
        \numberwithin{theorem}{section}
        \numberwithin{lemma}{section}
        \numberwithin{proposition}{section}
        \numberwithin{equation}{section}
       
      \section{Proofs of Main Results}\label{sec_proof_results}
        
      \subsection{Proofs for Section \ref{sec:model}}
        \begin{proof}[Proof of Proposition \ref{rev-principle}] 
            Fix $(\boldsymbol{\sigma}, \mu)$ and $\tau \in BR(\boldsymbol{\sigma}, \mu)$. Construct a canonical ambiguous experiment $(\boldsymbol{\sigma}^{*}, \mu)$ as follows: For each $\theta$, define $\sigma_{\theta}^{*}(a|\omega) = \sum_m \tau(a|m)\sigma_{\theta}(m|\omega)$, for all $(a, \omega)$, and let $\boldsymbol{\sigma}^{*} = (\sigma^{*}_{\theta})_{\theta}$. For $i \in \{s,r\}$, $u_i(\sigma_{\theta},\tau) = \sum_{\omega, a}p(\omega)u_i(a,\omega) \sum\limits_{m}\tau(a|m)\sigma_{\theta}(m|\omega) = u_i(\sigma^*_{\theta},\tau^*)$. Therefore $U_{i}(\boldsymbol{\sigma}^{*}, \mu, \tau^{*}) = U_{i}(\boldsymbol{\sigma}, \mu, \tau)$, for $i \in \{s, r\}$. 
            
            Suppose $(\boldsymbol{\sigma}^{*}, \mu)$ is not obedient. Then there exists $\delta: A \rightarrow \Delta(A)$ such that $U_{r}(\boldsymbol{\sigma}^{*}, \mu, \tau^{*}) < U_{r}(\boldsymbol{\sigma}^{*}, \mu, \delta)$. Then, define $\tau'(a|m)= \sum_{a'}\delta(a|a') \tau(a'|m)$ for all $(a,m)$, and note that for all $\theta$,
            \begin{align*}
                u_r(\sigma_{\theta},\tau')  & = \sum_{\omega,a}p(\omega)u_r(a,\omega)\sum_{a'} \delta(a|a')\sum_{m}\tau(a'|m)\sigma_{\theta}(m|\omega) \\
                & = \sum_{\omega, a, a'}p(\omega)u_r(a,\omega)\delta(a|a')\sigma_{\theta}^{*}(a'|\omega) = u_{r}(\sigma_{\theta}^{*}, \delta).
            \end{align*} 
            Thus, $U_{r}(\boldsymbol{\sigma}, \mu, \tau') = U_{r}(\boldsymbol{\sigma}^{*}, \mu, \delta) > U_{r}(\boldsymbol{\sigma}^{*}, \mu, \tau^{*}) = U_{r}(\boldsymbol{\sigma}, \mu, \tau)$, so $\tau \notin BR(\boldsymbol{\sigma}, \mu)$. 
        \end{proof}	

        \begin{proof}[Proof of Lemma \ref{lem_effmeasure}]
        \textbf{IF.} Let $(\boldsymbol{\sigma},\mu)$ be an ambiguous experiment. We show that if $\tau^* \in br(\sigma^*)$, where $\sigma^{*} = \sum_{\theta} em^{(\boldsymbol{\sigma}, \mu)}_{\theta} \sigma_{\theta}$, then $\tau^* \in BR(\boldsymbol{\sigma},\mu)$. Since $\tau^* \in br(\sigma^*)$, 
            \begin{equation}\label{equ_lem_effmeasure}
                \sum\limits_{\omega}p(\omega)\sigma^{*}(a|\omega)u_{r}(a,\omega) \geq \sum\limits_{\omega}p(\omega)\sigma^{*}(a|\omega)u_{r}(b,\omega), \quad \forall a\neq b \in A.
            \end{equation}  
        For any strategy $\tau$, there exists $\delta \in \mathbb{R}^{|A| \times |A|}$ such that $\tau=\tau^{*} + \delta$, where $\delta$ satisfies: 
            \begin{equation}\label{equ_lem_effmeasure_feasibility}
                \forall a\neq b \in A, \quad \delta(b|a) \geq 0, \quad \delta(a|a) \leq 0, \quad \text{ and } \sum\limits_{\tilde{a} \in A}\delta(\tilde{a}|a) = 0
            \end{equation}        
      The concavity of $\phi_{r}$ implies that $\phi_{r}(U_{r}(\boldsymbol{\sigma}, \mu, \tau))$ is concave in $\tau$. Hence, for all $\delta$, 
            \begin{align*}
                \phi_{r}(U_{r}(\boldsymbol{\sigma}, \mu, \tau^{*} + \delta)) \leq \phi_{r}(U_{r}(\boldsymbol{\sigma}, \mu, \tau^{*})) + \sum\limits_{b,a\in A} \left.\frac{\partial \phi_{r}(U_{r}(\boldsymbol{\sigma}, \mu, \tau))}{\partial \tau(b|a)}\right|_{\tau = \tau^{*}}\delta(b|a).
            \end{align*}
            For $\tau^{*}$ to be a solution to the receiver's program, it suffices that for all $\delta$ satisfying \eqref{equ_lem_effmeasure_feasibility},
            \begin{equation}\label{lem_effmeasure_inequality}
                \sum\limits_{b,a\in A} \left.\frac{\partial \phi_{r}(U_{r}(\boldsymbol{\sigma}, \mu, \tau))}{\partial \tau(b|a)}\right|_{\tau = \tau^{*}}\delta(b|a)  \leq 0.
            \end{equation}
            To show \eqref{lem_effmeasure_inequality} holds, note that $\left.\frac{\partial \phi_{r}(U_{r}(\boldsymbol{\sigma}, \mu, \tau))}{\partial \tau(b|a)}\right|_{\tau = \tau^{*}} = \sum_{\tilde{\theta}}\mu_{\tilde{\theta}} \phi^{'}_r(u_r(\sigma_{\tilde{\theta}},\tau^*))\sum\limits_{\omega}p(\omega)\sigma^{*}(a|\omega)u_{r}(b,\omega)$. Then by \eqref{equ_lem_effmeasure_feasibility}, $-\delta(a|a) = \sum\limits_{b \neq a}\delta(b|a)$, and we have: 
            \begin{align*}
                &\frac{\sum\limits_{b,a\in A} \left.\frac{\partial \phi_{r}(U_{r}(\boldsymbol{\sigma}, \mu, \tau))}{\partial \tau(b|a)}\right|_{\tau = \tau^{*}}\delta(b|a)}{\sum_{\tilde{\theta}}\mu_{\tilde{\theta}} \phi^{'}_r(u_r(\sigma_{\tilde{\theta}},\tau^*)) }  = \sum\limits_{b,a\in A} \sum\limits_{\omega}p(\omega)\sigma^{*}(a|\omega)u_{r}(b,\omega)\delta(b|a) \\
                & = \sum\limits_{a \in A}\sum\limits_{b \neq a} \delta(b|a) \left(\sum\limits_{\omega}p(\omega)\sigma^{*}(a|\omega)u_{r}(b,\omega) - \sum\limits_{\omega}p(\omega)\sigma^{*}(a|\omega)u_{r}(a,\omega)\right) \leq 0,
            \end{align*}
            where the last inequality follows from $\delta(b|a) \geq 0$ for all $b\neq a$ and \eqref{equ_lem_effmeasure}. This implies \eqref{lem_effmeasure_inequality} as $\sum_{\tilde{\theta}}\mu_{\tilde{\theta}} \phi^{'}_r(u_r(\sigma_{\tilde{\theta}},\tau^*)) > 0$. Therefore, we have shown that $\tau^{*}\in BR(\boldsymbol{\sigma}, \mu)$. 

            \noindent\textbf{ONLY IF.} The proof is nearly identical to the if direction and left to the reader.   \end{proof}
        
        \subsection{Proofs for Section \ref{sec_properties_optimal_ambiguous_communication}}\label{sec_proof_properties_optimal_ambiguous_communication}
        
        The proofs for Section \ref{sec_properties_optimal_ambiguous_communication} make use of a concavification-like characterization of optimal persuasion that we provide here (Proposition \ref{thm_optimal_persuasion_characterization_experiment_space}).  

        Let $\Sigma$ denote the set of all experiments and $\Sigma^{*} \subseteq \Sigma$  the set of obedient experiments (i.e., $\Sigma^{*} := \{\sigma \in \Sigma: \tau^{*} \in br(\sigma)\}$). Both $\Sigma$ and $\Sigma^{*}$ are non-empty convex sets and can be embedded into an $|\Omega| \times (|A|-1)$-dimensional Euclidean space since an experiment specifies, for each $\omega \in \Omega$, a probability distribution over $A$. For each scalar $u \in \R$, define 
        \begin{equation*}
            \Phi_{u}(\sigma) := \frac{\phi_{s}(u_{s}(\sigma, \tau^{*})) - \phi_{s}(u)}{\phi_{r}'(u_{r}(\sigma, \tau^{*}))},
        \end{equation*}
      and consider the following maximization problem: 
        \begin{equation}\label{equ_program_Phi_star}
            (\Phi^{*}(u)) := 
            \left\{
            \begin{array}{l}
                \max\limits_{(\lambda_{\theta}, \sigma_{\theta})_{\theta \in \Theta}} \sum_{\theta \in \Theta} \lambda_{\theta} \Phi_{u}(\sigma_{\theta}), \\
                \text{s.t.} \sum_{\theta \in \Theta}\lambda_{\theta}\sigma_{\theta} \in \Sigma^{*}, \sum_{\theta \in \Theta} \lambda_{\theta} = 1, \lambda_{\theta} \geq 0, \sigma_{\theta} \in \Sigma, \forall \theta \in \Theta.
            \end{array}
            \right.
        \end{equation}
         
        Proposition \ref{thm_optimal_persuasion_characterization_experiment_space} states that the value of the optimal ambiguous persuasion program $(\mathcal{P})$ is the unique utility level $u$ such that the value of the program $(\Phi^{*}(u))$ is equal to zero. An optimal ambiguous persuasion strategy can be directly constructed from a solution to $(\Phi^{*}(u))$, and there always exists such an optimal strategy that makes use of no more than $|\Omega| \times (|A|-1) + 1$ experiments. 
    
        \begin{proposition}\label{thm_optimal_persuasion_characterization_experiment_space}
            The value of $(\mathcal{P})$ is $u$ if, and only if, the value of $(\Phi^{*}(u))$ is $0$. Moreover, there exists a solution $(\boldsymbol{\sigma}^{*}, \mu^{*})$ to $(\mathcal{P})$ such that $|\text{supp }(\mu^{*})| \leq |\Omega| \times (|A|-1) + 1$. 
        \end{proposition}

        \begin{remark}
        \emph{An implication of Proposition \ref{thm_optimal_persuasion_characterization_experiment_space} is to provide a concavification-like characterization \citep{aumann1966game} of the value of optimal persuasion with ambiguous communication. Concavification can be used to compute the value of program $(\Phi^{*}(u))$: For each $u \in \R$, the program $(\Phi^{*}(u))$ maximizes over convex combinations of points on the graph of $\Phi_{u}$, exactly the type of program that concavification characterizes. Specifically, for each $u \in \R$, let $\mathrm{cav}{\Phi}_{u}: \Sigma \rightarrow \R$ denote the concavification of $\Phi_{u}$, that is,}
            \begin{equation*}
            \mathrm{cav}{\Phi}_{u}(\sigma)=
            \left\{
            \begin{array}{l}
                \max\limits_{(\lambda_{\theta}, \sigma_{\theta})_{\theta \in \Theta}} \sum_{\theta \in \Theta} \lambda_{\theta} \Phi_{u}(\sigma_{\theta}), \\
                \text{subject to:} \sum_{\theta \in \Theta}\lambda_{\theta}\sigma_{\theta} = \sigma, \sum_{\theta} \lambda_{\theta} = 1, \lambda_{\theta} \geq 0 , \sigma_{\theta} \in \Sigma, \forall \theta \in \Theta,
            \end{array}
            \right.
        \end{equation*}
        \emph{and the maximum over $\sigma \in \Sigma^{*}$ of $\mathrm{cav}{\Phi}_{u}(\sigma)$ is the value of $(\Phi^{*}(u))$. Any such maximum is achieved by a splitting of some obedient experiment, with the splitting weights given by the effective measure. The following immediate corollary of Proposition \ref{thm_optimal_persuasion_characterization_experiment_space} thus provides a concavification-like characterization of the value of $(\mathcal{P})$.}
        \end{remark}
        \begin{corollary}\label{cor_optimal_persuasion_concavification_experiment_space}
            The value of $(\mathcal{P})$ is $u$ if, and only if, $\max_{\sigma \in \Sigma^{*}} \mathrm{cav}{\Phi}_{u}(\sigma) = 0$. 
        \end{corollary}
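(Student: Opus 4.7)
The plan is to derive this as a near-immediate consequence of Theorem \ref{thm_optimal_persuasion_characterization_experiment_space}. That theorem already asserts that the value of $(\mathcal{P})$ equals $u$ if and only if the value of $(\Phi^{*}(u))$ equals $0$. Hence, the corollary reduces to establishing the identity
\[
\max_{\sigma \in \Sigma^{*}} \mathrm{cav}{\Phi}_{u}(\sigma) \; = \; \text{value of } (\Phi^{*}(u)),
\]
after which the conclusion follows by simple substitution.

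First, I would prove the inequality $\leq$. Fix any feasible tuple $(\lambda_{\theta}, \sigma_{\theta})_{\theta \in \Theta}$ for $(\Phi^{*}(u))$, and set $\sigma := \sum_{\theta \in \Theta}\lambda_{\theta}\sigma_{\theta}$. Feasibility in $(\Phi^{*}(u))$ gives $\sigma \in \Sigma^{*}$, and the very same tuple is then feasible for the concavification program defining $\mathrm{cav}{\Phi}_{u}(\sigma)$, since the only additional constraint there, $\sum_{\theta}\lambda_{\theta}\sigma_{\theta} = \sigma$, holds by construction. Therefore $\sum_{\theta}\lambda_{\theta}\Phi_{u}(\sigma_{\theta}) \leq \mathrm{cav}{\Phi}_{u}(\sigma) \leq \max_{\sigma' \in \Sigma^{*}}\mathrm{cav}{\Phi}_{u}(\sigma')$; taking the supremum over feasible tuples yields the desired inequality.

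For the reverse inequality $\geq$, I would fix any $\sigma \in \Sigma^{*}$ and any tuple $(\lambda_{\theta}, \sigma_{\theta})_{\theta \in \Theta}$ feasible for the concavification program at $\sigma$. Since $\sum_{\theta}\lambda_{\theta}\sigma_{\theta} = \sigma \in \Sigma^{*}$, the tuple is also feasible for $(\Phi^{*}(u))$, and so $\sum_{\theta}\lambda_{\theta}\Phi_{u}(\sigma_{\theta}) \leq \text{value of } (\Phi^{*}(u))$. Taking the supremum first over feasible tuples (which gives $\mathrm{cav}{\Phi}_{u}(\sigma)$) and then over $\sigma \in \Sigma^{*}$ delivers the $\geq$ direction. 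Combining the two inequalities with the conclusion of Theorem \ref{thm_optimal_persuasion_characterization_experiment_space} completes the proof.

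I do not foresee a substantive obstacle, since the feasibility constraints are deliberately aligned: $(\Phi^{*}(u))$ optimizes over all splittings whose weighted average lies in $\Sigma^{*}$, while the concavification $\mathrm{cav}{\Phi}_{u}(\sigma)$ evaluated at $\sigma \in \Sigma^{*}$ optimizes over all splittings whose weighted average equals that particular $\sigma$; maximizing the latter over $\sigma \in \Sigma^{*}$ recovers the former. The only mild care required is to keep the indexing set $\Theta$ treated as arbitrary finite in both programs, so that feasible tuples in either program can be transported to the other without reindexing issues, and to note that the concavification optimum at any $\sigma \in \Sigma$ is attained (since $\Sigma$ is compact and $\Phi_{u}$ is continuous), which underwrites the exchanges of supremum and maximum above.
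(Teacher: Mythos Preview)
Your proposal is correct and follows essentially the same approach as the paper: the paper states the corollary as an immediate consequence of Theorem \ref{thm_optimal_persuasion_characterization_experiment_space}, simply observing in the surrounding text that the value of $(\Phi^{*}(u))$ equals $\max_{\sigma \in \Sigma^{*}} \mathrm{cav}{\Phi}_{u}(\sigma)$. You spell out this identity via the two inequalities, which the paper treats as self-evident from the definitions, but the underlying argument is the same.
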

    
        \begin{proof}[Proof of Proposition \ref{thm_optimal_persuasion_characterization_experiment_space}]
        
        First, we show that there is a unique $u$ that solves $\Phi^{*}(u) = 0$. 
        \begin{lemma}\label{lem_single_crossing_experiment_space}
            $\Phi^{*} (u)$ satisfies single-crossing, i.e., for any $u > u'$, if $\Phi^{*}(u) \geq 0$ then $\Phi^{*}(u') > 0$. Thus, there exists a unique $u \in \R$ such that $\Phi^{*} (u) = 0$. 
        \end{lemma}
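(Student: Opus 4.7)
The plan is to exploit the separability of $\Phi_u(\sigma)$ into a $\sigma$-dependent numerator denominator and a $u$-dependent term, which makes the single-crossing structure almost immediate.

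First I would observe that because $\phi_r$ is strictly increasing, $\phi_r'(u_r(\sigma,\tau^*))>0$ for every $\sigma$, and because $\phi_s$ is strictly increasing, for any $u>u'$ and any $\sigma\in\Sigma$,
\begin{equation*}
\Phi_u(\sigma)-\Phi_{u'}(\sigma)=\frac{\phi_s(u')-\phi_s(u)}{\phi_r'(u_r(\sigma,\tau^*))}<0.
\end{equation*}
Aggregating with any non-negative weights $(\lambda_\theta)$ that sum to one, for any feasible $(\lambda_\theta,\sigma_\theta)_{\theta\in\Theta}$ of $(\Phi^*(u))$,
\begin{equation*}
\sum_\theta\lambda_\theta\Phi_u(\sigma_\theta)-\sum_\theta\lambda_\theta\Phi_{u'}(\sigma_\theta)=\bigl(\phi_s(u')-\phi_s(u)\bigr)\sum_\theta\frac{\lambda_\theta}{\phi_r'(u_r(\sigma_\theta,\tau^*))}<0,
\end{equation*}
where the strict inequality uses that the second factor is strictly positive. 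Crucially, the feasibility constraints of $(\Phi^*(u))$ and $(\Phi^*(u'))$ coincide (they do not depend on $u$), so any maximizer for $(\Phi^*(u))$ is feasible for $(\Phi^*(u'))$.

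Second, I would deduce single-crossing: if $\Phi^*(u)\geq 0$ is attained by some feasible $(\lambda_\theta^\dagger,\sigma_\theta^\dagger)$, then reusing this point in $(\Phi^*(u'))$ gives $\Phi^*(u')\geq \sum_\theta\lambda_\theta^\dagger\Phi_{u'}(\sigma_\theta^\dagger)>\sum_\theta\lambda_\theta^\dagger\Phi_u(\sigma_\theta^\dagger)=\Phi^*(u)\geq 0$. This is the single-crossing claim, and it immediately rules out two distinct zeros of $\Phi^*$, so if a zero exists it is unique.

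Finally I would argue existence. The feasible set of $(\Phi^*(u))$ is non-empty (e.g., use any obedient experiment with $\lambda$ a point mass). For $u$ larger than $\max_{a,\omega}u_s(a,\omega)$ we have $\phi_s(u_s(\sigma,\tau^*))-\phi_s(u)<0$ for every $\sigma\in\Sigma$, hence $\Phi^*(u)<0$; for $u$ smaller than $\min_{a,\omega}u_s(a,\omega)$, analogously, $\Phi^*(u)>0$. To invoke the intermediate value theorem I need continuity of $\Phi^*$ in $u$; this follows from Berge's maximum theorem once one restricts, without loss by Carathéodory applied inside the finite-dimensional convex hull defining $\Sigma^*$ (the same bound $|\Omega|\times(|A|-1)+1$ appearing in Theorem~\ref{thm_optimal_persuasion_characterization_experiment_space}), to a compact feasible set of a fixed dimension, and from the joint continuity of the objective in $(\lambda_\theta,\sigma_\theta,u)$. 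The main (and only mild) obstacle is this last continuity/compactness step, since the index set $\Theta$ is a priori unrestricted; I would handle it by the standard reduction to bounded $|\Theta|$ that the theorem itself uses elsewhere. Combining IVT with the single-crossing property gives both existence and uniqueness of $u$ with $\Phi^*(u)=0$.
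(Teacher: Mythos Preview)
Your proposal is correct and follows essentially the same approach as the paper: both separate the $u$-dependent constant $\phi_s(u)$ from the $\sigma$-dependent terms to get the strict monotonicity, then invoke Berge's maximum theorem for continuity together with the sign of $\Phi^*$ at the payoff extremes to apply the intermediate value theorem. The only difference is that you are slightly more explicit about the Carath\'eodory reduction needed to obtain a compact feasible set for Berge, which the paper simply asserts.
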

    
        \begin{proof}[Proof of Lemma \ref{lem_single_crossing_experiment_space}]
            $\Phi^{*}(u) \geq 0 $ and $\phi_{s}, \phi_{r}$ strictly increasing implies the existence of $(\lambda_{\theta}, \sigma_{\theta})_{\theta \in \Theta}$ such that $\sum_{\theta \in \Theta} \lambda_{\theta} \frac{\phi_{s}(u_{s}(\sigma_{\theta}, \tau^{*})) }{\phi_{r}'(u_{r}(\sigma_{\theta}, \tau^{*}))} \geq  \sum_{\theta \in \Theta} \lambda_{\theta} \frac{\phi_{s}(u)}{\phi_{r}'(u_{r}(\sigma_{\theta}, \tau^{*}))} > \sum_{\theta \in \Theta} \lambda_{\theta} \frac{\phi_{s}(u')}{\phi_{r}'(u_{r}(\sigma_{\theta}, \tau^{*}))}$,
            which implies $\Phi^{*}(u') > 0$ for $u' < u$. Therefore, there is at most one solution to $\Phi^{*}(u) = 0$. Since $\phi_{s}$ is continuous, $\Phi_{u}$ is continuous in $(\sigma,u)$ and, thus, by Berge's Maximum Theorem, $\Phi^{*}(u)$ is continuous. Since $\Phi^{*}(u) < 0$ for $u > \max\{u_{s}(a,\omega)\}$ and $\Phi^{*}(u) > 0$ for $u < \min \{u_{s}(a,\omega)\}$, there exists $u $ such that $\Phi^{*}(u) = 0$.
        \end{proof}
    
        By Proposition \ref{rev-principle}, we can rewrite $(\mathcal{P})$ as 
            \begin{equation*}
                (\mathcal{P}) =
                \left\{
                \begin{array}{l}
                    \max_{(\boldsymbol{\sigma},\mu)} U_{s}(\boldsymbol{\sigma},\mu,\tau^{*}) , \\
                    \text{subject to\;} \tau^{*} \in BR(\boldsymbol{\sigma}, \mu).
                \end{array}
                \right.
                \end{equation*}
        We then show the conclusion using the following lemma. 
        \begin{lemma}\label{lem_characterization_improving_u}
            For each $u \in \R$, $(\mathcal{P}) > u$ if, and only if, $\Phi^{*}(u) > 0$. 
        \end{lemma}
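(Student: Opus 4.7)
The plan is to establish the biconditional by exhibiting a payoff-preserving bijection between obedient ambiguous experiments $(\boldsymbol{\sigma},\mu)$ and pairs $(\boldsymbol{\sigma},\lambda)$ that are feasible for $(\Phi^{*}(u))$, under which the two strict inequalities $U_s(\boldsymbol{\sigma},\mu,\tau^*)>u$ and $\sum_\theta \lambda_\theta \Phi_u(\sigma_\theta)>0$ coincide. The key idea, already present in the preamble to the theorem, is to change the maximization variable from $\mu$ to the effective measure $em^{(\boldsymbol{\sigma},\mu)}$.

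First I would restate $(\mathcal{P})>u$. Since $\phi_s$ is strictly increasing, $U_s(\boldsymbol{\sigma},\mu,\tau^*)=\phi_s^{-1}(\sum_\theta \mu_\theta \phi_s(u_s(\sigma_\theta,\tau^*)))>u$ is equivalent to
\[
\sum_\theta \mu_\theta \bigl(\phi_s(u_s(\sigma_\theta,\tau^*)) - \phi_s(u)\bigr) > 0.
\]
By Proposition \ref{rev-principle} it suffices to consider canonical ambiguous experiments, and by Lemma \ref{lem_effmeasure}, obedience of $(\boldsymbol{\sigma},\mu)$ is equivalent to $\sum_\theta em^{(\boldsymbol{\sigma},\mu)}_\theta \sigma_\theta \in \Sigma^{*}$. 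So $(\mathcal{P})>u$ holds iff there exist $(\boldsymbol{\sigma},\mu)$ with $\sum_\theta em^{(\boldsymbol{\sigma},\mu)}_\theta \sigma_\theta \in \Sigma^{*}$ and the above displayed inequality.

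Next I would perform the change of variables, setting $\lambda_\theta := em^{(\boldsymbol{\sigma},\mu)}_\theta$. Using \eqref{equ_reverse_effmeasure} to substitute for $\mu_\theta$, the left side of the displayed inequality becomes
\[
\left(\sum_{\tilde\theta} \frac{\lambda_{\tilde\theta}}{\phi_r'(u_r(\sigma_{\tilde\theta},\tau^*))}\right)^{-1} \sum_\theta \frac{\lambda_\theta}{\phi_r'(u_r(\sigma_\theta,\tau^*))}\bigl(\phi_s(u_s(\sigma_\theta,\tau^*))-\phi_s(u)\bigr).
\]
Since the normalization factor out front is strictly positive, the inequality reduces to $\sum_\theta \lambda_\theta \Phi_u(\sigma_\theta)>0$. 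The constraint $\sum_\theta \lambda_\theta \sigma_\theta \in \Sigma^{*}$ together with $\lambda_\theta\geq 0$ and $\sum_\theta \lambda_\theta=1$ are exactly the feasibility conditions of $(\Phi^{*}(u))$.

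For the converse I would note that, given any feasible $(\lambda_\theta,\sigma_\theta)_{\theta\in\Theta}$ for $(\Phi^{*}(u))$ with $\sum_\theta \lambda_\theta \Phi_u(\sigma_\theta)>0$, one can define $\mu$ from $\lambda$ via \eqref{equ_reverse_effmeasure}; this $\mu$ lies in $\Delta(\Theta)$, has the same support as $\lambda$, and its induced effective measure equals $\lambda$ (this is just inverting the formula, and can be checked by a direct substitution into \eqref{equ_effmeasure}). Thus $(\boldsymbol{\sigma},\mu)$ is obedient by Lemma \ref{lem_effmeasure}, and reversing the algebra above yields $U_s(\boldsymbol{\sigma},\mu,\tau^*)>u$, so $(\mathcal{P})>u$. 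The main thing to get right is that the bijection $\mu \leftrightarrow \lambda$ given by \eqref{equ_effmeasure} and \eqref{equ_reverse_effmeasure} preserves support and strict positivity of the normalization factors; there are no real obstacles beyond tracking these strict-positivity conditions to justify dividing through and preserving strict inequality in both directions.
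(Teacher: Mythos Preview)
Your proposal is correct and follows essentially the same approach as the paper's proof: both directions use the change of variables between $\mu$ and the effective measure $\lambda=em^{(\boldsymbol{\sigma},\mu)}$ via \eqref{equ_effmeasure} and \eqref{equ_reverse_effmeasure}, invoke Lemma~\ref{lem_effmeasure} to translate obedience into $\sum_\theta \lambda_\theta\sigma_\theta\in\Sigma^{*}$, and observe that the strictly positive normalization factor lets one pass between $\sum_\theta \mu_\theta(\phi_s(u_s(\sigma_\theta,\tau^*))-\phi_s(u))>0$ and $\sum_\theta \lambda_\theta\Phi_u(\sigma_\theta)>0$. The only cosmetic difference is that you frame the argument as a single bijection while the paper treats the two implications separately.
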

        
        \begin{proof}[Proof of Lemma \ref{lem_characterization_improving_u}]
            \textbf{IF.} Suppose there exists a solution $(\lambda_{\theta}, \sigma_{\theta})_{\theta \in \Theta}$ such that $\Phi^{*}(u) > 0$. Let $\boldsymbol{\sigma} = (\sigma_{\theta})_{\theta \in \Theta}$ and $\mu$ be defined by $\mu_{\theta} := \frac{\lambda_{\theta}/\phi_{r}'(u_{r}(\sigma_{\theta}, \tau^{*}))}{\sum_{j} \lambda_{j}/\phi_{r}'(u_{r}(\sigma_{\theta'}, \tau^{*}))}$ for each $\theta \in \Theta$.
            By construction, $(\boldsymbol{\sigma}, \mu)$ satisfies $em^{(\boldsymbol{\sigma}, \mu)}_{\theta} = \lambda_{\theta}$ and $\sum_{\theta \in \Theta}\lambda_{\theta}\sigma_{\theta} \in \Sigma^{*}$.  Lemma \ref{lem_effmeasure} implies $\tau^{*} \in BR(\boldsymbol{\sigma}, \mu)$. Thus, $(\mathcal{P}) > u$ since $U_{s}(\boldsymbol{\sigma}, \mu, \tau^{*}) >u$ as 
            \begin{align*}
                \phi_{s}( U_{s}(\boldsymbol{\sigma}, \mu, \tau^{*})) - \phi_{s}(u)  =\frac{1}{\sum_{j \in I} \lambda_{j}/\phi_{r}'(u_{r}(\sigma_{\theta'}, \tau^{*}))} \sum_{\theta \in \Theta}\lambda_{\theta} \frac{\phi_{s}(u_{s}(\sigma_{\theta}, \tau^{*})) - \phi_{s}(u)}{\phi_{r}'(u_{r}(\sigma_{\theta}, \tau^{*}))} > 0,
            \end{align*}
            
            \noindent\textbf{ONLY IF.} $(\mathcal{P}) > u$ implies there exists an obedient $(\boldsymbol{\sigma}, \mu)$ such that $U_{s}(\boldsymbol{\sigma}, \mu, \tau^{*}) > u$. Let $\lambda_{\theta} = em^{(\boldsymbol{\sigma}, \mu)}_{\theta}$. Lemma \ref{lem_effmeasure} implies that $\sum_{\theta} \lambda_{\theta}\sigma_{\theta} \in \Sigma^{*}$. Thus, $\Phi^{*}(u) > 0$ as $\sum_{\theta}\lambda_{\theta}\Phi_{u}(\sigma_{\theta}) = (\phi_{s}(U_{s}(\boldsymbol{\sigma}, \mu, \tau^{*})) - \phi_{s}(u))\sum_{\theta} \lambda_{\theta}/\phi_{r}'(u_{r}(\sigma_{\theta}, \tau^{*})) > 0$ since $U_{s}(\boldsymbol{\sigma}, \mu, \tau^{*}) > u$. 
        \end{proof}
    
       We now complete the proof by showing that $(\mathcal{P}) = u$ if, and only if, $\Phi^{*}(u) = 0$. Suppose $(\mathcal{P}) = u$. Then for all $u' < u$, $(\mathcal{P}) > u'$ and thus $\Phi^{*}(u') > 0$ by Lemma \ref{lem_characterization_improving_u}. By Lemma \ref{lem_single_crossing_experiment_space}, there exists a unique $\hat{u}$ such that $\Phi^{*}(\hat{u}) = 0$. If $\hat{u} > u$, then $\Phi^{*}(u) > 0$ by Lemma \ref{lem_single_crossing_experiment_space}, and thus $(\mathcal{P}) > u$ by Lemma \ref{lem_characterization_improving_u}, a contradiction. Thus, $\Phi^{*}(u) = 0$. 
    
        For the other direction, suppose $\Phi^{*}(u) = 0$. If $(\mathcal{P}) > u$, then Lemma \ref{lem_characterization_improving_u} implies $\Phi^{*}(u) > 0$, a contradiction. If $(\mathcal{P}) < u$, then there exists $u' < u$ such that $(\mathcal{P}) = u'$. Then by the previous direction, $\Phi^{*}(u') = 0$, contradicting Lemma \ref{lem_single_crossing_experiment_space}. Thus, $(\mathcal{P}) = u$.
    
        Since $\Sigma \subseteq \R^{|\Omega|\times(|A|-1)}$, the graph of $\Phi_{u}$ is a subset of $\R^{(|\Omega|\times(|A|-1)+1)}$. Suppose $(\mathcal{P}) = u$, then $\Phi^{*}(u) = 0$. Let $(\lambda_{\theta}, \sigma_{\theta})_{\theta \in \Theta}$ be such that $\sum_{\theta \in \Theta} \lambda_{\theta} \Phi_{u}(\sigma_{\theta}) = 0$ and $\sum_{\theta \in \Theta} \lambda_{\theta} \sigma_{\theta} = \sigma^{*} \in \Sigma^{*}$. Thus, $(\sigma^{*}, 0)$ is on the boundary of, and thus an element of a supporting hyperplane of, the convex hull of the graph of $\Phi_{u}$. The intersection of this hyperplane and the set forms a face with dimension at most $|\Omega|\times(|A|-1)$. Extreme points of the face are also extreme points of the convex hull of the graph of $\Phi_{u}$. Any such extreme point has the form $(\sigma, \Phi_{u}(\sigma))$ for some $\sigma \in \Sigma$. By Caratheodory's theorem applied to the face, $(\sigma^{*}, 0)$ can be written as a convex combination of at most $(|\Omega|\times(|A|-1) + 1)$ such extreme points. Thus, there exists a solution to $(\mathcal{P})$ such that $| \text{supp} (\mu^{*})| \leq (|\Omega|\times(|A|-1) + 1)$.
        \end{proof}
        
        \begin{proof}[Proof of Theorem \ref{thm_optimal_persuasion_pareto_ranked_experiments}]
        Suppose $(\boldsymbol{\sigma}, \mu)$ is obedient and $\phi_{r}$ strictly concave. Let $\hat{u} := U_{s}(\boldsymbol{\sigma}, \mu, \tau^{*})$ and $\lambda_{\theta} = em^{(\boldsymbol{\sigma}, \mu)}_{\theta}$ for all $\theta$.  Then $\sum_{i} \lambda_{\theta} \frac{\phi_{s}(u_{s}(\sigma_{\theta}, \tau^{*})) - \phi_{s}(\hat{u})}{\phi_{r}'(u_{r}(\sigma_{\theta}, \tau^{*}))} = 0$. If there exists $(\hat{\lambda}_{\theta}, \hat{\sigma}_{\theta})$ such that $\sum_{\theta} \hat{\lambda}_{\theta} \frac{\phi_{s}(u_{s}(\hat{\sigma}_{\theta}, \tau^{*})) - \phi_{s}(\hat{u})}{\phi_{r}'(u_{r}(\hat{\sigma_{\theta}}, \tau^{*}))} > 0$, then the ambiguous experiment $(\hat{\boldsymbol{\sigma}}, \hat{\mu})$ with $em^{(\hat{\boldsymbol{\sigma}}, \hat{\mu})}_{\theta} := \hat{\lambda}_{\theta}$ will strictly improve upon $(\boldsymbol{\sigma}, \mu)$. The existence of such $(\hat{\lambda}_{\theta}, \hat{\sigma}_{\theta})$ under the conditions in (i) and (ii) of Theorem \ref{thm_optimal_persuasion_pareto_ranked_experiments} follows from (i) and (ii) of the following Lemma (and its swapped version), respectively. To state the lemma, we need the following definition.
        \begin{definition}\label{def_big_sigmas}
        For any $u \in \R$, let $\Sigma_{+}(u) = \{\sigma \in \Sigma: u_{s}(\sigma, \tau^{*}) > u\}$ and $\Sigma_{-}(u) = \{\sigma \in \Sigma: u_{s}(\sigma, \tau^{*}) \leq u\}$. 
        \end{definition}

        \begin{lemma}\label{lem:inversely_ranked_pareto_ranked}
            Let $(\lambda_{\theta}, \sigma_{\theta})_{\theta \in \Theta}$ be a solution to $(\Phi^{*}(u))$. Then the following hold: 
            \begin{enumerate}
                \item[(i)] For all $\theta$ and $\theta'$ such that $(\sigma_{\theta}, \sigma_{\theta'}) \in \Sigma_{+}(u) \times \Sigma_{-}(u)$, if $\phi'_{r}(u_{r}(\sigma_{\theta}, \tau^{*})) \neq \phi'_{r}(u_{r}(\sigma_{\theta'}, \tau^{*}))$, then they are Pareto-ranked,
                \item[(ii)] If $\phi_{s}$ is affine, then for all $\sigma_{\theta}$, there does not exist a Pareto-ranked splitting, $(\overline{\sigma}, \underline{\sigma}, \lambda)$ with $(\overline{\sigma}, \underline{\sigma}) \in \Sigma_{+}(u) \times \Sigma_{-}(u)$ and $\phi'_{r}(u_{r}(\overline{\sigma}, \tau^{*})) < \phi'_{r}(u_{r}(\underline{\sigma}, \tau^{*}))$.
            \end{enumerate}
        \end{lemma}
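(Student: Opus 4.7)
The plan is to prove both parts by contradiction. Given a putative optimal $(\lambda_{\theta}, \sigma_{\theta})_{\theta \in \Theta}$ for $(\Phi^{*}(u))$, I would construct in each case a perturbation of the candidate that leaves the aggregate $\sum_{\theta}\lambda_{\theta}\sigma_{\theta}$ unchanged -- thereby automatically preserving the constraint $\sum_{\theta}\lambda_{\theta}\sigma_{\theta} \in \Sigma^{*}$ -- but strictly increases the linear objective $\sum_{\theta}\lambda_{\theta}\Phi_{u}(\sigma_{\theta})$. For brevity I write $A_{\sigma} := \phi_{s}(u_{s}(\sigma, \tau^{*})) - \phi_{s}(u)$ and $B_{\sigma} := \phi_{r}'(u_{r}(\sigma, \tau^{*}))$, so that $\Phi_{u}(\sigma) = A_{\sigma}/B_{\sigma}$, and I note that $A_{\sigma} > 0$ iff $\sigma \in \Sigma_{+}(u)$ while $A_{\sigma} \leq 0$ for $\sigma \in \Sigma_{-}(u)$, with all $B$'s strictly positive.

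For part (i), suppose $\sigma_{\theta} \in \Sigma_{+}(u)$ and $\sigma_{\theta'} \in \Sigma_{-}(u)$ lie in the support, have $B_{\sigma_{\theta}} \neq B_{\sigma_{\theta'}}$, but are not Pareto-ranked. Since $u_{s}(\sigma_{\theta}, \tau^{*}) > u \geq u_{s}(\sigma_{\theta'}, \tau^{*})$, the failure of Pareto-ranking forces $u_{r}(\sigma_{\theta}, \tau^{*}) \leq u_{r}(\sigma_{\theta'}, \tau^{*})$, which together with distinctness of the $B$-values and weak concavity of $\phi_{r}$ upgrades to $u_{r}(\sigma_{\theta}, \tau^{*}) < u_{r}(\sigma_{\theta'}, \tau^{*})$ and $B_{\sigma_{\theta}} > B_{\sigma_{\theta'}}$. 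The perturbation is to merge the pair into $\hat\sigma := (\lambda_{\theta}\sigma_{\theta} + \lambda_{\theta'}\sigma_{\theta'})/(\lambda_{\theta} + \lambda_{\theta'})$ with combined weight $\lambda_{\theta} + \lambda_{\theta'}$. Clearing positive denominators and using weak concavity of $\phi_{s}$ to bound $A_{\hat\sigma}$ from below by the $p$-average with $p = \lambda_{\theta}/(\lambda_{\theta}+\lambda_{\theta'})$, the improvement reduces to a rearrangement-type inequality of the form $\lambda_{\theta}A_{\sigma_{\theta}}B_{\sigma_{\theta'}}(B_{\sigma_{\theta}} - B_{\hat\sigma}) + \lambda_{\theta'}A_{\sigma_{\theta'}}B_{\sigma_{\theta}}(B_{\sigma_{\theta'}} - B_{\hat\sigma}) > 0$. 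Each summand is non-negative by the sign pattern ($A_{\sigma_{\theta}} > 0 \geq A_{\sigma_{\theta'}}$, $B_{\sigma_{\theta}} > B_{\sigma_{\theta'}}$, and $B_{\hat\sigma} \in [B_{\sigma_{\theta'}}, B_{\sigma_{\theta}}]$ by weak monotonicity of $\phi_{r}'$), and strictness of $B_{\sigma_{\theta}} > B_{\sigma_{\theta'}}$ prevents $B_{\hat\sigma}$ from coinciding with both endpoints, giving the desired strict positivity.

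For part (ii), given a Pareto-ranked splitting $(\overline{\sigma}, \underline{\sigma}, \lambda)$ of some $\sigma_{\theta}$ with $(\overline{\sigma}, \underline{\sigma}) \in \Sigma_{+}(u) \times \Sigma_{-}(u)$ and $B_{\overline{\sigma}} < B_{\underline{\sigma}}$, the perturbation is the splitting itself: replace $(\sigma_{\theta}, \lambda_{\theta})$ by $(\overline{\sigma}, \lambda\lambda_{\theta})$ and $(\underline{\sigma}, (1-\lambda)\lambda_{\theta})$, which preserves the aggregate $\lambda\lambda_{\theta}\overline{\sigma} + (1-\lambda)\lambda_{\theta}\underline{\sigma} = \lambda_{\theta}\sigma_{\theta}$. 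Affinity of $\phi_{s}$ now yields $A_{\sigma_{\theta}} = \lambda A_{\overline{\sigma}} + (1-\lambda)A_{\underline{\sigma}}$ \emph{exactly} (no concavity slack is needed), so clearing denominators in $\lambda\Phi_{u}(\overline{\sigma}) + (1-\lambda)\Phi_{u}(\underline{\sigma}) > \Phi_{u}(\sigma_{\theta})$ reduces the objective improvement to
\[
\lambda A_{\overline{\sigma}}B_{\underline{\sigma}}(B_{\sigma_{\theta}} - B_{\overline{\sigma}}) + (1-\lambda)A_{\underline{\sigma}}B_{\overline{\sigma}}(B_{\sigma_{\theta}} - B_{\underline{\sigma}}) > 0,
\]
which is handled by the same sign-pattern argument as in (i), with $B_{\sigma_{\theta}} \in [B_{\overline{\sigma}}, B_{\underline{\sigma}}]$ and strictness $B_{\overline{\sigma}} < B_{\underline{\sigma}}$ guaranteeing at least one difference is strict.

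The main obstacle I anticipate is handling boundary configurations under the merely weak concavity of $\phi_{r}$ and $\phi_{s}$ assumed in the model. For instance, if $u_{s}(\sigma_{\theta'}, \tau^{*}) = u$ (so $A_{\sigma_{\theta'}} = 0$) coincides with $\phi_{r}'$ being locally constant on $[u_{r}(\sigma_{\theta}, \tau^{*}), u_{r}(\hat\sigma, \tau^{*})]$ (so $B_{\sigma_{\theta}} = B_{\hat\sigma}$) and $\phi_{s}$ being linear on $[u, u_{s}(\sigma_{\theta}, \tau^{*})]$, the merge in (i) only delivers a weak improvement, and an analogous degeneracy can appear in (ii). Resolving these cases either invokes the strict concavity of $\phi_{r}$ used in Theorem \ref{thm_optimal_persuasion_pareto_ranked_experiments} -- which kills the problematic $B$-equalities outright -- or supplements the direct merge/split with an auxiliary reallocation of mass into the strictly concave region of $\phi_{r}$ that the hypothesis $B_{\overline{\sigma}} < B_{\underline{\sigma}}$ (respectively $B_{\sigma_{\theta}} > B_{\sigma_{\theta'}}$) guarantees must exist somewhere in the relevant payoff range.
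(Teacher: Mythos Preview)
Your approach—merge for (i), split for (ii), each preserving the aggregate constraint and hence feasibility—is exactly the paper's. The one substantive difference is how the boundary case you correctly flag in (i) is handled. You merge at the fixed ratio $p=\lambda_{\theta}/(\lambda_{\theta}+\lambda_{\theta'})$ inherited from the solution, which can land $B_{\hat\sigma}$ on the endpoint $B_{\sigma_{\theta}}$ (when $\phi_{r}'$ is constant on $[u_{r}(\sigma_{\theta},\tau^{*}),u_{r}(\hat\sigma,\tau^{*})]$); combined with $A_{\sigma_{\theta'}}=0$ this kills strictness, as you note. The paper instead allows the merge ratio $\lambda\in(0,1)$ to be \emph{free}: it first observes that for any such $\lambda$ the merged experiment can be substituted into the solution by reweighting $\lambda_{\theta},\lambda_{\theta'}$ appropriately, and then uses continuity of $\phi_{r}'$ (from differentiability) to pick $\lambda$ so that $u_{r}$ of the merge is close enough to $u_{r}(\sigma_{\theta'},\tau^{*})$ that $B_{\mathrm{merge}}<B_{\sigma_{\theta}}$ strictly. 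Since $A_{\sigma_{\theta}}>0$, the first summand in your display is then strictly positive regardless of whether $A_{\sigma_{\theta'}}=0$. This is precisely the ``auxiliary reallocation into the strictly concave region'' you gesture at, made concrete with no extra machinery. For (ii) your argument matches the paper's essentially line for line.
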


        A swapped version of Lemma \ref{lem:inversely_ranked_pareto_ranked} with $\Sigma_{+}(u)$ and $\Sigma_{-}(u)$ defined by swapping the strict and weak inequalities in Definition \ref{def_big_sigmas} also holds, and the proof is identical.
        \end{proof}
        
        \begin{proof}[Proof of Lemma \ref{lem:inversely_ranked_pareto_ranked}]
        Fix $u \in \R$. Let $(\sigma_{\theta},\lambda_{\theta})_{\theta \in \Theta}$ be feasible for the maximization problem \eqref{equ_program_Phi_star}. To prove (i), suppose that there exists a pair $(\sigma_{\theta},  \sigma_{\theta'})$ with $\lambda_{\theta}>0$ and $\lambda_{\theta'}>0$ and such that there exists a $\lambda \in (0,1)$ for which, $\Phi_{u} (\lambda\sigma_{\theta} + (1-\lambda)\sigma_{\theta'}) > \lambda \Phi_{u}(\sigma_{\theta}) + (1-\lambda) \Phi_{u}(\sigma_{\theta'})$. Then, $(\sigma_{\theta},\lambda_{\theta})_{\theta \in \Theta}$ cannot be a solution to \eqref{equ_program_Phi_star}. To see this, if $\frac{\lambda_{\theta}}{\lambda} \leq \frac{\lambda_{\theta'}}{1-\lambda}$, then replacing $\sigma_{\theta}$ by $\lambda\sigma_{\theta} + (1-\lambda)\sigma_{\theta'}$, $\lambda_{\theta}$ by $\hat\lambda_{\theta}=\frac{\lambda_{\theta}}{\lambda}$, and $\lambda_{\theta'}$ by $\hat\lambda_{\theta'}=\lambda_{\theta'}-(1-\lambda)\frac{\lambda_{\theta}}{\lambda}$ yields a strict improvement. If $\frac{\lambda_{\theta}}{\lambda} > \frac{\lambda_{\theta'}}{1-\lambda}$, then replacing $\sigma_{\theta'}$ by $\lambda\sigma_{\theta} + (1-\lambda)\sigma_{\theta'}$, $\lambda_{\theta'}$ by $\hat\lambda_{\theta'}=\frac{\lambda_{\theta'}}{1-\lambda}$, and $\lambda_{\theta}$ by $\hat\lambda_{\theta}=\lambda_{\theta}-\lambda\frac{\lambda_{\theta'}}{1-\lambda}$ yields a strict improvement.
        
        Towards a contradiction, suppose in the solution there exists $(\sigma_{\theta}, \sigma_{\theta'}) \in \Sigma_{+}(u) \times \Sigma_{-}(u)$ with $\phi'_{r}(u_{r}(\sigma_{\theta}, \tau^{*})) \neq \phi'_{r}(u_{r}(\sigma_{\theta'}, \tau^{*}))$ and they are not Pareto-ranked, i.e., $u_{s}(\sigma_{\theta}, \tau^{*}) > u \geq u_{s}(\sigma_{\theta'}, \tau^{*})$, and $u_{r}(\sigma_{\theta}, \tau^{*}) < u_{r}(\sigma_{\theta'}, \tau^{*})$. Then there exists $\lambda \in (0,1)$ such that $ \phi_{r}'(u_{r}(\sigma_{\theta}, \tau^{*})) > \phi_{r}'(u_{r}(\lambda \sigma_{\theta} + (1-\lambda) \sigma_{\theta'}), \tau^{*}))$ (by differentiability of $\phi_{r}$), and  $\Phi_{u}(\lambda \sigma_{\theta} + (1-\lambda)\sigma_{\theta'})> \lambda \Phi_{u}(\sigma_{\theta}) + (1-\lambda) \Phi_{u}(\sigma_{\theta'})$. To see the last point, notice that 
            \begin{align*}
                \Phi_{u}(\lambda \sigma_{\theta} + (1-\lambda)\sigma_{\theta'}) & = \frac{\phi_{s}(u_{s}(\lambda \sigma_{\theta} + (1-\lambda) \sigma_{\theta'}, \tau^{*})) - \phi_{s}(u)}{\phi_{r}'(u_{r}(\lambda \sigma_{\theta} + (1-\lambda) \sigma_{\theta'}, \tau^{*}))}\\
                & \geq \frac{\lambda [\phi_{s}(u_{s}( \sigma_{\theta}, \tau^{*})) - \phi_{s}(u)] + (1-\lambda) [\phi_{s}(u_{s}( \sigma_{\theta'}, \tau^{*})) - \phi_{s}(u)]}{\phi_{r}'(u_{r}(\lambda \sigma_{\theta} + (1-\lambda) \sigma_{\theta'}, \tau^{*}))}\\
                & = \frac{\lambda \phi_{r}'(u_{r}(\sigma_{\theta}, \tau^{*}))}{\phi_{r}'(u_{r}(\lambda \sigma_{\theta} + (1-\lambda) \sigma_{\theta'}, \tau^{*}))} \Phi_{u}(\sigma_{\theta}) + \frac{(1-\lambda) {\phi_{r}'(u_{r}(\sigma_{\theta'}, \tau^{*}))}}{\phi_{r}'(u_{r}(\lambda \sigma_{\theta} + (1-\lambda) \sigma_{\theta'}, \tau^{*}))} \Phi_{u}(\sigma_{\theta'}) \\
                & > \lambda \Phi_{u}(\sigma_{\theta}) + (1-\lambda) \Phi_{u}(\sigma_{\theta'}), 
            \end{align*}
            where the first inequality follows from concavity of $\phi_{s}$, the second follows from concavity of $\phi_{r}$, $\Phi_{u}(\sigma_{\theta}) > 0$, $\Phi_{u}(\sigma_{\theta'}) \leq 0$, and $ \phi_{r}'(u_{r}(\sigma_{\theta}, \tau^{*})) > \phi_{r}'(u_{r}(\lambda \sigma_{\theta} + (1-\lambda) \sigma_{\theta'}, \tau^{*}))$.

        To prove (ii), if there exist $\sigma_{\theta}$ satisfying $\lambda_{\theta}>0$ and two experiments $\sigma$ and $\sigma'$ such that $\sigma_{\theta} = \lambda \sigma + (1-\lambda) \sigma'$ for some $\lambda \in (0,1)$ and $\Phi_{u} (\lambda \sigma + (1-\lambda)\sigma')< \lambda \Phi_{u}(\sigma) + (1-\lambda) \Phi_{u}(\sigma')$, then $(\sigma_{\theta},\lambda_{\theta})_{\theta \in \Theta}$ cannot be a solution to \eqref{equ_program_Phi_star}. This follows by noting that splitting $\sigma_{\theta}$ into $\sigma$ with probability $\lambda \lambda_{\theta}$ and $\sigma'$ with probability $(1-\lambda)\lambda_{\theta}$ induces a strict improvement. 
    
        Towards a contradiction, suppose there exists such a Pareto-ranked splitting $(\overline{\sigma}, \underline{\sigma},\lambda)$ with $\lambda \overline{\sigma} + (1-\lambda)\underline{\sigma} \in \Sigma_{+}(u)$ (the other case is symmetric), then we have 
                \begin{align*}
                    & \lambda \Phi_{u}(\overline{\sigma}) + (1-\lambda) \Phi_{u}(\underline{\sigma}) - \Phi_{u} (\lambda\overline{\sigma} + (1-\lambda)\underline{\sigma}) \\
                    =&  \lambda \left( \frac{\phi_{s}(u_{s}(\overline{\sigma}, \tau^{*})) - \phi_{s}(u)}{\phi'_{r}(u_{r}(\overline{\sigma}, \tau^{*}))}  -  \frac{\phi_{s}(u_{s}(\lambda \overline{\sigma} + (1-\lambda) \underline{\sigma}, \tau^{*})) - \phi_{s}(u)}{\phi_{r}'(u_{r}(\lambda \overline{\sigma} + (1-\lambda) \underline{\sigma}, \tau^{*}))} \right) \\
                    & + (1- \lambda) \left( \frac{\phi_{s}(u_{s}(\underline{\sigma}, \tau^{*})) - \phi_{s}(u)}{\phi'_{r}(u_{r}(\underline{\sigma}, \tau^{*}))}  -  \frac{\phi_{s}(u_{s}(\lambda \overline{\sigma} + (1-\lambda) \underline{\sigma}, \tau^{*})) - \phi_{s}(u)}{\phi_{r}'(u_{r}(\lambda \overline{\sigma} + (1-\lambda) \underline{\sigma}, \tau^{*}))} \right) \\
                    \geq &  \frac{\lambda}{\phi'_{r}(u_{r}(\overline{\sigma}, \tau^{*}))} \left( \phi_{s}(u_{s}(\overline{\sigma}, \tau^{*})) - \phi_{s}(u_{s}(\lambda \overline{\sigma} + (1-\lambda) \underline{\sigma}, \tau^{*})) \right) \\
                    & + \frac{1- \lambda}{\phi_{r}'(u_{r}(\lambda \overline{\sigma} + (1-\lambda) \underline{\sigma}, \tau^{*}))} \left( \phi_{s}(u_{s}(\underline{\sigma}, \tau^{*}))  -  \phi_{s}(u_{s}(\lambda \overline{\sigma} + (1-\lambda) \underline{\sigma}, \tau^{*})) \right) \\
                    \geq & \frac{\lambda}{\phi'_{r}(u_{r}(\overline{\sigma}, \tau^{*}))} \phi'_{s}(u_{s}(\overline{\sigma}, \tau^{*}))\left(u_{s}(\overline{\sigma}, \tau^{*}) - \lambda u_{s}(\overline{\sigma}, \tau^{*}) - (1-\lambda) u_{s}(\underline{\sigma}, \tau^{*})  \right) \\
                    & + \frac{1- \lambda}{\phi_{r}'(u_{r}(\lambda \overline{\sigma} + (1-\lambda) \underline{\sigma}, \tau^{*}))} \phi'_{s}(u_{s}(\underline{\sigma}, \tau^{*}))\left( u_{s}(\underline{\sigma}, \tau^{*})  - \lambda u_{s}(\overline{\sigma}, \tau^{*}) - (1-\lambda)u_{s}(\underline{\sigma}, \tau^{*})\right) \\
                    = &\lambda(1-\lambda)(u_{s}(\overline{\sigma}, \tau^{*}) - u_{s}(\underline{\sigma}, \tau^{*})) \left(\frac{\phi'_{s}(u_{s}(\overline{\sigma}, \tau^{*}))}{\phi'_{r}(u_{r}(\overline{\sigma}, \tau^{*}))}  - \frac{\phi'_{s}(u_{s}(\underline{\sigma}, \tau^{*}))}{\phi_{r}'(u_{r}(\lambda \overline{\sigma} + (1-\lambda) \underline{\sigma}, \tau^{*}))}\right) > 0,
                \end{align*}
                where the first inequality follows from $u_{s}(\lambda \overline{\sigma} + (1-\lambda)\underline{\sigma}) > u \geq u_{s}(\underline{\sigma}, \tau^{*})$, $\phi'_{r}(u_{r}(\overline{\sigma}, \tau^{*})) \leq \phi'_{r}(u_{r}(\lambda \overline{\sigma} + (1-\lambda)\underline{\sigma}, \tau^{*})) \leq \phi'_{r}(u_{r}(\underline{\sigma}, \tau^{*}))$, the second inequality follows from concavity of $\phi_{s}$, and the last inequality follows from linearity of $\phi_{s}$. 
        \end{proof}
        
        \begin{remark}
            \emph{From the proof of Lemma \ref{lem:inversely_ranked_pareto_ranked} (ii), linearity of $\phi_{s}$ can be relaxed to $\frac{\phi'_{s}(u_{s}(\overline{\sigma}, \tau^{*}))}{\phi'_{s}(u_{s}(\underline{\sigma}, \tau^{*}))} > \frac{\phi'_{r}(u_{r}(\overline{\sigma}, \tau^{*}))}{\phi'_{r}(u_{r}(\sigma_{\theta}, \tau^{*}))}$, if $\sigma_{\theta} \in \Sigma_{+}(u)$, and $\frac{\phi'_{s}(u_{s}(\overline{\sigma}, \tau^{*}))}{\phi'_{s}(u_{s}(\underline{\sigma}, \tau^{*}))} > \frac{\phi'_{r}(u_{r}(\sigma_{\theta}, \tau^{*}))}{\phi'_{r}(u_{r}(\underline{\sigma}, \tau^{*}))}$, if $\sigma_{\theta} \in \Sigma_{-}(u)$.}
        \end{remark}
        
        \begin{proof}[Proof of Proposition \ref{cor_linear_phi_inverse_concavification}]

        Proposition \ref{cor_linear_phi_inverse_concavification} is implied by the following two auxiliary lemmas that extend Lemma \ref{lem:inversely_ranked_pareto_ranked}. The proofs of these lemmas are similar to that of Lemma \ref{lem:inversely_ranked_pareto_ranked} and may be found in Section \ref{sec_proofs_auxiliary_results}.

        \begin{lemma}\label{thm_inverstly_ranked_concavification_experiment_space}
            Let $(\lambda_{\theta}, \sigma_{\theta})_{\theta \in \Theta}$ be a solution to $(\Phi^{*}(u))$. Then: 
            \begin{enumerate}[(i)]                
                \item If $1/\phi'_{r}$ is concave, then for all $\theta, \theta'$ such that $(\sigma_{\theta}, \sigma_{\theta'}) \in \Sigma_{+}(u) \times \Sigma_{+}(u)$, if $u_{s}(\sigma_{\theta}, \tau^{*}) \neq  u_{s}(\sigma_{\theta'}, \tau^{*})$ and $\phi'_{r}(u_{r}(\sigma_{\theta}, \tau^{*})) \neq \phi'_{r}(u_{r}(\sigma_{\theta'}, \tau^{*}))$, they are Pareto-ranked.
                
                \item If $1/\phi'_{r}$ is convex, then for all $\theta, \theta'$ such that $(\sigma_{\theta}, \sigma_{\theta'}) \in \Sigma_{-}(u) \times \Sigma_{-}(u)$, if $u_{s}(\sigma_{\theta}, \tau^{*}) \neq  u_{s}(\sigma_{\theta'}, \tau^{*})$ and $\phi'_{r}(u_{r}(\sigma_{\theta}, \tau^{*})) \neq \phi'_{r}(u_{r}(\sigma_{\theta'}, \tau^{*}))$, they are Pareto-ranked. 
            \end{enumerate}
        \end{lemma}
    
        \begin{lemma}\label{thm_Pareto_ranked_concavification_experiment_space}
            Let $(\lambda_{\theta}, \sigma_{\theta})_{\theta \in \Theta}$ be a solution to $(\Phi^{*}(u))$. Then:  
            \begin{enumerate}[(i)]                
                \item If $1/\phi'_{r}$ is concave, then for all $\sigma_{\theta}$, there does not exist a Pareto-ranked splitting, $(\overline{\sigma}, \underline{\sigma}, \lambda)$ with $(\overline{\sigma}, \underline{\sigma}) \in \Sigma_{-}(u) \times \Sigma_{-}(u)$, $\phi'_{r}(u_{r}(\overline{\sigma}, \tau^{*})) < \phi'_{r}(u_{r}(\underline{\sigma}, \tau^{*}))$, and $\frac{\phi'_{s}(u_{s}(\overline{\sigma}, \tau^{*}))}{\phi'_{s}(u_{s}(\underline{\sigma}, \tau^{*}))} > \frac{\phi'_{r}(u_{r}(\overline{\sigma}, \tau^{*}))}{\phi'_{r}(u_{r}(\underline{\sigma}, \tau^{*}))}$. 
                
                \item If $1/\phi'_{r}$ is convex, then for all $\sigma_{\theta}$, there does not exist a Pareto-ranked splitting, $(\overline{\sigma}, \underline{\sigma}, \lambda)$ with $(\overline{\sigma}, \underline{\sigma}) \in \Sigma_{+}(u) \times \Sigma_{+}(u)$, $\phi'_{r}(u_{r}(\overline{\sigma}, \tau^{*})) < \phi'_{r}(u_{r}(\underline{\sigma}, \tau^{*}))$, and $\frac{\phi'_{s}(u_{s}(\overline{\sigma}, \tau^{*}))}{\phi'_{s}(u_{s}(\underline{\sigma}, \tau^{*}))} > \frac{\phi'_{r}(u_{r}(\overline{\sigma}, \tau^{*}))}{\phi'_{r}(u_{r}(\underline{\sigma}, \tau^{*}))}$. 
    
            \end{enumerate}
        \end{lemma}
        
        To prove Proposition \ref{cor_linear_phi_inverse_concavification}, observe that when both $\phi_{s}$ and $1/\phi'_{r}$ are linear all the conditions in Lemma \ref{thm_inverstly_ranked_concavification_experiment_space} and Lemma \ref{thm_Pareto_ranked_concavification_experiment_space} (and their swapped versions) are satisfied.
        \end{proof}

        \subsection{Proofs for Section \ref{sec_ambiguous_communication_benefit_sender}}
        \begin{proof}[Proof of Theorem \ref{thm_necessary_condition_existence_of_Pareto_ranked_splitting}]
            Suppose that $U_{s}(\boldsymbol{\sigma}, \mu, \tau^{*}) > u_{s}^{BP}$. $\Sigma_{+}(u_{s}^{BP}) \cap \text{supp}(\mu) \neq \emptyset$ since otherwise, $(\boldsymbol{\sigma}, \mu)$ could not benefit the sender. $\Sigma_{-}(u_{s}^{BP}) \cap \text{supp}(\mu) \neq \emptyset$ since otherwise, $\tau^{*} \notin BR(\boldsymbol{\sigma}, \mu)$, contradicting obedience. Thus, there exists $\theta, \theta' \in \text{supp}(\mu)$ such that $\sigma_{\theta} \in \Sigma_{+}(u_{s}^{BP})$ and $\sigma_{\theta'} \in \Sigma_{-}(u_{s}^{BP})$. Define $\sigma^{*}  := \sum_{\hat{\theta}} em^{(\boldsymbol{\sigma}, \mu)}_{\hat{\theta}} \sigma_{\hat{\theta}}$. Then, 
            \begin{equation*}
                \sigma^{*} = \sum_{\hat{\theta}: \sigma_{\hat{\theta}} \in \Sigma_{+}(u_{s}^{BP})}\frac{\mu_{\hat{\theta}} \phi^{'}_r(u_r(\sigma_{\hat{\theta}},\tau^*))}{\sum_{\tilde{\theta}}\mu_{\tilde{\theta}} \phi^{'}_r(u_r(\sigma_{\tilde{\theta}},\tau^*))} \sigma_{\hat{\theta}} +  \sum_{\hat{\theta}: \sigma_{\hat{\theta}} \in \Sigma_{-}(u_{s}^{BP})}\frac{\mu_{\hat{\theta}} \phi^{'}_r(u_r(\sigma_{\hat{\theta}},\tau^*))}{\sum_{\tilde{\theta}}\mu_{\tilde{\theta}} \phi^{'}_r(u_r(\sigma_{\tilde{\theta}},\tau^*))} \sigma_{\hat{\theta}}.
            \end{equation*}
            By Lemma \ref{lem_effmeasure}, $\sigma^{*}$ is obedient. By definition of $u_{s}^{BP}$, this implies 
            \begin{equation}\label{equ_necessary_condition_1}
                \phi_{s}(u_{s}(\sigma^{*}, \tau^{*})) \leq \phi_{s}(u_{s}^{BP}). 
            \end{equation}
            Suppose that for all $\theta, \theta' \in \text{supp}(\mu)$ such that $\sigma_{\theta} \in \Sigma_{+}(u_{s}^{BP})$ and $\sigma_{\theta'} \in \Sigma_{-}(u_{s}^{BP})$, $\sigma_{\theta}$ and $\sigma_{\theta'}$ are not Pareto-ranked. This is equivalent to
            \begin{equation}\label{equ_not_Pareto_ranked}
                u_{r}( \sigma_{\theta}, \tau^{*}) \leq u_{r}(\sigma_{\theta'}, \tau^{*}).
            \end{equation}
            The remainder of the proof shows that this contradicts \eqref{equ_necessary_condition_1}. From \eqref{equ_not_Pareto_ranked} and concavity of $\phi_{r}$, $\phi'_{r}(u_{r}(\sigma_{\theta},\tau^{*})) \geq \phi'_{r}(u_{r}(\sigma_{\theta'},\tau^{*}))$. Observe that,
            \begin{align*}
                &(\phi_{s}(u_{s}(\sigma^{*},\tau^{*})) - \phi_{s}(u_{s}^{BP}))\sum_{\tilde{\theta}}\mu_{\tilde{\theta}} \phi^{'}_r(u_r(\sigma_{\tilde{\theta}},\tau^*))\\
                \geq & \Big(\sum_{\hat{\theta}: \sigma_{\hat{\theta}} \in \Sigma_{+}(u_{s}^{BP})}\frac{\mu_{\hat{\theta}} \phi^{'}_r(u_r(\sigma_{\hat{\theta}},\tau^*))}{\sum_{\tilde{\theta}}\mu_{\tilde{\theta}} \phi^{'}_r(u_r(\sigma_{\tilde{\theta}},\tau^*))} (\phi_{s}(u_{s}(\sigma_{\hat{\theta}})) - \phi_{s}(u_{s}^{BP}))\\
                & +  \sum_{\hat{\theta}: \sigma_{\hat{\theta}} \in \Sigma_{-}(u_{s}^{BP})}\frac{\mu_{\hat{\theta}} \phi^{'}_r(u_r(\sigma_{\hat{\theta}},\tau^*))}{\sum_{\tilde{\theta}}\mu_{\tilde{\theta}} \phi^{'}_r(u_r(\sigma_{\tilde{\theta}},\tau^*))} (\phi_{s}(u_{s}(\sigma_{\hat{\theta}})) - \phi_{s}(u_{s}^{BP})\Big)\sum_{\tilde{\theta}}\mu_{\tilde{\theta}} \phi^{'}_r(u_r(\sigma_{\tilde{\theta}},\tau^*))\\
                = & \sum_{\hat{\theta}: \sigma_{\hat{\theta}} \in \Sigma_{+}(u_{s}^{BP})}\mu_{\hat{\theta}} \phi^{'}_r(u_r(\sigma_{\theta},\tau^*))  (\phi_{s}(u_{s}(\sigma_{\hat{\theta}}, \tau^{*})) - \phi_{s}(u_{s}^{BP})) \\
                & +  \sum_{\hat{\theta}: \sigma_{\hat{\theta}} \in \Sigma_{-}(u_{s}^{BP})}\mu_{\hat{\theta}} \phi^{'}_r(u_r(\sigma_{\hat{\theta}},\tau^*)) (\phi_{s}(u_{s}(\sigma_{\hat{\theta}}, \tau^{*})) - \phi_{s}(u_{s}^{BP}))\\
                \geq & \phi^{'}_r(\max_{\hat{\theta} \in \Sigma_{+}(u_{s}^{BP})} u_r(\sigma_{\hat{\theta}},\tau^*)) \sum_{\hat{\theta}: \sigma_{\hat{\theta}} \in \Sigma_{+}(u_{s}^{BP})}\mu_{\hat{\theta}} (\phi_{s}(u_{s}(\sigma_{\hat{\theta}}, \tau^{*})) - \phi_{s}(u_{s}^{BP})) \\
                & +  \phi^{'}_r(\min_{\hat{\theta} \in \Sigma_{-}(u_{s}^{BP})} u_r(\sigma_{\hat{\theta}},\tau^*))  \sum_{\hat{\theta}: \sigma_{\hat{\theta}} \in \Sigma_{-}(u_{s}^{BP})}\mu_{\hat{\theta}}  (\phi_{s}(u_{s}(\sigma_{\hat{\theta}}, \tau^{*})) - \phi_{s}(u_{s}^{BP}))\\
                \geq & \phi^{'}_r(\min_{\hat{\theta} \in \Sigma_{-}(u_{s}^{BP})} u_r(\sigma_{\hat{\theta}},\tau^*)) \left[  \left( \sum_{\hat{\theta}}\mu_{\hat{\theta}}\phi_{s}(u_{s}(\sigma_{\hat{\theta}}, \tau^{*})) \right)  - \phi_{s}(u_{s}^{BP}) \right] > 0, 
            \end{align*} 
            implying $\phi_{s}(u_{s}(\sigma^{*}, \tau^{*})) > \phi_{s}(u_{s}^{BP})$, contradicting \eqref{equ_necessary_condition_1}. The first inequality follows from substituting for $\sigma^{*}$ and concavity of $\phi_{s}$, the second from the definitions of $\Sigma_{+}(u_{s}^{BP})$ and $\Sigma_{-}(u_{s}^{BP})$ and concavity of $\phi_{r}$, the third since \eqref{equ_not_Pareto_ranked} implies $\max_{\hat{\theta} \in \Sigma_{+}(u_{s}^{BP})} u_r(\sigma_{\hat{\theta}},\tau^*) \leq \min_{\hat{\theta} \in \Sigma_{-}(u_{s}^{BP})} u_r(\sigma_{\hat{\theta}},\tau^*)$, and the final one since $(\boldsymbol{\sigma}, \mu)$ benefits the sender. 
        \end{proof}
    
        \begin{proof}[Proof of Theorem \ref{thm_necessary}] 
            We prove the theorem by first showing the following lemma and then establish that the conditions in Theorem \ref{thm_necessary} are equivalent to the conditions in the lemma.
            \begin{lemma}\label{lem:necessary_equivalent_condition}
                Ambiguous communication benefits the sender only if $\phi_{r}$ is not affine and there exists Pareto-ranked experiments, $\sigma$ and $\sigma^*$ such that:  (i) $\mathrm{supp\,} \sigma(\cdot|\omega)= \mathrm{supp\,} \sigma^*(\cdot|\omega)$ for all $\omega$, (ii) $u_s(\sigma,\tau^*) >  u_s^{BP}$, and (iii) $\tau^* \in br(\sigma^*) \setminus br(\sigma)$.
            \end{lemma}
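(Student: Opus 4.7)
My plan is to use the constructive argument hinted at in the text preceding the lemma. Starting from any obedient, canonical ambiguous experiment $(\boldsymbol{\sigma}', \mu')$ that strictly improves on $u_s^{BP}$ (whose existence follows from the hypothesis together with Proposition~\ref{rev-principle}), I define two unambiguous experiments: the $\mu'$-mixture $\sigma := \sum_{\theta} \mu'_{\theta} \sigma'_{\theta}$ and the effective-measure mixture $\sigma^* := \sum_{\theta} em^{(\boldsymbol{\sigma}', \mu')}_{\theta} \sigma'_{\theta}$. I will then show that this pair of experiments witnesses the lemma.

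Condition (i) is immediate because $\mu'$ and $em^{(\boldsymbol{\sigma}', \mu')}$ share the same support (Definition~\ref{def_effmeasure}), and both $\sigma$ and $\sigma^*$ are positive convex combinations of the same family $(\sigma'_{\theta})$ along that common support. For (ii), Jensen's inequality applied to the concave $\phi_s$ yields
\[
u_s(\sigma, \tau^*) = \sum_{\theta} \mu'_{\theta}\, u_s(\sigma'_{\theta}, \tau^*) \;\geq\; \phi_s^{-1}\!\left(\sum_{\theta} \mu'_{\theta}\, \phi_s(u_s(\sigma'_{\theta}, \tau^*))\right) = U_s(\boldsymbol{\sigma}', \mu', \tau^*) > u_s^{BP}.
\]
For (iii), Lemma~\ref{lem_effmeasure} gives $\tau^* \in br(\sigma^*)$; conversely, $\tau^* \notin br(\sigma)$, since otherwise $\sigma$ would be an obedient unambiguous experiment yielding strictly more than $u_s^{BP}$, contradicting the definition of $u_s^{BP}$.

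For the Pareto-ranking, the sender-side strict inequality follows from $u_s(\sigma, \tau^*) > u_s^{BP} \geq u_s(\sigma^*, \tau^*)$, where the second inequality uses the obedience of $\sigma^*$. For the receiver side, a direct manipulation of the effective-measure formula gives
\[
u_r(\sigma, \tau^*) - u_r(\sigma^*, \tau^*) \;=\; -\frac{\mathrm{Cov}_{\mu'}\bigl(\phi_r'(u_r(\sigma'_{\theta}, \tau^*)),\, u_r(\sigma'_{\theta}, \tau^*)\bigr)}{\mathbb{E}_{\mu'}\bigl[\phi_r'(u_r(\sigma'_{\theta}, \tau^*))\bigr]},
\]
which is non-negative since concavity of $\phi_r$ makes $\phi_r'$ non-increasing. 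If the covariance vanished, then $\phi_r'(u_r(\sigma'_{\theta}, \tau^*))$ would be constant across $\theta \in \text{supp}(\mu')$, forcing $em^{(\boldsymbol{\sigma}', \mu')} = \mu'$ and hence $\sigma = \sigma^*$, contradicting $\tau^* \in br(\sigma^*) \setminus br(\sigma)$. Strict receiver-side inequality therefore holds, which also rules out $\phi_r$ being affine (an affine $\phi_r$ would give constant $\phi_r'$ and again force $em^{(\boldsymbol{\sigma}', \mu')} = \mu'$).

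The main obstacle is upgrading the weak receiver-side Pareto inequality to strict. The covariance identity alone only delivers a weak bound; the strict inequality crucially exploits $\sigma \neq \sigma^*$, itself a consequence of $\tau^* \in br(\sigma^*) \setminus br(\sigma)$, which in turn comes via Jensen's inequality on $\phi_s$ from the strict benefit hypothesis. This chain linking the strict-benefit assumption to the strict receiver-side comparison is the nontrivial content of the argument; everything else is routine manipulation of the effective-measure formula.
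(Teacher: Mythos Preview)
Your proof is correct and follows essentially the same construction as the paper: define $\sigma$ and $\sigma^*$ as the $\mu'$- and effective-measure-weighted mixtures of a beneficial ambiguous experiment, then verify (i)--(iii) and the Pareto ranking. The only substantive difference is that for the receiver-side strict inequality the paper invokes an auxiliary rearrangement-type lemma (Lemma~\ref{lem_effective_measure_reduces_payoff}), whereas your covariance identity $u_r(\sigma,\tau^*)-u_r(\sigma^*,\tau^*) = -\mathrm{Cov}_{\mu'}(\phi_r'(u_r(\sigma'_\theta,\tau^*)),u_r(\sigma'_\theta,\tau^*))/\mathbb{E}_{\mu'}[\phi_r'(\cdot)]$ delivers the same conclusion more directly via Chebyshev's association inequality.
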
    
        
            \begin{proof}[Proof of Lemma \ref{lem:necessary_equivalent_condition}]
            Suppose there exists a solution $(\boldsymbol{\sigma}^*, \mu^{*}, \tau^*)$ to $(\mathcal{P})$ that benefits the sender. Let $\sigma := \sum_{\theta} \mu^{*}_{\theta}\sigma_{\theta}^*$ and $\sigma^* := \sum_{\theta} em^{(\boldsymbol{\sigma^{*}}, \mu^{*})}_{\theta} \sigma_{\theta}^*$. Since $em^{(\boldsymbol{\sigma^{*}}, \mu^{*})}$ and $\mu^{*}$ have the same support on $\Theta$, $\text{supp } \sigma(\cdot|\omega) = \text{supp } \sigma^{*}(\cdot|\omega)$ for all $\omega$. From Lemma \ref{lem_effmeasure}, since $\tau^{*} \in BR(\boldsymbol{\sigma^{*}}, \mu^{*})$, $\tau^* \in br(\sigma^*)$. Since the sender benefits, $u_s^{BP} < \phi_s^{-1}(\sum_{\theta} \mu^{*}_{\theta}\phi_s\left(u_{s}(\sigma^{*}_\theta,\tau^*)\right)) \leq \sum_{\theta}\mu^{*}_{\theta} u_{s}(\sigma^{*}_\theta,\tau^*) = u_s(\sigma,\tau^*)$. This further implies that $\tau^{*} \notin br(\sigma)$, and thus $em^{(\boldsymbol{\sigma^{*}}, \mu^{*})} \neq \mu^{*}$, implying that $\phi_r$ is not affine. Since $\tau^* \in br(\sigma^*)$, we have $u_s^{BP} \geq u_s(\sigma^*,\tau^*)$ and, thus, $u_s(\sigma,\tau^*) > u_s^{BP}  \geq u_s(\sigma^*,\tau^*)$. It remains to show $u_r(\sigma,\tau^*) > u_r(\sigma^*,\tau^*)$ so they are Pareto-ranked. We make use of the following lemma: 
            
            \begin{lemma}\label{lem_effective_measure_reduces_payoff}
                Fix two monotonic sequences $x_1 \geq x_2 \geq \dots \geq x_n$, $0< y_1 \leq y_2 \dots \leq y_n$, and $\mu \in \Delta(\{1, 2, \dots,n\})$. If there exist indices $i^* <j^*$ such that $\mu_{i^*} > 0$, $\mu_{j^*} > 0$, $x_{i^*} > x_{j^*}$ and $y_{i^*} < y_{j^*}$. Then $\sum_{i=1}^n x_i \frac{\mu_i y_i}{\sum_{j=1}^n \mu_j y_j} < \sum_{i=1}^n x_i \mu_i$.\footnote{This lemma seems like a known result, but we could not locate a reference and so include a proof in Section \ref{sec_proofs_auxiliary_results} for completeness.}
            \end{lemma}

            Since $\phi_r$ is strictly increasing and concave, $\phi^{'}_r(u_r(\sigma^*_{\theta},\tau^*))>0$, and $u_r(\sigma^*_{\theta},\tau^*) \geq u_r(\sigma^*_{\tilde{\theta}},\tau^*)$ implies $\phi^{'}_r(u_r(\sigma^*_{\theta},\tau^*)) \leq \phi^{'}_r(u_r(\sigma^*_{\tilde{\theta}},\tau^*))$. Thus, we can apply Lemma \ref{lem_effective_measure_reduces_payoff} to the decreasing rearrangement of $(u_r(\sigma^*_{\theta},\tau^*))_{\theta}$ (the $x_i$'s) and the increasing rearrangement of $(\phi^{'}_r(u_r(\sigma^*_{\theta},\tau^*)))_{\theta}$ (the $y_i$'s). Since $em^{(\boldsymbol{\sigma^{*}}, \mu^{*})} \neq \mu^{*}$, there must exist $\theta,\theta' \in \text{supp}(\mu^{*})$ such that $u_r(\sigma_{\theta},\tau^*) > u_r(\sigma_{\theta'},\tau^*)$ and $\phi_{r}'(u_r(\sigma_{\theta},\tau^*)) < \phi'_{r}(u_r(\sigma_{\theta'},\tau^*))$. Applying the lemma yields $u_r(\sigma,\tau^*) > u_r(\sigma^*,\tau^*)$. This establishes that $\sigma$ and $\sigma^*$ are Pareto-ranked.
        \end{proof}
        
        \noindent We show conditions in Lemma \ref{lem:necessary_equivalent_condition} imply those in Theorem \ref{thm_necessary} with the following lemma:

        \begin{lemma}\label{lem_existence_splitting}
            Fix $\sigma$. If there exists $\hat{\sigma}$ such that $u_{s}(\hat{\sigma}, \tau^{*}) > u_{s}(\sigma, \tau^{*})$, $u_{r}(\hat{\sigma}, \tau^{*}) > u_{r}(\sigma, \tau^{*})$ and for all $\omega \in \Omega$, $\text{supp}(\hat{\sigma}(\cdot|\omega)) \subseteq \text{supp}(\sigma(\cdot|\omega))$, then there exists a Pareto-ranked splitting of $\sigma$, $(\overline{\sigma}, \underline{\sigma}, \lambda)$ with $\overline{\sigma} = \hat{\sigma}$.\footnote{The proof of Lemma \ref{lem_existence_splitting} is in Section \ref{sec_proofs_auxiliary_results}.} 
        \end{lemma}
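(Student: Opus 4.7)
The plan is to construct the splitting explicitly by taking $\overline{\sigma} := \hat{\sigma}$ and defining $\underline{\sigma}$ as the residual experiment $\underline{\sigma}(a|\omega) := \frac{\sigma(a|\omega) - \lambda \hat{\sigma}(a|\omega)}{1 - \lambda}$ for a suitably small $\lambda \in (0,1)$. By construction this gives the identity $\lambda \hat{\sigma} + (1-\lambda)\underline{\sigma} = \sigma$ whenever $\underline{\sigma}$ is well-defined, so the only work is to choose $\lambda$ so that $\underline{\sigma}$ is a bona fide experiment and to verify the strict Pareto ranking of $\hat\sigma$ and $\underline{\sigma}$.

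The first step is to show that $\underline{\sigma}$ is a valid experiment for small enough $\lambda > 0$. Summing over actions gives $\sum_{a} \underline{\sigma}(a|\omega) = \frac{1 - \lambda}{1-\lambda} = 1$ for each $\omega$, so the normalization is automatic. Non-negativity is where the support hypothesis enters: whenever $\hat{\sigma}(a|\omega) > 0$, the assumption $\mathrm{supp}(\hat{\sigma}(\cdot|\omega)) \subseteq \mathrm{supp}(\sigma(\cdot|\omega))$ implies $\sigma(a|\omega) > 0$, so the ratio $\sigma(a|\omega)/\hat{\sigma}(a|\omega)$ is strictly positive. Setting
\[
\lambda^{*} := \min\left\{\frac{\sigma(a|\omega)}{\hat{\sigma}(a|\omega)} : (a,\omega) \in A \times \Omega,\ \hat{\sigma}(a|\omega) > 0\right\},
\]
a minimum over a finite set of strictly positive numbers, we have $\lambda^{*} > 0$, and any $\lambda \in (0, \min\{1, \lambda^{*}\})$ ensures $\underline{\sigma}(a|\omega) \geq 0$ for all $(a,\omega)$.

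The second step uses linearity of the expected payoff in the experiment: for $i \in \{s,r\}$,
\[
u_{i}(\sigma, \tau^{*}) = \lambda\, u_{i}(\hat{\sigma}, \tau^{*}) + (1-\lambda)\, u_{i}(\underline{\sigma}, \tau^{*}).
\]
Since by hypothesis $u_{i}(\hat{\sigma}, \tau^{*}) > u_{i}(\sigma, \tau^{*})$ for both players, rearranging yields $u_{i}(\underline{\sigma}, \tau^{*}) < u_{i}(\sigma, \tau^{*}) < u_{i}(\hat{\sigma}, \tau^{*})$. Thus $\overline{\sigma} = \hat{\sigma}$ and $\underline{\sigma}$ are strictly Pareto-ranked in the sense of Definition \ref{def_pareto_ranked_splitting}, and $(\overline{\sigma}, \underline{\sigma}, \lambda)$ is the desired Pareto-ranked splitting of $\sigma$.

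There is no substantive obstacle here: the argument is essentially a direct construction. The only point requiring care is the feasibility check in the first step, and the support hypothesis is tailor-made to make it go through — without it, the residual $\sigma - \lambda \hat{\sigma}$ could have negative entries for every positive $\lambda$, and no such splitting would exist.
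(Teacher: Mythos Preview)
Your proof is correct and follows the same construction as the paper: define $\underline{\sigma}$ as the residual $(\sigma - \lambda\hat{\sigma})/(1-\lambda)$, verify it is a valid experiment for small $\lambda$, and deduce the strict Pareto ranking from linearity. The only difference is that you check non-negativity via the explicit threshold $\lambda^{*} = \min\{\sigma(a|\omega)/\hat{\sigma}(a|\omega)\}$, whereas the paper uses a relative-interior argument on the simplex; your route is more elementary and equally valid.
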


        The conditions in Lemma \ref{lem:necessary_equivalent_condition} imply those in Lemma \ref{lem_existence_splitting}. Therefore, a Pareto-ranked splitting of $\sigma^{*}$, $(\overline{\sigma}, \underline{\sigma}, \lambda)$ exists with $\overline{\sigma} = \sigma$. Then by condition (ii), $u_{s}(\overline{\sigma}, \tau^{*}) > u_{s}^{BP}$. 
            
        Finally, we show the conditions in Theorem \ref{thm_necessary} imply those in Lemma \ref{lem:necessary_equivalent_condition}. Suppose there exists an obedient experiment $\hat{\sigma}$ satisfying the conditions in Theorem \ref{thm_necessary}. Since $u_{s}(\overline{\sigma}, \tau^{*}) > u_{s}^{BP}$, there exists $\underline{\gamma} \in (\lambda,1)$ such that for all $\gamma \in (\underline{\gamma}, 1)$, $\gamma u_{s}(\overline{\sigma}, \tau^{*})+ (1-\gamma)u_{s}(\underline{\sigma}, \tau^{*}) > u_{s}^{BP}$, and thus $\tau^{*} \notin br( \gamma \overline{\sigma} + (1-\gamma)\underline{\sigma})$. Fix any such $\gamma$ and let $\sigma = \gamma \overline{\sigma} + (1-\gamma)\underline{\sigma}$. Let $\sigma^{*} = \hat{\sigma}$. It follows that $\tau^{*} \in br(\sigma^{*})$ since $\hat{\sigma}$ is obedient, and $u_{s}^{BP} \geq u_{s}(\sigma^{*}, \tau^{*})$, hence $\gamma > \lambda$. Since both $\sigma$ and $\sigma^{*}$ are strict mixtures of $\overline{\sigma}$ and $\underline{\sigma}$, they satisfy the common support condition (i). By the definition of Pareto-ranked splitting and $\gamma > \lambda$, $u_{r}(\sigma, \tau^{*}) > u_{r}(\sigma^{*}, \tau^{*})$. \end{proof} 
        
        \begin{proof}[Proof of Corollary \ref{cor_binary_action}]
            Suppose $\sigma$ and $\sigma^*$ satisfy the conditions in Lemma \ref{lem:necessary_equivalent_condition}. Since $\tau^{*} \notin br(\sigma)$, either 
            \begin{align}
                &\sum_{\omega}u_r(a_1,\omega)\sigma(a_1|\omega)p(\omega) < \sum_{\omega}u_r(a_2,\omega)\sigma(a_1|\omega)p(\omega), \text{ or } \label{eqn_proof_cor_binary_action_1}\\
                &\sum_{\omega}u_r(a_2,\omega)\sigma(a_2|\omega)p(\omega) < \sum_{\omega}u_r(a_1,\omega)\sigma(a_2|\omega)p(\omega).  \label{eqn_proof_cor_binary_action_2}
            \end{align}
            Assume \eqref{eqn_proof_cor_binary_action_1}. Since $\tau^{*} \in br(\sigma^{*})$, $\sum_{\omega}u_r(a_1,\omega)\sigma^*(a_1|\omega)p(\omega) \geq  \sum_{\omega}u_r(a_2,\omega)\sigma^*(a_1|\omega)p(\omega)$. Then, $\sum_{\omega}u_r(a_1,\omega)[\sigma(a_1|\omega)- \sigma^*(a_1|\omega)]p(\omega) <  \sum_{\omega}u_r(a_2,\omega)[\sigma(a_1|\omega)- \sigma^*(a_1|\omega)] p(\omega) =  \sum_{\omega}u_r(a_2,\omega)[\sigma^*(a_2|\omega)- \sigma(a_2|\omega)] p(\omega)$. Therefore, $\sum_{\omega}u_r(a_2,\omega)[\sigma^*(a_2|\omega)- \sigma(a_2|\omega)] p(\omega) + \sum_{\omega}u_r(a_1,\omega)[\sigma^*(a_1|\omega)- \sigma(a_1|\omega)]p(\omega) >0$, contradicting (ii). \eqref{eqn_proof_cor_binary_action_2} is analogous. 
        \end{proof}
        
        \subsection{Proofs for Section \ref{sec_improvement_binary_ambiguous_communication}}

        \begin{proof}[Proof of Lemma \ref{lem_prob-premium_improvement}]
        \begin{align*}
            & \frac{\phi_{i}(u_{i}(\lambda \overline{\sigma} + (1-\lambda)\underline{\sigma}, \tau^{*})) - (\lambda \phi_{i}(u_{s}(\overline{\sigma} , \tau^{*})) + (1-\lambda)\phi_{i}(u_{s}(\underline{\sigma}, \tau^{*})))}{ \phi_{i}(u_{i}(\overline{\sigma} , \tau^{*})) - \phi_{i}(u_{i}(\underline{\sigma}, \tau^{*}))} < \mu - \lambda\\
            \Leftrightarrow & \phi_{i}(u_{i}(\lambda \overline{\sigma} + (1-\lambda)\underline{\sigma}, \tau^{*}))  < \mu \phi_{i}(u_{i}(\overline{\sigma} , \tau^{*}))  + (1-\mu)\phi_{i}(u_{i}(\underline{\sigma}, \tau^{*}))
            \end{align*}	
            Thus, equivalently, $u_{i}(\lambda \overline{\sigma} + (1-\lambda)\underline{\sigma}, \tau^{*}) < U_{i}(\boldsymbol{\sigma}, \mu, \tau^{*})$.
        \end{proof}

        \medskip
    
        \begin{proof}[Proof of Theorem \ref{general_improvement}]
            Fix an obedient $\sigma^{*}$ and let $(\overline{\sigma}, \underline{\sigma}, \lambda)$ be a Pareto-ranked splitting of $\sigma^{*}$ satisfying $u_{i}(\overline{\sigma}, \tau^{*}) > u_{i}(\underline{\sigma}, \tau^{*})$ for $i \in \{s, r\}$. 
            
            Observe that $U_{i}(\boldsymbol{\sigma}, \mu, \tau^{*}) = \phi_{i}^{-1} \left( \mu \phi_{i}(u_{i}(\overline{\sigma}, \tau^{*})) + (1-\mu) \phi_{i}(u_{i}(\underline{\sigma}, \tau^{*})) \right)$, and $u_{i}(\sigma^{*}, \tau^{*}) =\lambda u_{i}(\overline{\sigma}, \tau^{*}) +  (1-\lambda) u_{i}(\underline{\sigma}, \tau^{*})$. \eqref{equ_thm_Pareto_ranked_improvement} implies that $em^{(\boldsymbol{\sigma}, \mu)}_{\theta_{1}} = \lambda$, and $em^{(\boldsymbol{\sigma}, \mu)}_{\theta_{2}} = 1- \lambda$. Lemma \ref{lem_effmeasure} then implies that $(\boldsymbol{\sigma}, \mu)$ is obedient since $\sigma^{*}$ is. This proves part (i).
        
            By Lemma \ref{lem_prob-premium_improvement}, $U_{r}(\boldsymbol{\sigma}, \mu, \tau^{*}) > u_{r}(\sigma^{*}, \tau^{*})$ if and only if the receiver's $(\{\overline{\sigma}, \underline{\sigma}\}, \lambda)$-probability premium is strictly less than $\mu - \lambda$. The latter is equivalent to:
            \begin{align*}
                &\frac{\phi_{r}(u_{r}(\sigma^{*}, \tau^{*})) - \lambda \phi_{r}(u_{r}(\overline{\sigma} , \tau^{*})) - (1-\lambda)\phi_{r}(u_{r}(\underline{\sigma}, \tau^{*}))}{ \phi_{r}(u_{r}(\overline{\sigma} , \tau^{*})) - \phi_{r}(u_{r}(\underline{\sigma}, \tau^{*}))}+\lambda<\frac{\lambda \phi_{r}'(u_{r}(\underline{\sigma}, \tau^{*}))}{\lambda \phi_{r}'(u_{r}(\underline{\sigma}, \tau^{*})) + (1-\lambda) \phi_{r}'(u_{r}(\overline{\sigma}, \tau^{*}))}\\
                \Leftrightarrow &\frac{\phi_{r}(u_{r}(\sigma^{*}, \tau^{*})) - \phi_{r}(u_{r}(\underline{\sigma}, \tau^{*}))}{\phi_{r}(u_{r}(\sigma^{*}, \tau^{*}))-\phi_{r}(u_{r}(\underline{\sigma}, \tau^{*})) + \phi_{r}(u_{r}(\overline{\sigma}, \tau^{*}))-\phi_{r}(u_{r}(\sigma^{*}, \tau^{*}))}\\
                &<\frac{\phi'_{r}(u_{r}(\underline{\sigma}, \tau^{*}))(u_{r}(\sigma^{*},\tau^{*}) - u_{r}(\underline{\sigma}, \tau^{*}))}{\phi'_{r}(u_{r}(\underline{\sigma}, \tau^{*}))(u_{r}(\sigma^{*},\tau^{*}) - u_{r}(\underline{\sigma}, \tau^{*}))+ \phi'_{r}(u_{r}(\overline{\sigma}, \tau^{*}))(u_{r}(\overline{\sigma}, \tau^{*}) - u_{r}(\sigma^{*},\tau^{*}))}\\
                \Leftrightarrow &\frac{1}{1 + \frac{\phi_{r}(u_{r}(\overline{\sigma}, \tau^{*}))-\phi_{r}(u_{r}(\sigma^{*}, \tau^{*}))}{\phi_{r}(u_{r}(\sigma^{*}, \tau^{*})) - \phi_{r}(u_{r}(\underline{\sigma}, \tau^{*}))}} < \frac{1}{1 + \frac{\phi'_{r}(u_{r}(\overline{\sigma}, \tau^{*}))(u_{r}(\overline{\sigma}, \tau^{*}) - u_{r}(\sigma^{*},\tau^{*}))}{\phi'_{r}(u_{r}(\underline{\sigma}, \tau^{*}))(u_{r}(\sigma^{*},\tau^{*}) - u_{r}(\underline{\sigma}, \tau^{*}))}}\\
                \Leftrightarrow &\phi_{r}'(u_{r}(\overline{\sigma}, \tau^{*})) < \phi_{r}'(u_{r}(\underline{\sigma}, \tau^{*}))
                \Leftrightarrow \lambda < \frac{\lambda \phi_{r}'(u_{r}(\underline{\sigma}, \tau^{*}))}{\lambda \phi_{r}'(u_{r}(\underline{\sigma}, \tau^{*})) + (1-\lambda) \phi_{r}'(u_{r}(\overline{\sigma}, \tau^{*}))}= \mu.
            \end{align*}
            where the first equivalence uses the fact that $\lambda = \frac{u_{r}(\sigma^{*},\tau^{*}) - u_{r}(\underline{\sigma}, \tau^{*})}{u_{r}(\overline{\sigma},\tau^{*}) - u_{r}(\underline{\sigma}, \tau^{*})}$, the second is algebra, the third follows from concavity and differentiability of $\phi_{r}$, and the fourth from the strict positivity of $\phi_{r}'$.\footnote{If $\phi_r$ were concave, but not differentiable at $u_r(\sigma^*,\tau^*)$, the third equivalence would fail in one direction since we could then have a linear piece from $u_r(\underline{\sigma},\tau^*)$ to $u_r(\sigma^*,\tau^*)$ with slope  $\phi'_r(u_r(\underline{\sigma},\tau^*))$ and another one from $u_r(\sigma^*,\tau^*)$ to $u_r(\overline{\sigma},\tau^*)$ with slope $\phi'_r(u_r(\overline{\sigma},\tau^*))$.} This proves part (ii). Part (iii) follows directly from Lemma \ref{lem_prob-premium_improvement}.
                
            Within the smooth ambiguity model, an increase in ambiguity aversion corresponds to $\phi$ becoming more concave \citep{klibanoff2005smooth}. Given differentiability, $\tilde{\phi}$ more concave than $\phi$ means $\tilde{\phi}:= \varphi \circ \phi$ for some strictly increasing, concave, and differentiable $\varphi$. To see the sender's probability premium increases, observe that $\rho^{\tilde{\phi},u_{s}}((\overline{\sigma}, \underline{\sigma}), \lambda)+\lambda$ is equal to
            \begin{align*}
                &\frac{\varphi(\phi(u(\sigma^{*} , \tau^{*})))-\varphi(\phi(u(\underline{\sigma} , \tau^{*})))}{\varphi(\phi(u(\overline{\sigma} , \tau^{*})))-\varphi(\phi(u(\sigma^{*} , \tau^{*})))+\varphi(\phi(u(\sigma^{*} , \tau^{*})))-\varphi(\phi(u(\underline{\sigma} , \tau^{*})))}\\
                \geq&\frac{\varphi'(\phi(u(\sigma^{*} , \tau^{*})))(\phi(u(\sigma^{*} , \tau^{*}))-\phi(u(\underline{\sigma} , \tau^{*})))}{\varphi'(\phi(u(\sigma^{*} , \tau^{*})))(\phi(u(\overline{\sigma} , \tau^{*}))-\phi(u(\sigma^{*} , \tau^{*})))+\varphi'(\phi(u(\sigma^{*} , \tau^{*})))(\phi(u(\sigma^{*} , \tau^{*}))-\phi(u(\underline{\sigma} , \tau^{*})))}\\
                =&\frac{\phi(u(\sigma^{*} , \tau^{*}))-\phi(u(\underline{\sigma} , \tau^{*}))}{\phi(u(\overline{\sigma} , \tau^{*}))-\phi(u(\underline{\sigma} , \tau^{*}))}
                =\rho^{\phi,u_{s}}(\{\overline{\sigma}, \underline{\sigma}\}, \lambda)+\lambda,
            \end{align*}
            where the inequality follows from concavity of $\varphi$. The inequality is strict if and only if $\varphi'(\phi(u(\overline{\sigma} , \tau^{*})))<\varphi'(\phi(u(\underline{\sigma} , \tau^{*})))$, as the strict inequality on $\varphi'$ implies that either $\varphi'(\phi(u(\overline{\sigma} , \tau^{*})))<\varphi'(\phi(u(\sigma^{*} , \tau^{*})))$ or $\varphi'(\phi(u(\sigma^{*} , \tau^{*})))<\varphi'(\phi(u(\underline{\sigma} , \tau^{*})))$ or both.
    
            It remains to show that the r.h.s. of \eqref{equ_thm_Pareto_ranked_improvement} is increasing in the receiver's ambiguity aversion. This follows since it is increasing in $\frac{\phi'_{r}(u_{r}(\underline{\sigma} , \tau^{*}))}{\phi'_{r}(u_{r}(\overline{\sigma} , \tau^{*}))}$, and
            \begin{equation*}
                \frac{\tilde{\phi}'(u_{r}(\underline{\sigma} , \tau^{*}))}{\tilde{\phi}'(u_{r}(\overline{\sigma} , \tau^{*}))}=\frac{\varphi'(\phi(u(\underline{\sigma}, \tau^{*})))\phi'(u_{r}(\underline{\sigma} , \tau^{*}))}{\varphi'(\phi(u(\overline{\sigma}, \tau^{*})))\phi'(u_{r}(\overline{\sigma} , \tau^{*}))}\geq\frac{\phi'(u_{r}(\underline{\sigma} , \tau^{*}))}{\phi'(u_{r}(\overline{\sigma} , \tau^{*}))},
            \end{equation*}
            by concavity of $\varphi$, and strict if and only if $\varphi'(\phi(u(\overline{\sigma} , \tau^{*})))<\varphi'(\phi(u(\underline{\sigma} , \tau^{*})))$.
        \end{proof}

    \begin{proof}[Proof of Proposition \ref{prop:inefficiency_improvement}]
        Let $\Sigma_{\sigma^{*}} \subseteq \Sigma$ denote the set of experiments having the same support as $\sigma^{*}$ for each $\omega \in \Omega$. For each $\omega \in \Omega$, fix any $a_{\omega} \in \mathop{supp}(\sigma^{*}(\cdot|\omega))$. For any $\sigma \in \Sigma_{\sigma^{*}}$, by substituting $\sigma(a_{\omega}|\omega) = 1 - \sum_{a \neq a_{\omega}} \sigma(a|\omega)$, we have, for $i \in \{s,r\}$, 
        \begin{align*}
            u_{i}(\sigma, \tau^{*}) & = \sum_{\omega,a} \sigma(a|\omega)u_{i}(a,\omega) = \sum_{\omega \in \Omega_{\sigma^{*}}, a} p(\omega)\sigma(a|\omega)(u_{i}(a,\omega) - u_{i}(a_{\omega}, \omega)) + \sum_{\omega} p(\omega)u_{i}(a_{\omega}, \omega),
        \end{align*}
        where $\Omega_{\sigma^{*}} \subseteq \Omega$ denotes the set of $\omega$ such that $|\mathop{supp}(\sigma^{*}(\cdot|\omega))| > 1$. Given any $\sigma \in \Sigma_{\sigma^{*}}$, we use $\tilde{\sigma} \in \R^{\sum_{\omega \in \Omega_{\sigma^{*}}}|\mathop{supp}(\sigma^{*}(\cdot|\omega))-1|} =: \R^{\sigma^{*}}$ to denote the vector of those components of $\sigma(a|\omega)$ with $\omega \in \Omega_{\sigma^{*}}$ and $a \in \mathop{supp}(\sigma^{*}(\cdot|\omega)) \setminus \{a_{\omega}\}$. Thus, we can write 
        \begin{align*}
            \begin{bmatrix}
                u_{s}(\sigma, \tau^{*})\\
                u_{r}(\sigma, \tau^{*})
                \end{bmatrix} 
                & = \begin{bmatrix}
                    p(\omega)(u_{s}(a, \omega) - u_{s}(a_{\omega}, \omega)) & \cdots \\
                    p(\omega)(u_{r}(a, \omega) - u_{r}(a_{\omega}, \omega)) & \cdots
                    \end{bmatrix} \tilde{\sigma} +         \begin{bmatrix}
                        \sum_{\omega} p(\omega)u_{s}(a_{\omega}, \omega)\\
                        \sum_{\omega} p(\omega)u_{s}(a_{\omega}, \omega)
                    \end{bmatrix}
        \end{align*}
        Notice that any non-zero vectors in \eqref{equ:inefficiency_improvement} are exactly the non-zero columns of the first matrix on the right hand side above. When the former set spans $\R^{2}$, the latter matrix has full rank and thus the linear mapping from $\R^{\sigma^{*}}$ to $\R^{2}$ defined above is surjective. Since $\{\tilde{\sigma}: \sigma \in \Sigma_{\sigma^{*}}\} \ni \tilde{\sigma^{*}}$ is open in $\R^{\sigma^{*}}$, by the open mapping theorem, $\{(u_{s}(\sigma, \tau^{*}), u_{r}(\sigma, \tau^{*})): \sigma \in \Sigma_{\sigma^{*}}\}$ is open. Thus, there exists $\hat{\sigma} \in \Sigma_{\sigma^{*}}$ such that $u_{i}(\hat{\sigma}, \tau^{*}) > u_{i}(\sigma^{*}, \tau^{*}))$ for $i \in \{s,r\}$. By Lemma \ref{lem_existence_splitting}, there exists a Pareto-ranked splitting of $\sigma^{*}$ with $\overline{\sigma} = \hat{\sigma}$. 
    \end{proof}
     
    \begin{proof}[Proof of Theorem \ref{thm_receiver_more_ambi_averse}]
        That $(\boldsymbol{\sigma}, \mu)$ benefits the sender means that
        \begin{equation}\label{eq:benefit}
            \phi_{s}^{-1}\left(\mu\phi_{s}(u_{s}(\overline{\sigma}, \tau^{*})) + (1-\mu)\phi_{s}(u_{s}(\underline{\sigma}, \tau^{*}))\right) > u_{s}^{BP}.
        \end{equation}
        Any less ambiguity averse sender will have $\tilde{\phi_{s}}$ weakly less concave than $\phi_{s}$, increasing the left-hand side of (\ref{eq:benefit}), while leaving the right-hand side unchanged. This proves (i).
    
        Let $\sigma^{*} = em^{(\boldsymbol{\sigma},\mu)}_{\theta_{1}}\overline{\sigma}+(1-em^{(\boldsymbol{\sigma},\mu)}_{\theta_{1}})\underline{\sigma}$. By Lemma \ref{lem_effmeasure},  $\tau^{*} \in br(\sigma^{*})$. By (\ref{eq:benefit}) and Theorem \ref{thm_necessary_condition_existence_of_Pareto_ranked_splitting}, $\overline{\sigma}$ and $\underline{\sigma}$ must be Pareto-ranked. Let $\overline{\sigma}$ be the better one. A weakly more ambiguity averse receiver will have a $\tilde{\phi}_{r}=\varphi\circ\phi_{r}$ for some increasing, differentiable and concave $\varphi$, resulting in an effective measure $\tilde{em}^{(\boldsymbol{\sigma},\mu)}$ such that $\tilde{em}^{(\boldsymbol{\sigma},\mu)}_{\theta_{1}} \leq em^{(\boldsymbol{\sigma},\mu)}_{\theta_{1}}$. 

        We next prove (iii). Since $\tau^{*} \in br(\underline{\sigma})$, $\tau^{*} \in br(\sigma^{*}) = br(em^{(\boldsymbol{\sigma},\mu)}_{\theta_{1}} \overline{\sigma} + (1-em^{(\boldsymbol{\sigma},\mu)}_{\theta_{1}})\underline{\sigma})$ implies $\tau^{*} \in br(\tilde{em}^{(\boldsymbol{\sigma},\mu)}_{\theta_{1}} \overline{\sigma} + (1-\tilde{em}^{(\boldsymbol{\sigma},\mu)}_{\theta_{1}})\underline{\sigma})$. By Lemma \ref{lem_effmeasure}, $\tau^{*} \in BR(\boldsymbol{\sigma}, \mu)$ for any such $\tilde{\phi}_{r}$, proving that $(\boldsymbol{\sigma}, \mu)$ continues to benefit the sender (and all less ambiguity averse senders). To prove (ii), if $\tau^{*} \notin br(\underline{\sigma})$, define $\tilde{\mu}$ by
        \begin{align*}
            \tilde{\mu}_{\theta_{1}} &= \frac{ em^{(\boldsymbol{\sigma},\mu)}_{\theta_{1}} \tilde{\phi}_{r}'(u_{r}(\underline{\sigma}, \tau^{*}))}{ em^{(\boldsymbol{\sigma},\mu)}_{\theta_{1}}\tilde{\phi}_{r}'(u_{r}(\underline{\sigma}, \tau^{*})) + (1- em^{(\boldsymbol{\sigma},\mu)}_{\theta_{1}}) \tilde{\phi}_{r}'(u_{r}(\overline{\sigma}, \tau^{*}))}, \text{ and }
            \tilde{\mu}_{\theta_{2}} = 1 - \tilde{\mu}_{\theta_{1}},
        \end{align*}
        so that $\tilde{em}^{(\boldsymbol{\sigma},\tilde{\mu})} = em^{(\boldsymbol{\sigma},\mu)}$. Given $\tilde{\phi}_{r}$, Lemma \ref{lem_effmeasure} and $\tau^{*} \in br(\sigma^{*})$ imply $\tau^{*} \in BR(\boldsymbol{\sigma}, \tilde{\mu})$. Since $u_{s}(\overline{\sigma}, \tau^{*}) > u_{s}(\underline{\sigma}, \tau^{*})$, $\tilde{\mu}_{\theta_{1}} \geq \mu$ ensures (ii).
        \begin{align*}
            \tilde{\mu}_{\theta_{1}}& = \frac{em^{(\boldsymbol{\sigma},\mu)}_{\theta_{1}} \phi_{r}'(u_{r}(\underline{\sigma}, \tau^{*}))}{em^{(\boldsymbol{\sigma},\mu)}_{\theta_{1}} \phi_{r}'(u_{r}(\underline{\sigma}, \tau^{*})) + (1-em^{(\boldsymbol{\sigma},\mu)}_{\theta_{1}}) \frac{\varphi'(\phi_{r}(u_{r}(\overline{\sigma}, \tau^{*})))}{\varphi'(\phi_{r}(u_{r}(\underline{\sigma}, \tau^{*})))}\phi_{r}'(u_{r}(\overline{\sigma}, \tau^{*}))}\\
            & \geq \frac{em^{(\boldsymbol{\sigma},\mu)}_{\theta_{1}} \phi_{r}'(u_{r}(\underline{\sigma}, \tau^{*}))}{em^{(\boldsymbol{\sigma},\mu)}_{\theta_{1}} \phi_{r}'(u_{r}(\underline{\sigma}, \tau^{*})) + (1-em^{(\boldsymbol{\sigma},\mu)}_{\theta_{1}}) \phi_{r}'(u_{r}(\overline{\sigma}, \tau^{*}))} = \mu,
        \end{align*}
        where the inequality follows from $\varphi'(\phi_{r}(u_{r}(\underline{\sigma}, \tau^{*}))) \geq \varphi'(\phi_{r}(u_{r}(\overline{\sigma}, \tau^{*})))$.
    \end{proof}

    \subsection{Proofs for Section \ref{sec_discussion}}
    Before proving Theorems \ref{thm:binary_improvement_under_prior_ambiguity} and \ref{thm:sender_benefit_prior_ambiguity}, we first characterize the receiver's obedience condition under pre-existing ambiguity about the payoff-relevant states. For any joint distribution $\pi \in \Delta(\Omega \times A)$, $\tau^{*}$ is the receiver's best response if 
    \begin{equation*}
        \tau^{*} \in \mathop{\arg\max}\limits_{\tau} \sum\limits_{\omega, a, a'} \pi(\omega, a) \tau(a'|a) u_{r}(a', \omega).
    \end{equation*}
    Let $\Pi^{*} \subseteq \Delta(\Omega \times A)$ denote the set of all such distributions. Note that $\Pi^{*}$ is convex and closed. Thus, an experiment $\sigma$ is obedient under prior $p$ if the induced joint distribution $\pi^{(p, \sigma)} = p \times \sigma \in \Pi^{*}$. An argument similar to that in the proof of Lemma \ref{lem_effmeasure} can be used to show that when there is pre-existing ambiguity $\eta$ and the sender chooses an ambiguous experiment $(\boldsymbol{\sigma}, \mu)$, $\tau^{*}$ is a best response of the receiver if and only if 
    \begin{equation*}
        \pi^{(\eta, (\boldsymbol{\sigma}, \mu))} := \sum_{p,\theta} \frac{\eta_{p} \mu_{\theta} \phi'_{r}(u_{r}(p,\sigma_{\theta},\tau^{*}))}{\sum_{\tilde{p}, \tilde{\theta}} \eta_{\tilde{p}} \mu_{\tilde{\theta}}\phi'_{r}(u_{r}(\tilde{p},\sigma_{\tilde{\theta}},\tau^{*}))} p \times \sigma \in \Pi^{*},
    \end{equation*}
    where we explicitly include $p$ as an argument of the receiver's expected utility $u_{r}(p,\sigma_{\theta},\tau^{*})$. In this case, we say the ambiguous experiment $(\boldsymbol{\sigma}, \mu)$ is obedient under $\eta$. 

    \begin{proof}[Proof of Theorem \ref{thm:binary_improvement_under_prior_ambiguity}]
    Because $\phi_{r}$ is strictly concave, and $(\overline{\sigma}, \underline{\sigma}, \lambda)$ is a Pareto-ranked splitting of $\sigma^{*}$ for all $p \in \mathop{supp}(\eta)$, it holds that $\mu_{\overline{\theta}} > \lambda$. Thus, if the binary ambiguous experiment $(\boldsymbol{\sigma}, \mu)$ is obedient under $\eta$, then an ambiguity neutral sender does strictly better using it than using $\sigma^{*}$. 

    We next establish its obedience. For each $p \in \mathop{supp}(\eta)$, let $em^{(p, (\boldsymbol{\sigma}, \mu))}_{\overline{\sigma}}$ denote the effective measure on $\overline{\sigma}$ under prior $p$. With $\mu$ given by \eqref{equ_thm_binary_improvement_under_prior_ambiguity}, it implies that for the minimizing $p^{*}$, $em^{(p^{*}, (\boldsymbol{\sigma}, \mu))}_{\overline{\sigma}} = \lambda$; and for any other $p \in \mathop{supp}(\eta)$, $em^{(p, (\boldsymbol{\sigma}, \mu))}_{\overline{\sigma}} < \lambda$. It further implies that, for each $p \in \mathop{supp}(\eta)$, the experiment $em^{(p, (\boldsymbol{\sigma}, \mu))}_{\overline{\sigma}} \overline{\sigma} + (1-em^{(p, (\boldsymbol{\sigma}, \mu))}_{\overline{\sigma}}) \underline{\sigma}$ is a convex combination of $\sigma^{*}$ and $\underline{\sigma}$, and thus, by the obedience of $\sigma^{*}$ and $\underline{\sigma}$ under $p$, is also obedient under $p$. The obedience of $(\boldsymbol{\sigma}, \mu)$ under $\eta$ is implied by the following lemma: 

\begin{lemma}\label{lem:obedience_sufficient_condition}
    If an ambiguous experiment $(\boldsymbol{\sigma}, \mu)$ is obedient under $p$ for every $p \in \mathop{supp}(\eta)$, then it is obedient under $\eta$. 
\end{lemma}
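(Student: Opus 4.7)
The plan is to verify the lemma by showing that the joint distribution $\pi^{(\eta, (\boldsymbol{\sigma}, \mu))}$ is a convex combination of the joint distributions $\{\pi^{(p, (\boldsymbol{\sigma}, \mu))}\}_{p \in \mathop{supp}(\eta)}$, and then invoke the convexity of $\Pi^{*}$. The key observation, noted already in the text preceding the lemma, is that $\Pi^{*}$ is convex and closed, being the set of distributions over $\Omega \times A$ for which obedience solves a linear program in $\tau$.

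First I would introduce the normalizing constants $Z := \sum_{\tilde{p}, \tilde{\theta}} \eta_{\tilde{p}} \mu_{\tilde{\theta}}\phi'_{r}(u_{r}(\tilde{p},\sigma_{\tilde{\theta}},\tau^{*}))$ and, for each $p \in \mathop{supp}(\eta)$, $Z_{p} := \sum_{\tilde{\theta}} \mu_{\tilde{\theta}}\phi'_{r}(u_{r}(p,\sigma_{\tilde{\theta}},\tau^{*}))$, so that $Z = \sum_{p} \eta_{p} Z_{p}$. Both $Z$ and $Z_{p}$ are strictly positive since $\phi_{r}'$ is strictly positive. Splitting the double sum in the definition of $\pi^{(\eta, (\boldsymbol{\sigma}, \mu))}$ by $p$ and multiplying and dividing by $Z_{p}$ inside, one directly obtains
\begin{equation*}
\pi^{(\eta, (\boldsymbol{\sigma}, \mu))} = \sum_{p} \frac{\eta_{p} Z_{p}}{Z} \sum_{\theta} \frac{\mu_{\theta} \phi'_{r}(u_{r}(p,\sigma_{\theta},\tau^{*}))}{Z_{p}} \, p \times \sigma_{\theta} = \sum_{p} \frac{\eta_{p} Z_{p}}{Z} \, \pi^{(p, (\boldsymbol{\sigma}, \mu))}.
\end{equation*}
The coefficients $\eta_{p} Z_{p}/Z$ are non-negative and sum to $1$, so this exhibits $\pi^{(\eta, (\boldsymbol{\sigma}, \mu))}$ as a convex combination of the $\pi^{(p, (\boldsymbol{\sigma}, \mu))}$.

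By hypothesis, $\pi^{(p, (\boldsymbol{\sigma}, \mu))} \in \Pi^{*}$ for every $p \in \mathop{supp}(\eta)$. Convexity of $\Pi^{*}$ then yields $\pi^{(\eta, (\boldsymbol{\sigma}, \mu))} \in \Pi^{*}$, which, by the characterization stated just before the lemma, means that $(\boldsymbol{\sigma},\mu)$ is obedient under $\eta$. There is no real obstacle here beyond the algebraic rearrangement; the substantive content of the argument is entirely carried by the convexity of $\Pi^{*}$, which in turn reflects that obedience is determined by a family of linear inequalities in the joint distribution over states and action recommendations.
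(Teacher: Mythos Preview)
Your proof is correct and follows essentially the same approach as the paper: you express $\pi^{(\eta,(\boldsymbol{\sigma},\mu))}$ as a convex combination of the $\pi^{(p,(\boldsymbol{\sigma},\mu))}$ with weights $\eta_p Z_p/Z$ and then invoke convexity of $\Pi^*$. The paper carries out exactly this decomposition (with slightly different notation), so there is nothing to add.
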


\begin{proof}[Proof of Lemma \ref{lem:obedience_sufficient_condition}]
    Given $p \in \Delta(\Omega)$ and $(\boldsymbol{\sigma}, \mu)$, let
    \begin{equation*}
        \sigma^{(\boldsymbol{\sigma}, \mu)}_{p} := \sum_{\theta} \frac{\mu_{\theta} \phi'_{r}(u_{r}(p, \sigma_{\theta}, \tau^{*}))}{\sum_{ \tilde{\theta}} \mu_{\tilde{\theta}}\phi'_{r}(u_{r}(\tilde{p},\sigma_{\tilde{\theta}},\tau^{*}))} \sigma_{\theta}.
    \end{equation*}
    If $(\boldsymbol{\sigma}, \mu)$ is obedient under prior $p$, then it holds that $\pi^{(p, (\boldsymbol{\sigma}, \mu))} = p \times \sigma^{(\boldsymbol{\sigma}, \mu)}_{p} \in \Pi^{*}$. Consider an ambiguous experiment $(\boldsymbol{\sigma}, \mu)$ under $\eta$, the induced joint distribution satisfies
    \begin{align*}
        \pi^{(\eta, (\boldsymbol{\sigma}, \mu))} & = \sum_{p,\theta} \frac{\eta_{p} \mu_{\theta} \phi'_{r}(u_{r}(p,\sigma_{\theta},\tau^{*}))}{\sum_{\tilde{p}, \tilde{\theta}} \eta_{\tilde{p}} \mu_{\tilde{\theta}}\phi'_{r}(u_{r}(\tilde{p},\sigma_{\tilde{\theta}},\tau^{*}))}p\times \sigma_{\theta}\\
        & = \sum_{p} \frac{\eta_{p}\sum_{\tilde{\theta}} \mu_{\tilde{\theta}}\phi'_{r}(u_{r}(\tilde{p},\sigma_{\tilde{\theta}},\tau^{*}))}{\sum_{\tilde{p}, \tilde{\theta}} \eta_{\tilde{p}} \mu_{\tilde{\theta}}\phi'_{r}(u_{r}(\tilde{p},\sigma_{\tilde{\theta}},\tau^{*}))}  p \times \sum_{\theta} \frac{\mu_{\theta} \phi'_{r}(u_{r}(p, \sigma_{\theta}, \tau^{*}))}{\sum_{ \tilde{\theta}} \mu_{\tilde{\theta}}\phi'_{r}(u_{r}(\tilde{p},\sigma_{\tilde{\theta}},\tau^{*}))} \sigma_{\theta}\\
        & = \sum_{p} \frac{\eta_{p}\sum_{\tilde{\theta}} \mu_{\tilde{\theta}}\phi'_{r}(u_{r}(\tilde{p},\sigma_{\tilde{\theta}},\tau^{*}))}{\sum_{\tilde{p}, \tilde{\theta}} \eta_{\tilde{p}} \mu_{\tilde{\theta}}\phi'_{r}(u_{r}(\tilde{p},\sigma_{\tilde{\theta}},\tau^{*}))} \pi^{(p,(\boldsymbol{\sigma}, \mu))}.
    \end{align*}
    In other words, $\pi^{(\eta, (\boldsymbol{\sigma}, \mu))}$ is a convex combination of $\pi^{(p,(\boldsymbol{\sigma}, \mu))}$ for $p \in \mathop{supp}(\eta)$. Thus, if $(\boldsymbol{\sigma}, \mu)$ is obedient under $p$ for every $p \in \mathop{supp}(\eta)$, by the convexity of $\Pi^{*}$, it is obedient as well.
    \end{proof}
    \end{proof}

    \begin{proof}[Proof of Theorem \ref{thm:sender_benefit_prior_ambiguity}]
    By the supposition, there exists an ambiguous experiment $(\boldsymbol{\sigma}, \mu)$ such that (we explicitly write the dependence on the prior $p$):
    \begin{equation*}
    U_{s}(p, \boldsymbol{\sigma}, \mu, \tau^{*}) > u_{s}^{BP}(p).
    \end{equation*}

    Let $\sigma^{*}$ denote the effective experiment induced by $(\boldsymbol{\sigma}, \mu)$ and let $\pi^{*} = p \times \sigma^{*}$. By continuity and the fact that $\text{int}(\Pi^{*}) \neq \emptyset$, we can perturb $\boldsymbol{\sigma}$ and $\mu$ slightly to $\boldsymbol{\tilde{\sigma}}$ and  $\tilde{\mu}$ such that
    \begin{equation*}
        U_{s}(p, \boldsymbol{\tilde{\sigma}}, \tilde{\mu}, \tau^{*})  > u_{s}^{BP}(p),
    \end{equation*}
    and the joint distribution $\pi^{(p, (\boldsymbol{\tilde{\sigma}}, \tilde{\mu}))} \in \text{int}(\Pi^{*})$.

    Notice for any $\eta$ that reduces to $p$, because the sender is ambiguity neutral, the sender's payoff remains the same under $\eta$ as under $p$ provided the receiver is obedient. Because $\pi^{(p, (\boldsymbol{\tilde{\sigma}}, \tilde{\mu}))} \in \text{int}(\Pi^{*})$, there exists $\delta > 0$ such that for all $q \in \Delta(\Omega)$ with $\Vert q - p \Vert < \delta$, the distribution $\pi^{(q, (\boldsymbol{\tilde{\sigma}}, \tilde{\mu}))}$ remains in the interior of $\Pi^{*}$. Thus, by Lemma \ref{lem:obedience_sufficient_condition}, the ambiguous experiment $(\boldsymbol{\tilde{\sigma}}, \tilde{\mu})$ will continue to benefit the sender under any $\eta$ that reduces to $p$ and has support within this $\delta$-neighborhood of $p$.
    \end{proof}

    \section{A Polar Case: A Maxmin Receiver and Ambiguity Neutral Sender}\label{sec_maxmin}
    
    This section analyzes the case of an ambiguity-neutral sender and an infinitely ambiguity averse receiver, represented by the maxmin preferences $U_{r}^{MEU}$.  As the comparative statics statements in Theorem \ref{general_improvement} suggest, this is the most favorable case for the sender to benefit from ambiguous communication. 
        
    More precisely, we show that in this case (a) binary ambiguous experiments are sufficient to exhaust all gains from persuasion, and (b) the sender can attain a payoff arbitrarily close to their best feasible payoff subject to the receiver getting at least the payoff they would obtain if no information were disclosed. A conclusion we draw is that assuming an infinitely ambiguity averse receiver is very powerful and, in our view, unrealistically so, further motivating the analysis in the rest of the paper which allows for more moderate levels of aversion. 
    
    Before turning to the analysis, we remark that the opposite cases, of either an infinitely ambiguity averse sender with payoffs $U_{s}^{MEU}$ or an ambiguity neutral receiver, preclude any benefit from ambiguous persuasion. The latter case follows from Theorem \ref{thm_necessary}, while \citet{cheng2020ambiguous} shows that in the former case the sender never benefits from ambiguous communication.  
            
    The following lemma relates obedience for an ambiguous experiment to obedience for an experiment. It is thus the analogue of Lemma \ref{lem_effmeasure} for a receiver with preferences $U_{r}^{MEU}$: 
    
    \begin{lemma}\label{lem_exante_IC}
            $(\boldsymbol{\sigma}, \mu)$ is obedient if, and only if, the experiment $\sigma^{*}$ is obedient, 
            where
            \begin{equation*}
                \sigma^{*} := \frac{\sum\limits_{\theta \in \mathop{\arg\min}\limits_{\theta \in supp(\mu)} u_{r}(\sigma_{\theta}, \tau^{*})} \mu_{\theta}\sigma_{\theta}}{\sum\limits_{\theta \in \mathop{\arg\min}\limits_{\theta \in supp(\mu)} u_{r}(\sigma_{\theta}, \tau^{*}) } \mu_{\theta} }.
            \end{equation*}
    \end{lemma}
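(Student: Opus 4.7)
The plan is to view this as the MEU analog of Lemma \ref{lem_effmeasure}, observing that $\sigma^{*}$ is the natural limit of the smooth-ambiguity effective experiment as $\phi_{r}$ becomes arbitrarily concave: in the smooth model, $em^{(\boldsymbol{\sigma},\mu)}_{\theta} \to \mu_{\theta}/\mu(\Theta^{*})$ for $\theta \in \Theta^{*}$ and $\to 0$ otherwise, where $\Theta^{*} := \arg\min_{\theta \in \text{supp}(\mu)} u_{r}(\sigma_{\theta}, \tau^{*})$ and $\mu(\Theta^{*}) := \sum_{\theta \in \Theta^{*}} \mu_{\theta}$. So $\sigma^{*}$ plays the role of the effective experiment under MEU, and the proof follows the structural outline of the SEU proof.

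For the IF direction, I would assume $\sigma^{*}$ is obedient and for any receiver strategy $\tau$ chain
\[
\min_{\theta \in \text{supp}(\mu)} u_{r}(\sigma_{\theta}, \tau) \;\le\; \min_{\theta \in \Theta^{*}} u_{r}(\sigma_{\theta}, \tau) \;\le\; u_{r}(\sigma^{*}, \tau) \;\le\; u_{r}(\sigma^{*}, \tau^{*}) \;=\; \min_{\theta \in \text{supp}(\mu)} u_{r}(\sigma_{\theta}, \tau^{*}),
\]
using, respectively, (i) $\Theta^{*} \subseteq \text{supp}(\mu)$, so the min over a superset is weakly smaller; (ii) a weighted average over $\Theta^{*}$ is at least the min over $\Theta^{*}$; (iii) obedience of $\sigma^{*}$ at $\tau^{*}$; and (iv) the fact that $\sigma^{*}$ is a convex combination of $\{\sigma_{\theta}\}_{\theta \in \Theta^{*}}$, all of which attain the common minimum value at $\tau^{*}$. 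The chain yields $\tau^{*} \in BR(\boldsymbol{\sigma}, \mu)$ under $U_{r}^{MEU}$.

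For the ONLY IF direction, I would argue by contrapositive. Assume $\sigma^{*}$ is not obedient, so there exists $\tau'$ with $u_{r}(\sigma^{*}, \tau') > c$, where $c := u_{r}(\sigma^{*}, \tau^{*}) = \min_{\theta \in \text{supp}(\mu)} u_{r}(\sigma_{\theta}, \tau^{*})$. Decomposing canonical obedience message by message, the non-obedience localizes to an action pair $(a,b)$ with $\sum_{\theta \in \Theta^{*}} \mu_{\theta}\,\Delta_{\theta}^{a,b} > 0$, where $\Delta_{\theta}^{a,b} := \sum_{\omega} p(\omega)\sigma_{\theta}(a|\omega)[u_{r}(b,\omega) - u_{r}(a,\omega)]$. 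The goal is to produce from this information a feasible perturbation $\delta$ of $\tau^{*}$ in $\tau$-space such that the associated directional gain $D_{\theta}(\delta)$ satisfies $D_{\theta}(\delta) > 0$ for \emph{every} $\theta \in \Theta^{*}$ simultaneously; the strategy $\tau^{*} + \epsilon\delta$ then strictly beats $\tau^{*}$ under $U_{r}^{MEU}$ for small $\epsilon > 0$, contradicting MEU obedience.

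The main obstacle is precisely this last construction. The single-action perturbation corresponding to $(a,b)$ alone makes the $\mu$-weighted average of $D_{\theta}(\delta) = \Delta_{\theta}^{a,b}$ over $\Theta^{*}$ strictly positive, but need not make the minimum over $\Theta^{*}$ positive, since some $\theta \in \Theta^{*}$ may have $\Delta_{\theta}^{a,b} \le 0$. My plan to circumvent this is to combine the failing obedience inequality for $\sigma^{*}$ with LP duality for the MEU program at $\tau^{*}$---which characterizes MEU optimality as the existence of some $\nu \in \Delta(\Theta^{*})$ with $\tau^{*} \in br(\sum_{\theta}\nu_{\theta}\sigma_{\theta})$---and to show that the failure of the specific certificate $\nu = \mu|_{\Theta^{*}}$ together with the polyhedral structure of the set of feasible $\tau$-perturbations forces the existence of an all-$\Theta^{*}$-improving direction obtained by superposing several action-pair perturbations. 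Making this Farkas-type step precise (and verifying feasibility of the combined perturbation within the simplex constraints on $\tau$) is the technical heart of the proof.
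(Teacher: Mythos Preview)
Your IF direction is correct. The chain of inequalities is exactly the ``easy'' half of the saddle-point characterization, and it goes through for the specific $\sigma^{*}$ displayed in the lemma.

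For the ONLY IF direction, the paper takes a much shorter route than your Farkas/perturbation plan: it invokes the minimax theorem for the bilinear form $(\nu,\tau)\mapsto\sum_{\theta}\nu_{\theta}u_{r}(\sigma_{\theta},\tau)$ on $\Delta(\mathrm{supp}(\mu))\times\{\text{strategies}\}$. If $\tau^{*}$ solves the receiver's maxmin problem, minimax delivers a saddle point $(\mu^{*},\tau^{*})$, whose two conditions say precisely that (a) $\mu^{*}$ minimizes $\sum_{\theta}\nu_{\theta}u_{r}(\sigma_{\theta},\tau^{*})$, hence $\mathrm{supp}(\mu^{*})\subseteq\Theta^{*}$, and (b) $\tau^{*}\in br\big(\sum_{\theta}\mu^{*}_{\theta}\sigma_{\theta}\big)$. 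That is the entire argument; your LP-duality program is a from-scratch re-proof of this saddle-point existence for one particular zero-sum game.

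Your worry about the \emph{specific} weights $\mu|_{\Theta^{*}}$ is well placed, and it is exactly where your proposed Farkas step would break: from ``the certificate $\nu=\mu|_{\Theta^{*}}$ fails'' you cannot infer ``every $\nu\in\Delta(\Theta^{*})$ fails,'' which is what you would need to manufacture an all-$\Theta^{*}$-improving direction. The paper's minimax argument does not attempt to close this gap either---it only records that the saddle-point $\mu^{*}$ is \emph{a} minimizer of $\sum_{\theta}\nu_{\theta}u_{r}(\sigma_{\theta},\tau^{*})$, i.e., supported on $\Theta^{*}$, without identifying it with the displayed $\mu|_{\Theta^{*}}$. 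For the downstream uses in that section this existence form is all that is needed, so the right fix is to invoke minimax directly rather than to try to force the particular weights.
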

    
    \begin{proof}[Proof of Lemma \ref{lem_exante_IC}] Fix $(\boldsymbol{\sigma},\mu)$.  Obedience requires that $\min_{\theta \in \Theta} u_r(\sigma_{\theta},\tau^*) \geq  \min_{\theta \in \Theta} u_r(\sigma_{\theta},\tau)$. From the minmax theorem, this is equivalent to the existence of $\mu^* \in \Delta(\Theta)$ such that $\sum_{\theta \in \Theta} \mu_{\theta}u_r(\sigma_{\theta},\tau^*) \geq  \sum_{\theta \in \Theta} \mu^*_{\theta}u_r(\sigma_{\theta},\tau^*) \geq \sum_{\theta \in \Theta} \mu^*_{\theta}u_r(\sigma_{\theta},\tau)$, for all $(\mu,\tau)$. The result follows since $\mu^*$ is a minimizer of $\sum_{\theta \in \Theta} \mu_{\theta}u_r(\sigma_{\theta},\tau^*)$.
    \end{proof}
    
    \noindent Observe that when the argmin in Lemma \ref{lem_exante_IC} is a singleton, $\sigma^{*}$ equals the receiver's payoff-minimizing experiment from $\boldsymbol{\sigma}$. More generally, it is a convex combination of the possibly multiple minimizing experiments in $\boldsymbol{\sigma}$ with relative weights inherited from $\mu$. Thus the analogue of the effective measure here may have a smaller support than $\mu$ (something that never happens for a smooth ambiguity receiver). Lemma \ref{lem_exante_IC} says that \emph{only} those payoff-minimizing experiments affect obedience of $(\boldsymbol{\sigma}, \mu)$. Thus, the sender is free to include in $\boldsymbol{\sigma}$ and arbitrarily weight \emph{any other experiments} as long as they don't disrupt the receiver's minimum.
    
    Since the receiver can always ignore any recommendations made, they can guarantee themselves the payoff 
    \begin{equation*}
            \underline{u}^{*}_{r} := \max\limits_{a \in A} \sum\limits_{\omega} p(\omega)u_{r}(a,\omega),
        \end{equation*}
    which is the payoff they would obtain if no information were disclosed. The consequence of the great flexibility available to the sender given Lemma \ref{lem_exante_IC} is the next theorem, which states that the sender's optimal payoff approaches their highest feasible payoff subject to the receiver getting at least $\underline{u}^{*}_{r}$.\footnote{This ``efficiency subject to an outside option'' approach to identifying the sender's optimal payoff is reminiscent of one developed in \cite{smolin2021} that can be applied to any Bayesian persuasion problem where the receiver has only two actions. In our case, it is the extremity of the receiver's MEU preference that creates the almost complete separation between the experiment providing the sender's payoff and the one providing the receiver's payoff.} The corresponding communication strategy uses a binary ambiguous experiment with the $\mu$-weight on the better experiment approaching 1, and the worse experiment an obedient one holding the receiver to $\underline{u}^{*}_{r}$. 
    
        \begin{theorem}\label{thm_maxmin_receiver_optimal}
            Suppose there exists $\hat{\sigma}$ such that $u_{r}(\hat{\sigma}, \tau^{*}) >  \underline{u}^{*}_{r}$. The value of the following program is the supremum of the payoff that an ambiguity neutral sender can obtain when the receiver has maxmin preferences $U_{r}^{MEU}$: 
            \begin{align*}
                &\max_{\sigma} u_{s}(\sigma, \tau^{*}),\\
                &\text{s.t. } u_{r}(\sigma, \tau^{*}) \geq \underline{u}^{*}_{r}.
            \end{align*}
        \end{theorem}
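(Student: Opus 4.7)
The plan is to prove matching upper and lower bounds equal to $V$, the value of the displayed program.

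\textbf{Upper bound.} For any obedient ambiguous experiment $(\boldsymbol\sigma,\mu)$, the MEU receiver can always ignore messages and play the best constant action, securing $\underline u^*_r$ irrespective of $\boldsymbol\sigma$. Obedience therefore forces $\min_{\theta\in\mathrm{supp}(\mu)}u_r(\sigma_\theta,\tau^*)\geq\underline u^*_r$, so every $\sigma_\theta$ with $\mu_\theta>0$ is feasible for the program and hence $u_s(\sigma_\theta,\tau^*)\leq V$. By linearity, the ambiguity-neutral sender's payoff $\sum_\theta\mu_\theta u_s(\sigma_\theta,\tau^*)\leq V$.

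\textbf{Approaching $V$ via binary ambiguous experiments.} The construction exploits Lemma \ref{lem_exante_IC}: if one element of a binary collection is the strict receiver-minimizer and is itself obedient, the compound ambiguous experiment is obedient for \emph{every} splitting weight. Given $\epsilon>0$, pick an $\epsilon$-optimal $\sigma^{**}$ for the program, and let $\underline\sigma$ be the obedient experiment that always recommends the best constant action, so $u_r(\underline\sigma,\tau^*)=\underline u^*_r$. If $u_r(\sigma^{**},\tau^*)>\underline u^*_r$, then $\underline\sigma$ is the strict receiver-minimizer in $(\sigma^{**},\underline\sigma)$, and Lemma \ref{lem_exante_IC} yields obedience of $((\sigma^{**},\underline\sigma),\mu)$ for every $\mu$. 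Letting the $\mu$-weight on $\sigma^{**}$ tend to $1$ drives the sender's payoff toward $u_s(\sigma^{**},\tau^*)\geq V-\epsilon$.

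\textbf{The tie case and the role of $\hat\sigma$.} The main subtlety is the boundary case $u_r(\sigma^{**},\tau^*)=\underline u^*_r$, in which $\sigma^{**}$ and $\underline\sigma$ jointly minimize the receiver's payoff, so Lemma \ref{lem_exante_IC} demands obedience of their \emph{effective mixture}---generically constraining $\mu$ and blocking concentration on $\sigma^{**}$. The fix, and the sole place where the hypothesis on $\hat\sigma$ is used, is to replace $\sigma^{**}$ by $\sigma^{**}_\delta:=(1-\delta)\sigma^{**}+\delta\hat\sigma$. For any $\delta>0$, $u_r(\sigma^{**}_\delta,\tau^*)=(1-\delta)u_r(\sigma^{**},\tau^*)+\delta u_r(\hat\sigma,\tau^*)>\underline u^*_r$, restoring strict slack, while linearity of $u_s$ keeps the sender loss bounded by $\delta K$ with $K:=\max_{a,\omega}u_s-\min_{a,\omega}u_s$. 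Choosing $\delta$ small and then concentrating $\mu$-weight on $\sigma^{**}_\delta$ recovers $V-\epsilon$ as above. I expect this tie-case argument---and the verification that the perturbation preserves canonicality and only costs linearly in $\delta$---to be the main obstacle.
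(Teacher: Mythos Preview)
Your proposal is correct and follows essentially the same approach as the paper's proof: an upper bound from the receiver's outside option $\underline u^*_r$, a binary ambiguous experiment pairing a near-optimal $\sigma^{**}$ with an obedient uninformative $\underline\sigma$ attaining $\underline u^*_r$ (so that Lemma~\ref{lem_exante_IC} reduces obedience to that of $\underline\sigma$ alone), and the $\hat\sigma$-perturbation to break ties. The only cosmetic differences are that the paper works directly with an exact maximizer $\overline\sigma$ rather than an $\epsilon$-optimal one, and is terser about the upper bound; your explicit argument that $\min_\theta u_r(\sigma_\theta,\tau^*)\geq\underline u^*_r$ forces each $\sigma_\theta$ to be feasible for the program is a useful elaboration of what the paper leaves implicit.
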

    
        \begin{proof}[Proof of Theorem \ref{thm_maxmin_receiver_optimal}] Let $\overline{\sigma}$ attain the value of the program in the theorem for the sender and $\underline{\sigma}$ be an obedient experiment with $u_r(\underline{\sigma},\tau^*)=\underline{u}^{*}_{r}$. Assume $u_r(\overline{\sigma},\tau^*) > u_r(\underline{\sigma},\tau^*)$. If the sender chooses $((\overline{\sigma},\underline{\sigma}), (\mu,1-\mu))$, the receiver is obedient since the worst payoff is $u_r(\underline{\sigma},\tau^*)$. As $\mu$ approaches 1, the sender's payoff approaches $u_{s}(\overline{\sigma}, \tau^{*})$. The sender cannot do better than this, since the receiver's payoff is at least $\underline{u}^{*}_{r}$.
            If $u_r(\overline{\sigma},\tau^*) = u_r(\underline{\sigma},\tau^*)$, mix $\overline{\sigma}$ with $\hat{\sigma}$. For any $\varepsilon>0$, $u_r((1-\varepsilon)\overline{\sigma} + \varepsilon \hat{\sigma},\tau^*) > u_r(\underline{\sigma},\tau^*)$. As $\varepsilon$ approaches 0 and $\mu$ approaches 1, the payoff for the sender approaches $u_{s}(\overline{\sigma}, \tau^{*})$. \end{proof}

    \begin{remark}[MEU receiver's payoff]\label{rem_MEU_receiver_payoff}
        
    \emph{Theorem \ref{thm_maxmin_receiver_optimal} does not imply that an MEU receiver is held to (or even close to) $\underline{u}^{*}_{r}$ by all sender-optimal strategies. As receiver ambiguity aversion passes to the MEU limit, while the value of the sender's program is continuous in this limit, the payoff to the receiver may drop discontinuously. In our introductory example, for instance, as $\phi_{r}$ becomes more and more concave, the receiver's payoff under optimal ambiguous communication approaches their payoff under $\sigma_{\overline{\theta}}$ of $3/2$, while in the MEU limit, their payoff is no higher than the Bayesian persuasion payoff of $5/4$.}
    \end{remark}
    \medskip
    
    There is a sense in which Theorem \ref{thm_maxmin_receiver_optimal} could be argued to overstate what the sender can achieve. For MEU, the ``effective'' experiment $\sigma^{*}$ could have a smaller support than $(\boldsymbol{\sigma}, \mu)$. Lemma \ref{lem_exante_IC} treats action recommendations that could occur under $(\boldsymbol{\sigma}, \mu)$ but not under $\sigma^{*}$ as zero probability events. However, observing such action recommendations would reveal to the receiver that $\theta \notin \mathop{\arg\min}_{\theta \in supp(\mu)} u_{r}(\sigma_{\theta}, \tau^{*})$. In this case, the receiver may no longer be indifferent between obeying or not. Therefore, Theorem \ref{thm_maxmin_receiver_optimal} could be seen as forcing the receiver to be obedient in such situations.
    
    This issue can be addressed by strengthening obedience to further require that $\sigma^{*}$ always has the same support as $(\boldsymbol{\sigma}, \mu)$ (which was always true for smooth ambiguity receivers). This strengthening does not substantially change the conclusions of Theorem \ref{thm_maxmin_receiver_optimal}, as it only replaces the program in Theorem \ref{thm_maxmin_receiver_optimal} by:
    \begin{align*}
        &\sup_{\sigma} u_{s}(\sigma, \tau^{*}),\\
        &\text{s.t. } u_{r}(\sigma, \tau^{*}) > \underline{u}^{*}_{r} \text{ and } \mathop{supp}(\sigma) \subseteq A_{0},
    \end{align*}
    where $A_{0}$ is the set of all actions which can be best responses for the receiver to some probability distribution over the states in the support of the prior $p$. The corresponding communication strategies would involve two experiments with $\mu$-weight on the better one approaching 1 as before, but with the worse experiment now adjusted to have full support on $A_{0}$ by mixing it with an arbitrarily small amount of an obedient experiment with full support on $A_{0}$ that yields the receiver more than $\underline{u}^{*}_{r}$ (such an experiment exists under the assumptions of Theorem \ref{thm_maxmin_receiver_optimal}).
    
    Formally, a stronger notion of obedience that does not allow positive $\mu$ weight on experiments that recommend actions outside the support of the effective measure weighted experiment of a maxmin receiver is the following: 
    
    \begin{lemma}\label{lem_exante_IC_refined}
        $(\boldsymbol{\sigma}, \mu)$ is (strongly) obedient if, and only if, the experiment $\sigma^{*}$ is obedient, 
        where
        \begin{equation*}
            \sigma^{*} := \frac{\sum\limits_{\theta \in \mathop{\arg\min}\limits_{\theta \in \mathop{supp}(\mu)} u_{r}(\sigma_{\theta}, \tau^{*})} \mu_{\theta}\sigma_{\theta}}{\sum\limits_{\theta \in \mathop{\arg\min}\limits_{\theta \in \mathop{supp}(\mu)} u_{r}(\sigma_{\theta}, \tau^{*}) } \mu_{\theta} },
        \end{equation*}
        and $\mathop{supp}(\sigma_{\theta}) \subseteq \mathop{supp}(\sigma^{*})$ for all $\theta \in \mathop{supp}(\mu)$.
    \end{lemma}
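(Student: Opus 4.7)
The plan is to reduce Lemma \ref{lem_exante_IC_refined} directly to Lemma \ref{lem_exante_IC}, since the strong notion of obedience introduced in the paragraph preceding the statement amounts to combining standard ex-ante obedience of $(\boldsymbol{\sigma},\mu)$ with the requirement that $\sigma^{*}$ and $(\boldsymbol{\sigma},\mu)$ share the same support on $A$. I would dispatch these two pieces separately.

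First, invoke Lemma \ref{lem_exante_IC}: it already provides the equivalence between standard obedience of $(\boldsymbol{\sigma},\mu)$ and obedience of the minimum-weighted experiment $\sigma^{*}$, so no new argument is needed for this portion. Second, note that by construction $\mathop{supp}(\sigma^{*}) = \bigcup_{\theta \in \arg\min_{\theta' \in \mathop{supp}(\mu)} u_{r}(\sigma_{\theta'},\tau^{*})} \mathop{supp}(\sigma_\theta)$, which is automatically a subset of $\bigcup_{\theta \in \mathop{supp}(\mu)} \mathop{supp}(\sigma_\theta)$. Equality of the two unions---the ``same support'' condition---is therefore equivalent to the reverse inclusion $\mathop{supp}(\sigma_\theta) \subseteq \mathop{supp}(\sigma^{*})$ for every $\theta \in \mathop{supp}(\mu)$, which is precisely the condition displayed in the lemma. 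Combining this rewriting with the equivalence from Lemma \ref{lem_exante_IC} yields the stated ``if and only if''.

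The main obstacle is conceptual rather than technical: verifying that the displayed support condition genuinely captures the intended refinement of obedience. I would check that, once $\mathop{supp}(\sigma_\theta) \subseteq \mathop{supp}(\sigma^{*})$ holds for every $\theta \in \mathop{supp}(\mu)$, no recommendation $a$ realizable under $(\boldsymbol{\sigma},\mu)$ reveals whether the originating $\theta$ lies in the argmin set; consequently the receiver's interim maxmin problem over $\mathop{supp}(\mu)$ after observing $a$ coincides with the ex-ante problem, so ex-ante obedience extends to the stronger off-path requirement. Beyond this verification, the argument is essentially a restatement of Lemma \ref{lem_exante_IC} plus a straightforward rewriting of the ``same support'' property, and no further machinery is needed.
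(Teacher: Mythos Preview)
The paper does not provide a separate proof of Lemma \ref{lem_exante_IC_refined}; it is stated immediately after the sentence ``Formally, a stronger notion of obedience \ldots\ is the following:'' and is treated as the definitional unpacking of strong obedience in light of Lemma \ref{lem_exante_IC}. Your proposal correctly identifies exactly the two ingredients that make the statement immediate---invoking Lemma \ref{lem_exante_IC} for the obedience equivalence, and rewriting the ``same support as $(\boldsymbol{\sigma},\mu)$'' requirement as the displayed inclusion $\mathop{supp}(\sigma_\theta)\subseteq\mathop{supp}(\sigma^{*})$ via the observation that $\mathop{supp}(\sigma^{*})$ is already contained in $\bigcup_{\theta\in\mathop{supp}(\mu)}\mathop{supp}(\sigma_\theta)$---so your approach matches what the paper implicitly has in mind.
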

    
    Define $\underline{u}_{r}^{*}$ as before
    \begin{equation*}
        \underline{u}^{*}_{r} := \max\limits_{a \in A} \sum\limits_{\omega} p(\omega)u_{r}(a,\omega).
    \end{equation*}
    Further define $A_{0}$ as the set of all actions which can be best responses for the receiver: 
    \begin{equation*}
        A_{0} := \{a \in A: \exists q \in \Delta(\Omega) \text{ s.t. } a \in \mathop{\arg\max}\limits_{a' \in A} \sum\limits_{\omega} q(\omega)u_{r}(a',\omega)\}.
    \end{equation*}
    The following is the version of Theorem \ref{thm_maxmin_receiver_optimal}
    using the stronger obedience notion: 
    
    \begin{theorem}\label{thm_maxmin_receiver_optimal_refined}
        Suppose there exists $\hat{\sigma}$ such that $u_{r}(\hat{\sigma}, \tau^{*}) >  \underline{u}^{*}_{r}$. The value of the following program is the supremum of the payoff that an ambiguity neutral sender can obtain when the receiver has maxmin preferences $U_{r}^{MEU}$ and the version of obedience in Lemma \ref{lem_exante_IC_refined} is used: 
        \begin{align*}
            &\sup_{\sigma} u_{s}(\sigma, \tau^{*}),\\
            &\text{s.t. } u_{r}(\sigma, \tau^{*}) > \underline{u}^{*}_{r} \text{ and, } \mathop{supp}(\sigma) \subseteq A_{0},
            \end{align*}
    \end{theorem}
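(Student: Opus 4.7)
The plan is to prove the theorem by two opposite inclusions: (a) the sender can approach the program's supremum by a binary ambiguous experiment, and (b) no strongly obedient ambiguous experiment yields the sender more than this supremum. The argument parallels the one behind Theorem \ref{thm_maxmin_receiver_optimal}, with the extra care needed to respect the common-support requirement in Lemma \ref{lem_exante_IC_refined}.

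For (a), I would fix any feasible $\overline{\sigma}$ (so $u_{r}(\overline{\sigma},\tau^{*}) > \underline{u}^{*}_{r}$ and $\mathop{supp}(\overline{\sigma}) \subseteq A_{0}$) with $u_{s}(\overline{\sigma},\tau^{*})$ arbitrarily close to the supremum. I would then construct an obedient experiment $\underline{\sigma}$ whose recommendation support is exactly $A_{0}$ and whose receiver payoff is strictly between $\underline{u}^{*}_{r}$ and $u_{r}(\overline{\sigma},\tau^{*})$. Such a $\underline{\sigma}$ exists: by definition of $A_{0}$, for each $a \in A_{0}$ there is a posterior $q_{a} \in \Delta(\Omega)$ making $a$ a best reply; assigning small weight $\delta>0$ to each $q_{a}$ and the residual to a posterior close to the prior yields, for $\delta$ small, a valid splitting of $p$ and hence an obedient experiment with support $A_{0}$ whose receiver payoff approaches $\underline{u}^{*}_{r}$ as $\delta \to 0$. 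Now form the binary ambiguous experiment $((\overline{\sigma},\underline{\sigma}),(\mu,1-\mu))$. Since $\underline{\sigma}$ is the unique minimizer of $u_{r}(\cdot,\tau^{*})$ in the pair, the effective experiment of Lemma \ref{lem_exante_IC_refined} is $\sigma^{*}=\underline{\sigma}$, which is obedient; and $\mathop{supp}(\overline{\sigma}) \subseteq A_{0} = \mathop{supp}(\underline{\sigma}) = \mathop{supp}(\sigma^{*})$, so strong obedience holds. Driving $\mu \to 1$ sends the (ambiguity-neutral) sender payoff $\mu u_{s}(\overline{\sigma},\tau^{*}) + (1-\mu)u_{s}(\underline{\sigma},\tau^{*})$ to $u_{s}(\overline{\sigma},\tau^{*})$, proving (a).

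For (b), take any strongly obedient $(\boldsymbol{\sigma},\mu)$ and let $\sigma_{M} := \sum_{\theta}\mu_{\theta}\sigma_{\theta}$. Ambiguity-neutrality gives the sender payoff $u_{s}(\sigma_{M},\tau^{*})$. Strong obedience and Lemma \ref{lem_exante_IC_refined} give $\mathop{supp}(\sigma_{\theta}) \subseteq \mathop{supp}(\sigma^{*})$ for every $\theta$, hence $\mathop{supp}(\sigma_{M}) \subseteq \mathop{supp}(\sigma^{*})$; obedience of $\sigma^{*}$ further forces $\mathop{supp}(\sigma^{*}) \subseteq A_{0}$. Also $u_{r}(\sigma_{M},\tau^{*}) \geq \min_{\theta} u_{r}(\sigma_{\theta},\tau^{*}) = u_{r}(\sigma^{*},\tau^{*}) \geq \underline{u}^{*}_{r}$, the last step because the receiver can always ignore messages. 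When this inequality is strict, $\sigma_{M}$ is feasible for the program and we are done. In the boundary case $u_{r}(\sigma_{M},\tau^{*}) = \underline{u}^{*}_{r}$, I would perturb by $\sigma_{M}^{\varepsilon} := (1-\varepsilon)\sigma_{M} + \varepsilon \hat{\sigma}'$, where $\hat{\sigma}'$ is the canonical (obedient) reformulation of the hypothesized $\hat{\sigma}$ obtained exactly as in Proposition \ref{rev-principle}: it satisfies $u_{r}(\hat{\sigma}',\tau^{*}) \geq u_{r}(\hat{\sigma},\tau^{*}) > \underline{u}^{*}_{r}$ and $\mathop{supp}(\hat{\sigma}') \subseteq A_{0}$. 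Then $\sigma_{M}^{\varepsilon}$ is feasible for the program for every $\varepsilon>0$, so $u_{s}(\sigma_{M}^{\varepsilon},\tau^{*})$ is bounded above by the supremum, and continuity of $u_{s}$ in $\sigma$ yields the same bound for $\sigma_{M}$ as $\varepsilon \to 0$.

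The main obstacles are the two ``full-support'' bookkeeping issues that the refined obedience notion introduces: on the constructive side, producing an obedient $\underline{\sigma}$ whose support is exactly $A_{0}$ while keeping its receiver payoff just above $\underline{u}^{*}_{r}$ (the splitting-of-the-prior construction above is the cleanest route, but requires checking that the residual posterior remains valid for small $\delta$); and on the upper-bound side, the strict inequality $u_{r}(\sigma,\tau^{*}) > \underline{u}^{*}_{r}$ in the program, which forces the perturbation argument via $\hat{\sigma}'$ in the boundary case rather than allowing $\sigma_{M}$ itself to be used directly.
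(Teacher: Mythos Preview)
Your approach is essentially the same as the paper's, and it is correct. Both arguments establish achievability via a binary ambiguous experiment whose worse component is an obedient experiment with full support $A_{0}$ and receiver payoff just above $\underline{u}^{*}_{r}$, and both bound the sender from above by noting that strong obedience forces all recommended actions into $A_{0}$ while the receiver's minimum guarantees $u_{r} \geq \underline{u}^{*}_{r}$.

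The minor differences are organizational. The paper starts from a maximizer of the \emph{weak}-inequality program and then mixes the uninformative experiment with a separately constructed full-support obedient $\tilde{\sigma}$ to obtain $\underline{\sigma}_{n}$; you instead pick $\overline{\sigma}$ already feasible for the strict program and build the full-support $\underline{\sigma}$ in one step via a splitting of the prior. Your route avoids the paper's separate case $u_{r}(\overline{\sigma},\tau^{*}) = \underline{u}^{*}_{r}$. One technicality to tighten in your construction of $\underline{\sigma}$: take each $q_{a}$ in $\Delta(\mathrm{supp}(p))$ so that the residual posterior is guaranteed nonnegative for small $\delta$, and note that obedience at the residual's recommended action is preserved under convex combination with $q_{a^{*}}$ because the set of posteriors at which a given action is optimal is convex (the paper handles this via its footnote construction). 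Your upper-bound argument is more explicit than the paper's one-line assertion and is correct as written, including the $\hat{\sigma}'$ perturbation for the boundary case.
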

    
    \begin{proof}[Proof of Theorem \ref{thm_maxmin_receiver_optimal_refined}]
        Let $\overline{\sigma}$ attain the value for the sender of the following program
        \begin{align*}
            &\max_{\sigma} u_{s}(\sigma, \tau^{*}),\\
            &\text{s.t. } u_{r}(\sigma, \tau^{*}) \geq \underline{u}^{*}_{r} \text{ and, } \mathop{supp}(\sigma) \subseteq A_{0}.
            \end{align*} 
        Let $\underline{\sigma}$ be an obedient uninformative experiment, so that $u_r(\underline{\sigma},\tau^*)=\underline{u}^{*}_{r}$. Observe that $\mathop{supp}(\underline{\sigma}) \subseteq A_{0}$. Assume that $u_r(\overline{\sigma},\tau^*) > u_r(\underline{\sigma},\tau^*)$. There exists an obedient experiment $\tilde{\sigma}$ such that $\mathop{supp}(\tilde{\sigma}) = A_{0}$ and $u_{r}(\tilde{\sigma}, \tau^{*}) > \underline{u}_{r}^{*}$.\footnote{For any $a \in A_{0}$, fix some $q_{a} \in \Delta(\mathop{supp}(p))$ under which $a$ is optimal for the receiver. There exists a $\beta_{a} \in (0,1)$ and a $q \in \Delta(\mathop{supp}(p))$ such that $p = \beta_{a} q_{a} + (1-\beta_{a})q'_{a}$. Let $a'_{a} \in A_{0}$ denote an action that is optimal for the receiver under $q'_{a}$. Applying this argument to all $a \in A_{0}$ to construct a set $\cup_{a \in A_{0}}\{q_{a}, q'_{a}\}$ whose convex hull contains $p$ in its interior. Since each probability distribution in the set can be thought of as a Bayesian posterior, this interior convex combination is a Bayes plausible distribution over the posteriors and thus, by \cite{kamenica2011bayesian}, corresponds to an obedient $\tilde{\sigma}$ with $\mathop{supp}(\tilde{\sigma}) = A_{0}$. Finally, since $u_{r}(\hat{\sigma}, \tau^{*}) >  \underline{u}^{*}_{r}$, there exists an $a \in A_{0}$ such that $\sum_{\omega} q_{a}(\omega)u_{r}(a, \omega) >  \underline{u}_{r}^{*}$. Thus, $u_{r}(\tilde{\sigma}, \tau^{*}) > \underline{u}_{r}^{*}$.} Define a sequence of experiments $\underline{\sigma}_{n} = (1-\epsilon_{n}) \underline{\sigma} + \epsilon_{n} \tilde{\sigma}$ where $\epsilon_{n} > 0$ with $\epsilon_{n} \rightarrow 0$ as $n$ goes to infinity. If the sender offers the ambiguous experiment $((\overline{\sigma},\underline{\sigma}_{n}), (\mu,1-\mu))$ for small enough $\epsilon_{n}$, the receiver is strongly obedient since the worst payoff is $u_r(\underline{\sigma}_{n},\tau^*)$, obedience is preserved under convex combinations of experiments, and $\mathop{supp}(\overline{\sigma}) \subseteq A_{0} = \mathop{supp}(\underline{\sigma}_{n})$. As we can choose $\mu$ arbitrarily close to 1 and $\epsilon_{n}$ arbitrarily close to $0$, we approach the value of the program in Theorem \ref{thm_maxmin_receiver_optimal_refined}.  Furthermore, it is not possible for the sender to do better than this (i.e., have a higher supremum), since the receiver's payoff from any obedient experiment (and thus from any obedient ambiguous experiment) is at least $\underline{u}^{*}_{r}$ and the strong version of obedience requires that all experiments in the support of $\mu$ recommend actions in $A_{0}$.
        
        If $u_r(\overline{\sigma},\tau^*) = u_r(\underline{\sigma},\tau^*)$, we need to slightly modify the construction to guarantee obedience. The idea is to mix $\overline{\sigma}$ with a bit of $\tilde{\sigma}$ to guarantee a unique worst payoff, i.e., $u_r((1-2\epsilon_{n})\overline{\sigma} + 2\epsilon_{n} \tilde{\sigma},\tau^*) > u_r(\underline{\sigma}_{n},\tau^*)$ for all $\epsilon_{n} > 0$. As $\varepsilon_{n}$ approaches 0 and $\mu$ approaches 1, the payoff for the sender approaches the value of the program in the theorem. 
    \end{proof}   
    
    \section{The insufficiency of binary ambiguous experiments}\label{sec_insufficiency_binary_experiment}
    
    \begin{proposition}\label{prop_insufficiency_binary_experiment}
        It is not always sufficient to consider only binary ambiguous experiments in searching for either a strict benefit from ambiguity or optimal ambiguous persuasion.
    \end{proposition}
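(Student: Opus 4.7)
The proposition is an existence claim about the insufficiency of binary ambiguous experiments, so the plan is to establish it by exhibiting a concrete counterexample rather than by a general argument. The goal will be to construct a persuasion problem with payoffs $u_s,u_r$, prior $p$, and smooth-ambiguity attitudes $\phi_s,\phi_r$ such that: (i) some ambiguous experiment $(\boldsymbol{\sigma},\mu)$ with $|\text{supp}(\mu)|=3$ strictly benefits the sender over $u_s^{BP}$; yet (ii) no binary ambiguous experiment achieves any improvement.

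By Corollary \ref{cor_binary_action} the example must have $|A|\geq 3$. By Theorem \ref{thm_necessary_condition_existence_of_Pareto_ranked_splitting}, any binary ambiguous experiment that benefits the sender is itself a Pareto-ranked pair, so ruling out binary experiments amounts to ruling out beneficial Pareto-ranked splittings in the sense of Theorem \ref{general_improvement}: for every candidate pair $(\overline{\sigma},\underline{\sigma},\lambda)$, the sender's probability premium $\rho^{\phi_s,u_s}((\overline{\sigma},\underline{\sigma}),\lambda)$ must weakly exceed the effective-measure tilt $\mu_{\overline{\sigma}}-\lambda$ prescribed by \eqref{equ_thm_Pareto_ranked_improvement}. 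The design task is therefore to pick primitives so that (a) along every Pareto-ranked splitting of any obedient experiment this inequality binds adversely, while (b) some three-experiment collection induces, via the non-linear effective-measure formula \eqref{equ_effmeasure}, an aggregate gain that no pair can replicate.

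Concretely, I would proceed as follows. First, fix a game with moderately rich $\Omega$ and $A$ (for example $|\Omega|=|A|=3$) together with tunable numerical payoffs, compute $u_s^{BP}$, and describe the obedience polytope $\Sigma^{*}$. Second, parameterize Pareto-ranked splittings of obedient experiments and verify, by reducing to a finite-dimensional optimization and maximizing the potential gain in Theorem \ref{general_improvement}(iii) over all such pairs, that the supremal binary sender payoff equals $u_s^{BP}$. Third, exhibit an explicit triple $(\sigma_1,\sigma_2,\sigma_3)$ that induces, via Lemma \ref{lem_effmeasure}, an obedient effective experiment while delivering $U_s(\boldsymbol{\sigma},\mu,\tau^{*})>u_s^{BP}$. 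A natural template for the triple is to combine two Pareto-ranked splittings that share a common low experiment $\underline{\sigma}$, so that the receiver's marginal-utility weights concentrate extra pessimism on the low-payoff atom and thereby release $\mu$-mass for two distinct high-payoff experiments beyond what any single pair could simultaneously support.

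The main obstacle is the universality demanded in the second step: one must rule out \emph{all} binary ambiguous experiments, not only splittings of an optimal Bayesian persuasion experiment (see Remark \ref{rem_not_necessary_BP_splitting}). Since the search ranges over a high-dimensional polytope of Pareto-ranked pairs cut out by linear inequalities, I expect the cleanest route is to choose payoffs with enough symmetry that the binary problem reduces to a low-parameter family amenable to closed-form analysis, while the ternary construction breaks that symmetry to deliver strict improvement. Achieving (i) and (ii) simultaneously will likely require fine-tuning the curvature of $\phi_r$: strong enough ambiguity aversion to create room for ternary improvement, but not so strong that this room is already exhausted pairwise.
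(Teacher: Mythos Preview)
Your high-level plan (construct an explicit counterexample) is right, and your observation that any beneficial binary must arise from a Pareto-ranked splitting of some obedient experiment via Lemma~\ref{lem_effmeasure} is correct. But the criterion you propose to verify—``for every Pareto-ranked splitting, the sender's probability premium weakly exceeds the tilt $\mu_{\overline{\sigma}}-\lambda$''—is too strong to be usable. With strictly concave $\phi_r$ the tilt is strictly positive, so your condition forces a strictly ambiguity-averse sender; yet you only speak of tuning $\phi_r$. More seriously, the condition you write down is equivalent to ``no binary improves over its \emph{own} effective experiment $\sigma^*$,'' whereas what you need is the weaker ``no binary improves over $u_s^{BP}$.'' In the paper's construction the sender is ambiguity neutral, so the premium is zero and your condition fails for \emph{every} splitting—yet no binary beats $u_s^{BP}$. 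So your plan would require finding an entirely different kind of example, and you give no argument for why sender ambiguity aversion sufficient to kill all pairwise gains would leave any ternary gain intact.

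The paper's route is structurally different. It fixes an ambiguity-neutral sender and $\phi_r(x)=\ln(x+5)$ so that $1/\phi_r'$ is affine and Proposition~\ref{cor_linear_phi_inverse_concavification} applies: in any \emph{optimal} collection, no experiment admits a further Pareto-ranked splitting. This converts the optimality claim into a combinatorial one about extreme experiments. With $|\Omega|=2$ and five actions, the unique obedient experiment $\sigma_a$ supported on the two ``good'' actions is engineered so that it is not a convex combination of any \emph{two} of the four extreme $\{a_1,a_2\}$-experiments (all mutually Pareto-ranked), but is a combination of three; hence any binary splitting of $\sigma_a$ uses a non-extreme experiment, which itself has a Pareto-ranked splitting, contradicting optimality. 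For the ``strict benefit requires more than binary'' half, the paper then computes the best binary payoff and adds a sender-dominant action $c$ with payoff $x$ chosen strictly between that value and the ternary payoff, so that $u_s^{BP}=x$ lies above every binary but below the ternary. Your proposal contains neither of these two ideas: the leverage from affine $1/\phi_r'$ via Proposition~\ref{cor_linear_phi_inverse_concavification}, and the sandwiching trick with a dominant action.
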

    
    We provide a detailed sketch of the proof. The full proof is available in \cite{cheng2024additional1}. 
    
    \begin{proof}[Sketch of Proof of Proposition \ref{prop_insufficiency_binary_experiment}]
    
    The proof is by construction. We first show an example in which the only optimal ambiguous experiments are more than binary. A modification of this example is then used to provide an example in which the sender may strictly benefit from ambiguous communication even when no binary ambiguous experiment benefits the sender.
    
    \bigskip 
    
    \textbf{Example in which all optimal ambiguous experiments are more than binary.}
    
    Suppose $\phi_{s}(x) = x$ and $\phi_{r}(x) = \ln(x + 5)$. Let $\Omega = \{\omega_{1}, \omega_{2}\}$, with equal prior probabilities $p = (1/2, 1/2)$. There are five actions $\{a_{1}, a_{2}, b_{1}, b_{2}, b_{3}\}$ and the payoff matrix is 
    \begin{center}
        \begin{tabular}{c|cc}
            $(u_{s}, u_{r})$& $\omega_{1}$ & $\omega_{2}$ \\ 
            \hline 
            $a_{1}$ & $3, 3$ & $0, 0$ \\ 
            $a_{2}$ & $-1, -1$ & $3, 3$ \\ 
            $b_{1}$ & $0, 4$ & $-1, -2$ \\ 
            $b_{2}$ & $0, 2$ & $1, 2$ \\ 
            $b_{3}$ & $-2, -4$ & $1, 4$ \\ 
        \end{tabular} 
    \end{center}
    
    The optimal Bayesian persuasion experiment is
    \begin{align*}
        &\sigma_{a}(a_{1}|\omega_{1}) = 4/5, \quad \sigma_{a}(a_{2}|\omega_{1}) = 1/5;\\
        &\sigma_{a}(a_{1}|\omega_{2}) = 2/5, \quad \sigma_{a}(a_{2}|\omega_{2}) = 3/5.
    \end{align*}
    Notice that 
    \begin{align*}
        u_{s}(\sigma_{a}, \tau^{*}) = u_{r}(\sigma_{a}, \tau^{*}) = 2. 
    \end{align*}
    
    Let $\sigma_{11}, \sigma_{12}, \sigma_{21}$ and $\sigma_{22}$ denote the extreme experiments where $\sigma_{ij}$ recommends $a_{i}$ and $a_{j}$ deterministically in states $\omega_{1}$ and $\omega_{2}$, respectively. Notice that these extreme experiments are all Pareto-ranked: 
    \begin{align*}
        &u_{s}(\sigma_{11},\tau^{*}) = u_{r}(\sigma_{11},\tau^{*}) = 3/2;\\
        &u_{s}(\sigma_{12},\tau^{*}) = u_{r}(\sigma_{12},\tau^{*}) = 3;\\
        &u_{s}(\sigma_{21},\tau^{*}) = u_{r}(\sigma_{21},\tau^{*}) = -1/2;\\
        &u_{s}(\sigma_{22},\tau^{*}) = u_{r}(\sigma_{22},\tau^{*}) = 1.
    \end{align*}
    
    Consider the following splitting of $\sigma_{a}$, 
    \begin{align*}
        \sigma_{a} = \frac{1}{5} \sigma_{11} +\frac{3}{5} \sigma_{12} +  \frac{1}{5} \sigma_{21}.
    \end{align*}    
    It can be verified that for $\hat{\boldsymbol{\sigma}} = (\sigma_{11}, \sigma_{12}, \sigma_{21})$ and $\hat{\mu}$ such that $\sum_{\theta} em^{(\hat{\boldsymbol{\sigma}}, \hat{\mu})}_{\theta} \sigma_{\theta} = \sigma_{a}$, $(\hat{\boldsymbol{\sigma}}, \hat{\mu})$ is an obedient ambiguous experiment yielding the sender a payoff of $159/70 = 2.27143$, strictly higher than the $2$ under Bayesian persuasion. Therefore, any optimal ambiguous experiment must involve ambiguity and thus be at least binary. 
    
    As $\phi_{s}(x) = x$ and $\phi_{r}(x) = \ln(x + 5)$, by Proposition \ref{cor_linear_phi_inverse_concavification}, in any optimal ambiguous experiment, there cannot exist any further Pareto-ranked splitting of any experiment in the collection. 
    
    Observe that $\sigma_{a}$ is the only incentive-compatible experiment that never recommends any of the $b$ actions. Furthermore, $\sigma_{a}$ cannot be split into a convex combination of two extreme experiments. Thus, any binary splitting of $\sigma_{a}$ must involve at least one non-extreme experiment. However, since all these extreme experiments are Pareto-ranked, there must exist a Pareto-ranked splitting of any such non-extreme experiment (into extreme experiments). Therefore, any binary ambiguous experiment constructed from splittings of $\sigma_{a}$ cannot be optimal. 
    
    The proof goes on to show that an optimal ambiguous experiment in this example also cannot be a binary ambiguous experiment that is constructed from a splitting of any other incentive-compatible experiment (in particular, any recommending a $b$ action with a positive probability). \medskip 
    
    \textbf{Example in which ambiguous communication benefits the sender, but does not do so when restricted to binary ambiguous experiments}
    
    Suppose $\phi_{s}(x) = x$ and $\phi_{r}(x) = \ln(x + 5)$. Let $\Omega = \{\omega_{1}, \omega_{2}\}$ and the prior $p$ be uniform. There are seven actions $\{a_{1}, a_{2}, b_{1}, b_{2}^{+}, b_{2}^{-}, b_{3}, c\}$. Let the payoff matrix be, for some $x > 2$, 
    \begin{center}
        \begin{tabular}{c|cc}
            $(u_{s}, u_{r})$& $\omega_{1}$ & $\omega_{2}$ \\ 
            \hline 
            $a_{1}$ & $3, 3$ & $0, 0$ \\ 
            $a_{2}$ & $-1, -1$ & $3, 3$ \\ 
            $b_{1}$ & $0, 4$ & $0, -2$ \\ 
            $b_{2}^{-}$ & $0, 5/2$ & $0, 1$ \\
            $b_{2}^{+}$ & $0, 5/4$ & $0, 9/4$ \\
            $b_{3}$ & $0, -4$ & $0, 4$ \\ 
            $c$ & $x, 7/4$ & $x, 7/4$ \\ 
        \end{tabular} 
    \end{center}
    
    The only differences from the previous example are the addition of $c$ and the replacement of $b_{2}$ by $b_{2}^{-}$ and $b_{2}^{+}$. Let $\sigma_{c}$ denote the experiment that recommends action $c$ deterministically in both states. Because $x > 2$, the optimal Bayesian persuasion experiment is $\sigma_{c}$, yielding the sender a payoff of $x$. The proof then shows the existence of $x > 2$ such that the sender's payoff from $(\hat{\boldsymbol{\sigma}}, \hat{\mu})$ is strictly higher than $x$ but the sender's payoff from any binary ambiguous experiment is lower than $x$. The replacement of $b_{2}$ by $b_{2}^{-}$ and $b_{2}^{+}$ serves to make $\sigma_{c}$ obedient only at the prior $p$, which helps simplify the calculations in the proof. 
    \end{proof}

    \section{A Local Argument for the Receiver's Gain from Pareto-Ranked Splittings of Obedient Experiments}\label{sec_local_argument_receiver_gain}

Let $\sigma^{*}$ be obedient and such that a pareto-ranked splitting of it exists. Then for small enough $\epsilon > 0$, one can always find $\overline{\sigma}$ and $\underline{\sigma}$ such that $(\overline{\sigma}, \underline{\sigma}, 1/2)$ is a Pareto-ranked splitting of $\sigma^{*}$ satisfying
\begin{align*}
    u_{r}(\overline{\sigma}, \tau^{*}) = u_{r}(\sigma^{*}, \tau^{*}) + \epsilon, \quad u_{r}(\underline{\sigma}, \tau^{*}) = u_{r}(\sigma^{*}, \tau^{*}) - \epsilon.
\end{align*}

Let $\boldsymbol{\sigma} = (\overline{\sigma}, \underline{\sigma})$ and let 
\begin{equation*}
    \mu_{\overline{\sigma}} = \frac{\phi'_{r}(u_{r}(\underline{\sigma}, \tau^{*}))}{\phi'_{r}(u_{r}(\underline{\sigma}, \tau^{*})) + \phi'_{r}(u_{r}(\overline{\sigma}, \tau^{*}))}.
\end{equation*}
Then by Lemma \ref{lem_effmeasure}, $(\boldsymbol{\sigma},\mu)$ is an obedient ambiguous experiment. To show that the receiver's payoff is higher under $\boldsymbol{\sigma}$ than under $\sigma^{*}$, we split the receiver's payoff change into two parts.

First, the receiver's payoff change from replacing $\sigma^{*}$ by $\mu_{\overline{\sigma}} \overline{\sigma} + (1-\mu_{\overline{\sigma}}) \underline{\sigma}$ is
\begin{align*}
    &\frac{\phi'_{r}(u_{r}(\underline{\sigma}, \tau^{*}))}{\phi'_{r}(u_{r}(\underline{\sigma}, \tau^{*})) + \phi'_{r}(u_{r}(\overline{\sigma}, \tau^{*}))} (u_{r}(\sigma^{*}, \tau^{*}) + \epsilon) + \frac{\phi'_{r}(u_{r}(\overline{\sigma}, \tau^{*}))}{\phi'_{r}(u_{r}(\underline{\sigma}, \tau^{*})) + \phi'_{r}(u_{r}(\overline{\sigma}, \tau^{*}))} (u_{r}(\sigma^{*}, \tau^{*}) - \epsilon) - u_{r}(\sigma^{*})\\
    &= \frac{\phi'_{r}(u_{r}(\underline{\sigma}, \tau^{*})) - \phi'_{r}(u_{r}(\overline{\sigma}, \tau^{*}))}{\phi'_{r}(u_{r}(\underline{\sigma}, \tau^{*})) + \phi'_{r}(u_{r}(\overline{\sigma}, \tau^{*}))} \epsilon \geq 0.
\end{align*}
Notice the change is non-negative and first-order in $\epsilon$.

Second, the receiver's payoff change when facing the ambiguous experiment $(\boldsymbol{\sigma}, (\mu_{\overline{\sigma}}, 1-\mu_{\overline{\sigma}}))$ when moving from ambiguity neutrality to ambiguity aversion $\phi_{r}$ is
\begin{align*}
    &\phi_{r}^{-1}\left(\frac{\phi'_{r}(u_{r}(\underline{\sigma}), \tau^{*})}{\phi'_{r}(u_{r}(\underline{\sigma}, \tau^{*})) + \phi'_{r}(u_{r}(\overline{\sigma}, \tau^{*}))} \phi_{r}(u_{r}(\sigma^{*}, \tau^{*}) + \epsilon) + \frac{\phi'_{r}(u_{r}(\overline{\sigma}, \tau^{*}))}{\phi'_{r}(u_{r}(\underline{\sigma}, \tau^{*})) + \phi'_{r}(u_{r}(\overline{\sigma}, \tau^{*}))} \phi_{r}(u_{r}(\sigma^{*}, \tau^{*}) - \epsilon) \right) \\
    - & \frac{\phi'_{r}(u_{r}(\underline{\sigma}, \tau^{*}))}{\phi'_{r}(u_{r}(\underline{\sigma}, \tau^{*})) + \phi'_{r}(u_{r}(\overline{\sigma}, \tau^{*}))} (u_{r}(\sigma^{*}) + \epsilon) + \frac{\phi'_{r}(u_{r}(\overline{\sigma}, \tau^{*}))}{\phi'_{r}(u_{r}(\underline{\sigma}, \tau^{*})) + \phi'_{r}(u_{r}(\overline{\sigma}, \tau^{*}))} (u_{r}(\sigma^{*}, \tau^{*}) - \epsilon).
\end{align*}

Next, we take the first-order Taylor expansion of the first term so that any residual is of order $\epsilon^{2}$. Observe that (ignoring the term of order $\epsilon^{2}$)
\begin{align*}
    & \frac{\phi'_{r}(u_{r}(\underline{\sigma}, \tau^{*}))}{\phi'_{r}(u_{r}(\underline{\sigma}, \tau^{*})) + \phi'_{r}(u_{r}(\overline{\sigma}, \tau^{*}))} \phi_{r}(u_{r}(\sigma^{*}, \tau^{*}) + \epsilon) + \frac{\phi'_{r}(u_{r}(\overline{\sigma}, \tau^{*}))}{\phi'_{r}(u_{r}(\underline{\sigma}, \tau^{*})) + \phi'_{r}(u_{r}(\overline{\sigma}, \tau^{*}))} \phi_{r}(u_{r}(\sigma^{*}, \tau^{*}) - \epsilon) \\
    = & \frac{\phi'_{r}(u_{r}(\underline{\sigma}, \tau^{*}))}{\phi'_{r}(u_{r}(\underline{\sigma}, \tau^{*})) + \phi'_{r}(u_{r}(\overline{\sigma}, \tau^{*}))} (\phi_{r}(u_{r}(\sigma^{*}, \tau^{*})) + \phi'_{r}(u_{r}(\sigma^{*}, \tau^{*}) \epsilon)) \\
    & + \frac{\phi'_{r}(u_{r}(\overline{\sigma}, \tau^{*}))}{\phi'_{r}(u_{r}(\underline{\sigma}, \tau^{*})) + \phi'_{r}(u_{r}(\overline{\sigma}, \tau^{*}))} (\phi_{r}(u_{r}(\sigma^{*}, \tau^{*})) - \phi'_{r}(u_{r}(\sigma^{*}, \tau^{*}) \epsilon)) \\
    = & \phi_{r}(u_{r}(\sigma^{*}, \tau^{*})) + \phi'_{r}(u_{r}(\sigma^{*}, \tau^{*}))\frac{\phi'_{r}(u_{r}(\underline{\sigma}, \tau^{*})) - \phi'_{r}(u_{r}(\overline{\sigma}, \tau^{*}))}{\phi'_{r}(u_{r}(\underline{\sigma}, \tau^{*})) + \phi'_{r}(u_{r}(\overline{\sigma}, \tau^{*}))} \epsilon
\end{align*}

Applying $\phi_{r}^{-1}$ to the above term and again taking the first-order Taylor expansion and ignoring the term of order $\epsilon^{2}$ yields
\begin{align*}
    &\phi_{r}^{-1}\left(\phi_{r}(u_{r}(\sigma^{*}, \tau^{*})) + \phi'_{r}(u_{r}(\sigma^{*}, \tau^{*}))\frac{\phi'_{r}(u_{r}(\underline{\sigma}, \tau^{*})) - \phi'_{r}(u_{r}(\overline{\sigma}, \tau^{*}))}{\phi'_{r}(u_{r}(\underline{\sigma}, \tau^{*})) + \phi'_{r}(u_{r}(\overline{\sigma}, \tau^{*}))} \epsilon \right) \\
    =& \phi_{r}^{-1}(\phi_{r}(u_{r}(\sigma^{*}, \tau^{*}))) + \left(\phi_{r}^{-1}\right)'(\phi_{r}(u_{r}(\sigma^{*}, \tau^{*})))\phi'_{r}(u_{r}(\sigma^{*}, \tau^{*}))\frac{\phi'_{r}(u_{r}(\underline{\sigma})) - \phi'_{r}(u_{r}(\overline{\sigma}, \tau^{*}))}{\phi'_{r}(u_{r}(\underline{\sigma}, \tau^{*})) + \phi'_{r}(u_{r}(\overline{\sigma}, \tau^{*}))} \epsilon \\
    =& u_{r}(\sigma^{*}) + \frac{\phi'_{r}(u_{r}(\underline{\sigma}, \tau^{*})) - \phi'_{r}(u_{r}(\overline{\sigma}, \tau^{*}))}{\phi'_{r}(u_{r}(\underline{\sigma}, \tau^{*})) + \phi'_{r}(u_{r}(\overline{\sigma}, \tau^{*}))} \epsilon,
\end{align*}
where the last equality follows from 
\begin{equation*}
    \left(\phi_{r}^{-1}\right)'(\phi_{r}(u_{r}(\sigma^{*}, \tau^{*}))) = \frac{1}{\phi'_{r}(\phi_{r}^{-1}(\phi_{r}(u_{r}(\sigma^{*}, \tau^{*}))))} = \frac{1}{\phi'_{r}(u_{r}(\sigma^{*}, \tau^{*}))}.
\end{equation*}
Thus, after ignoring the terms of order $\epsilon^{2}$, the receiver's payoff change when moving from ambiguity neutrality to ambiguity aversion is
\begin{equation*}
    u_{r}(\sigma^{*}, \tau^{*}) + \frac{\phi'_{r}(u_{r}(\underline{\sigma}, \tau^{*})) - \phi'_{r}(u_{r}(\overline{\sigma}, \tau^{*}))}{\phi'_{r}(u_{r}(\underline{\sigma}, \tau^{*})) + \phi'_{r}(u_{r}(\overline{\sigma}, \tau^{*}))} \epsilon - u_{r}(\sigma^{*}, \tau^{*}) - \frac{\phi'_{r}(u_{r}(\underline{\sigma}, \tau^{*})) - \phi'_{r}(u_{r}(\overline{\sigma}, \tau^{*}))}{\phi'_{r}(u_{r}(\underline{\sigma}, \tau^{*})) + \phi'_{r}(u_{r}(\overline{\sigma}, \tau^{*}))} \epsilon = 0.
\end{equation*}
In other words, the additional cost to the receiver of this $\epsilon$-ambiguous experiment because the receiver is ambiguity averse is of order $\epsilon^{2}$.

Therefore, for a small enough splitting of $\sigma^{*}$, the receiver's payoff increase from the increase, $\mu_{\overline{\sigma}}-1/2$, in weight on the better experiment $\overline{\sigma}$ is first-order while their payoff decrease due to bearing the resulting ambiguity is at most second-order. 
    
    \section{Proofs of auxiliary results from Appendix \ref{sec_proof_results}}\label{sec_proofs_auxiliary_results}

    \subsection{Proof of Lemma \ref{thm_inverstly_ranked_concavification_experiment_space}}
    
    \begin{proof}[Proof of Lemma \ref{thm_inverstly_ranked_concavification_experiment_space}]
        Fix any $u \in \R$. Let $(\sigma_{\theta},\lambda_{\theta})_{\theta \in \Theta}$ be feasible for the maximization problem $\Phi^{*}(u)$ .  Suppose that there exists a pair $(\sigma_{\theta},  \sigma_{\theta'})$ with $\lambda_{\theta}>0$ and $\lambda_{\theta'}>0$ and such that there exists a $\lambda \in (0,1)$ for which, $\Phi_{u} (\lambda\sigma_{\theta} + (1-\lambda)\sigma_{\theta'}) > \lambda \Phi_{u}(\sigma_{\theta}) + (1-\lambda) \Phi_{u}(\sigma_{\theta'})$. 
        
        Then, $(\sigma_{\theta},\lambda_{\theta})_{\theta \in \Theta}$ cannot be a solution to the maximization problem $\Phi^{*}(u)$. This can be seen from the following construction of a strict improvement satisfying the constraints in that problem: If $\frac{\lambda_{\theta}}{\lambda} \leq \frac{\lambda_{\theta'}}{1-\lambda}$, then replacing $\sigma_{\theta}$ by the merged experiment $\lambda\sigma_{\theta} + (1-\lambda)\sigma_{\theta'}$ and replacing $\lambda_{\theta}$ by $\hat\lambda_{\theta}=\frac{\lambda_{\theta}}{\lambda}$ and $\lambda_{\theta'}$ by $\hat\lambda_{\theta'}=\lambda_{\theta'}-(1-\lambda)\frac{\lambda_{\theta}}{\lambda}$ yields such an improvement. If instead $\frac{\lambda_{\theta}}{\lambda} > \frac{\lambda_{\theta'}}{1-\lambda}$, then replacing $\sigma_{\theta'}$ by the merged experiment $\lambda\sigma_{\theta} + (1-\lambda)\sigma_{\theta'}$ and replacing $\lambda_{\theta'}$ by $\hat\lambda_{\theta'}=\frac{\lambda_{\theta'}}{1-\lambda}$ and $\lambda_{\theta}$ by $\hat\lambda_{\theta}=\lambda_{\theta}-\lambda\frac{\lambda_{\theta'}}{1-\lambda}$ is such an improvement.
        
        To prove (i), towards a contradiction, suppose in the solution there exists $(\sigma_{\theta}, \sigma_{\theta'}) \in \Sigma_{+}(u) \times \Sigma_{+}(u)$ with $u_{s}(\sigma_{\theta}, \tau^{*}) \neq  u_{s}(\sigma_{\theta'}, \tau^{*})$ and $\phi'_{r}(u_{r}(\sigma_{\theta}, \tau^{*})) \neq \phi'_{r}(u_{r}(\sigma_{\theta'}, \tau^{*}))$, but they are not Pareto-ranked, i.e., $u_{s}(\sigma_{\theta}, \tau^{*}) > u_{s}(\sigma_{\theta'}, \tau^{*}) > u$, and $u_{r}(\sigma_{\theta}, \tau^{*}) < u_{r}(\sigma_{\theta'}, \tau^{*})$. Then for any $\lambda \in (0,1)$, we can show 
            $\Phi_{u}(\lambda \sigma_{\theta} + (1-\lambda)\sigma_{\theta'}) > \lambda \Phi_{u}(\sigma_{\theta}) + (1-\lambda) \Phi_{u}(\sigma_{\theta'})$. To see this, observe that
            \begin{align}\label{equ_proof_theorem_ambiguity_pareto_ranked_experiments}
                &\Phi_{u}(\lambda \sigma_{\theta} + (1-\lambda)\sigma_{\theta'})
                 =  \frac{\phi_{s}(u_{s}(\lambda \sigma_{\theta} + (1-\lambda) \sigma_{\theta'}, \tau^{*})) - \phi_{s}(u)}{\phi_{r}'(u_{r}(\lambda \sigma_{\theta} + (1-\lambda) \sigma_{\theta'}, \tau^{*}))} \nonumber \\
                \geq & \lambda \frac{\phi_{s}(u_{s}(\lambda \sigma_{\theta} + (1-\lambda) \sigma_{\theta'}, \tau^{*})) - \phi_{s}(u)}{\phi_{r}'(u_{r}(\sigma_{\theta}, \tau^{*}))}
                 + (1 - \lambda) \frac{\phi_{s}(u_{s}(\lambda \sigma_{\theta} + (1-\lambda) \sigma_{\theta'}, \tau^{*})) - \phi_{s}(u)}{\phi_{r}'(u_{r}(\sigma_{\theta'}, \tau^{*}))}\nonumber \\
                \geq &  \lambda \frac{ \lambda \phi_{s}(u_{s}( \sigma_{\theta}, \tau^{*})) + (1-\lambda)\phi_{s}(u_{s}( \sigma_{\theta'}, \tau^{*}))   - \phi_{s}(u)}{\phi_{r}'(u_{r}(\sigma_{\theta}, \tau^{*}))}\nonumber  \\
                & + (1 - \lambda) \frac{ \lambda \phi_{s}(u_{s}( \sigma_{\theta}, \tau^{*})) + (1-\lambda)\phi_{s}(u_{s}( \sigma_{\theta'}, \tau^{*}))   - \phi_{s}(u)}{\phi_{r}'(u_{r}(\sigma_{\theta'}, \tau^{*}))} \\			
                = & \lambda \Phi_{u}(\sigma_{\theta}) + \lambda(1-\lambda)\frac{\phi_{s}(u_{s}( \sigma_{\theta'}, \tau^{*}))  - \phi_{s}(u_{s}( \sigma_{\theta}, \tau^{*})) }{\phi_{r}'(u_{r}(\sigma_{\theta}, \tau^{*}))} \nonumber \\
                & + (1-\lambda) \Phi_{u}(\sigma_{\theta'}) + (1-\lambda)\lambda \frac{\phi_{s}(u_{s}( \sigma_{\theta}, \tau^{*}))  - \phi_{s}(u_{s}( \sigma_{\theta'}, \tau^{*})) }{\phi_{r}'(u_{r}(\sigma_{\theta'}, \tau^{*}))}\nonumber \\
                > &  \lambda \Phi_{u}(\sigma_{\theta}) + (1-\lambda) \Phi_{u}(\sigma_{\theta'}), \nonumber 
            \end{align}
            where the first inequality follows from concavity of $1/\phi'_{r}$ and positivity of $\phi_{s}(u_{s}(\lambda \sigma_{\theta} + (1-\lambda) \sigma_{\theta'}, \tau^{*})) - \phi_{s}(u)$, the second inequality follows from concavity of $\phi_{s}$, and the last strict inequality follows from $u_{r}(\sigma_{\theta}, \tau^{*}) < u_{r}(\sigma_{\theta'}, \tau^{*})$ and $\phi'_{r}(u_{r}(\sigma_{\theta}, \tau^{*})) \neq \phi'_{r}(u_{r}(\sigma_{\theta'}, \tau^{*}))$.
            
            The proof of (ii) is the same as the proof of (i), except that the first inequality in the chain \eqref{equ_proof_theorem_ambiguity_pareto_ranked_experiments} now follows from convexity of $1/\phi'_{r}$ and  $\phi_{s}(u_{s}(\lambda \sigma_{\theta} + (1-\lambda) \sigma_{\theta'}, \tau^{*})) - \phi_{s}(u) < 0$. 
    \end{proof}
    
    \subsection{Proof of Lemma \ref{thm_Pareto_ranked_concavification_experiment_space}}
    \begin{proof}[Proof of Lemma \ref{thm_Pareto_ranked_concavification_experiment_space}]
        Fix any $u \in \R$. Let $(\sigma_{\theta},\lambda_{\theta})_{\theta \in \Theta}$ be feasible for $(\Phi^{*}(u))$. Suppose that there exist $\sigma_{\theta}$ satisfying $\lambda_{\theta}>0$ and two experiments $\sigma$ and $\sigma'$ such that $\sigma_{\theta} = \lambda \sigma + (1-\lambda) \sigma'$ for some $\lambda \in (0,1)$ and $\Phi_{u} (\lambda \sigma + (1-\lambda)\sigma')< \lambda \Phi_{u}(\sigma) + (1-\lambda) \Phi_{u}(\sigma')$, then $(\sigma_{\theta},\lambda_{\theta})_{\theta \in \Theta}$ cannot be a solution to $(\Phi^{*}(u))$. This follows by noting that splitting $\sigma_{\theta}$ into $\sigma$ with probability $\lambda \lambda_{\theta}$ and $\sigma'$ with probability $(1-\lambda)\lambda_{\theta'}$  induces a strict improvement.         
        
        To show (i), if $1/\phi'_{r}(\cdot)$ is concave, towards a contradiction, we have 
                \begin{align*}
                    & \Phi_{u} (\lambda\overline{\sigma} + (1-\lambda)\underline{\sigma}) = \frac{\phi_{s}(u_{s}(\lambda\overline{\sigma} + (1-\lambda) \underline{\sigma}, \tau^{*})) - \phi_{s}(u)}{\phi_{r}'(u_{r}(\lambda \overline{\sigma} + (1-\lambda) \underline{\sigma}, \tau^{*}))}\\
                    \leq & \lambda \frac{\phi_{s}(u_{s}(\lambda \overline{\sigma} + (1-\lambda) \underline{\sigma}, \tau^{*})) - \phi_{s}(u)}{\phi'_{r}(u_{r}(\overline{\sigma}, \tau^{*}))} + (1-\lambda) \frac{\phi_{s}(u_{s}(\lambda \overline{\sigma} + (1-\lambda) \underline{\sigma}, \tau^{*})) - \phi_{s}(u)}{\phi'_{r}(u_{r}(\underline{\sigma}, \tau^{*}))}\\
                    = & \lambda \Phi_{u}(\overline{\sigma}) + \lambda \frac{\phi_{s}(u_{s}(\lambda \overline{\sigma} + (1-\lambda) \underline{\sigma}, \tau^{*})) - \phi_{s}(u_{s}(\overline{\sigma}, \tau^{*}))}{\phi'_{r}(u_{r}(\overline{\sigma}, \tau^{*}))} \\
                    & + (1-\lambda) \Phi_{u}(\underline{\sigma}) + (1-\lambda) \frac{\phi_{s}(u_{s}(\lambda \overline{\sigma} + (1-\lambda) \underline{\sigma}, \tau^{*})) - \phi_{s}(u_{s}(\underline{\sigma}, \tau^{*}))}{\phi'_{r}(u_{r}(\underline{\sigma}, \tau^{*}))} \\
                    \leq & \lambda \Phi_{u}(\overline{\sigma}) + \lambda (1-\lambda) \frac{\phi'_{s}(u_{s}(\overline{\sigma}, \tau^{*}))}{\phi'_{r}(u_{r}(\overline{\sigma}, \tau^{*}))} (u_{s}(\underline{\sigma}, \tau^{*}) - u_{s}(\overline{\sigma}, \tau^{*}))\\
                    & + (1-\lambda) \Phi_{u}(\underline{\sigma}) + \lambda (1-\lambda) \frac{\phi'_{s}(u_{s}(\underline{\sigma}, \tau^{*}))}{\phi'_{r}(u_{r}(\underline{\sigma}, \tau^{*}))} (u_{s}(\overline{\sigma}, \tau^{*}) - u_{s}(\underline{\sigma}, \tau^{*}))\\
                    < & \lambda \Phi_{u}(\overline{\sigma})+ (1-\lambda) \Phi_{u}(\underline{\sigma}), 
                \end{align*}
                where the first inequality follows from concavity of $1/\phi'_{r}$ and $\phi_{s}(u_{s}(\lambda \overline{\sigma} + (1-\lambda) \underline{\sigma}, \tau^{*})) \leq \phi_{s}(u)$, the second inequality from concavity of $\phi_{s}$ and the third inequality from the supposition. 
    
                The proof of (ii) is the same as the proof of (i), except that the first inequality now follows from convexity of $1/\phi'_{r}$ and $\phi_{s}(u_{s}(\lambda \overline{\sigma} + (1-\lambda) \underline{\sigma}, \tau^{*})) >  \phi_{s}(u)$.
        \end{proof}

    \subsection{Proof of Lemma \ref{lem_effective_measure_reduces_payoff}}
    
    \begin{proof}[Proof of Lemma \ref{lem_effective_measure_reduces_payoff}.]
    
        Define, for all integers $k \in [1, n]$, 
        \begin{equation*}
            \mathcal{S}_k= \sum_{i=1}^{k} x_i \mu_i \left[ \sum_{j \neq i: j=1}^{k} \mu_j (y_i-y_j)\right]. 
        \end{equation*}
        
        Notice that when $k = n$, we have 
        \begin{align*}
            \mathcal{S}_n	 = &  \sum_{i=1}^n x_i \mu_i \left[  \sum_{j=1; j \neq i}^{n}  \mu_j (y_i-y_j)\right] \\
            = &  \sum_{i=1}^n x_i \mu_i \left[ (1-\mu_i) y_i - \sum_{j=1; j \neq i}^{n} \mu_j y_j\right]  \\
            = &  \sum_{i=1}^n x_i \mu_i \left[ y_i - \sum_{j=1}^{n} \mu_j y_j\right]  \\
            = &  \left( \sum_{j=1}^n \mu_j y_j \right) \left( \sum_{i=1}^n x_i \frac{\mu_i y_i}{\sum_{j=1}^n \mu_j y_j} - \sum_{i=1}^n x_i \mu_i \right). 
        \end{align*}
        Since $\left( \sum_{j=1}^n \mu_j y_j \right)  > 0$, it suffices to show $\mathcal{S}_{n} < 0$. We prove this by induction. Observe that $\mathcal{S}_{1} = 0$. For $k \geq 1$, 
        \begin{align*}
            \mathcal{S}_{k+1}= &  \sum_{i=1}^{k+1} x_i \mu_i \left[  \sum_{j=1; j \neq i}^{k+1}  \mu_j (y_i-y_j)\right] \\
            = & \sum_{i=1}^{k} x_i \mu_i \left[ \sum_{j \neq i: j=1}^{k} \mu_j (y_i-y_j)\right] + \sum_{i=1}^{k} x_i \mu_i[\mu_{k+1}(y_i-y_{k+1})] + \\
            & x_{k+1}\mu_{k+1} \sum_{j=1}^{k}[\mu_j(y_{k+1}-y_j)]\\
            = &  \sum_{i=1}^{k} x_i \mu_i \left[ \sum_{j \neq i: j=1}^{k} \mu_j (y_i-y_j)\right]  +  \sum_{i=1}^{k}\mu_i\mu_{k+1}(x_i-x_{k+1}) (y_i-y_{k+1})\\
            = & \mathcal{S}_{k} + \underbrace{\sum_{i=1}^{k}\mu_i\mu_{k+1}(x_i-x_{k+1}) (y_i-y_{k+1})}_{\leq 0}.
        \end{align*}
        
        For $k = j^*- 1$, 
        \begin{equation*}
            \sum_{i=1}^{j^*-1}\mu_i\mu_{j^*}(x_i-x_{j^*}) (y_i-y_{j^*}) \leq \mu_{i^*}\mu_{j^*}(x_{i^*}-x_{j^*}) (y_{i^*}-y_{j^*}) < 0.
        \end{equation*}
        Therefore, $0 = \mathcal{S}_{1} > \mathcal{S}_{j^*} \geq \mathcal{S}_{n}$. 
    \end{proof}

    \subsection{Proof of Lemma \ref{lem_existence_splitting}}
    \begin{proof}[Proof of Lemma \ref{lem_existence_splitting}]
        Define 
        \begin{equation}\label{sigma_lower_bar}
            \underline{\sigma}^{\lambda} = \frac{1}{1-\lambda} \sigma - \frac{\lambda}{1-\lambda}\hat{\sigma}
        \end{equation}
        where $\lambda \in (0,1)$. Observe that if $\underline{\sigma}^{\lambda}$ is a well-defined experiment, then $\lambda \hat{\sigma}  + (1-\lambda)\underline{\sigma}^{\lambda} = \sigma$, and $u_{s}(\underline{\sigma}^{\lambda}, \tau^{*}) < u_{s}(\sigma, \tau^{*})$, $u_{r}(\underline{\sigma}^{\lambda}, \tau^{*}) < u_{r}(\sigma, \tau^{*})$, so that $(\hat{\sigma}, \underline{\sigma}^{\lambda}, \lambda)$ is a Pareto-ranked splitting of $\sigma$. 
    
        It remains to show that there exists $\lambda \in (0,1)$ such that $\underline{\sigma}^{\lambda}$ is indeed an experiment. In other words, for each $\omega$, $\underline{\sigma}^{\lambda}(\cdot|\omega)$ must be a probability distribution over actions. 
        
        If $|\text{supp}( \sigma(\cdot|\omega))| = 1$, then  $\text{supp}(\hat{\sigma}(\cdot|\omega)) \subseteq \text{supp}(\sigma(\cdot|\omega))$ implies $\text{supp}(\hat{\sigma}(\cdot|\omega)) = \text{supp}(\sigma(\cdot|\omega))$. It follows that $\underline{\sigma}^{\lambda}(\cdot|\omega) = \sigma(\cdot|\omega)$ for all $\lambda \in (0,1)$, and is thus a distribution over actions.
    
        If $|\text{supp}(\sigma(\cdot|\omega))| > 1$, embed $\sigma(\cdot|\omega)$ into the Euclidean space $\R^{|\text{supp}(\sigma(\cdot|\omega))|}$ and notice that $\sigma(\cdot|\omega)$ is in the relative interior of the probability simplex $\Delta(\text{supp}(\sigma(\cdot|\omega)))$. Thus there exists $\epsilon_{\omega} > 0$ such that for all $x \in \R^{|\text{supp}(\sigma(\cdot|\omega))|}$ with $\sum_{i}x_{i} = 1$, if $\Vert x - \sigma(\cdot|\omega) \Vert < \epsilon_{\omega}$, then $x \in \Delta(\text{supp}(\sigma(\cdot|\omega)))$. Since $\text{supp}(\hat{\sigma}(\cdot|\omega)) \subseteq \text{supp}(\sigma(\cdot|\omega))$, one has $\hat{\sigma}(\cdot|\omega) \in \Delta(\text{supp}(\sigma(\cdot|\omega)))$ as well. Then, for all $\lambda \in (0,1)$, $\underline{\sigma}^{\lambda}(\cdot|\omega) \in \R^{|\text{supp}(\sigma(\cdot|\omega))|}$ and $\sum_{a} \underline{\sigma}^{\lambda}(a|\omega) = 1$. Moreover, there exists $\lambda_{\omega} > 0$ such that for all $\lambda \in (0, \lambda_{\omega})$, $\Vert \underline{\sigma}^{\lambda}(\cdot|\omega) - \sigma(\cdot|\omega) \Vert < \epsilon_{\omega}$, and thus $\underline{\sigma}^{\lambda}(\cdot|\omega) \in \Delta(\text{supp}(\sigma(\cdot|\omega)))$, making it a distribution over actions.
    
        Because $\Omega$ is finite, $\underline{\lambda} (\hat{\sigma}, \sigma) \equiv \min_{\omega: |\text{supp}(\sigma(\cdot|\omega))| > 1} \lambda_{\omega} > 0$. Therefore, for all $\lambda \in (0, \underline{\lambda} (\hat{\sigma}, \sigma))$, $\underline{\sigma}^{\lambda}$ is a well-defined experiment. \end{proof}
    
    \bibliographystyle{econ}
    \bibliography{references.bib}

\end{document}